\newenvironment{definition}[1][Definition]{\begin{trivlist}
\item[\hskip \labelsep {\bfseries #1}]}{\end{trivlist}}
\newtheorem{theorem}{Theorem}[section]
\newtheorem{lemma}[theorem]{Lemma}
\newtheorem{claim}[theorem]{Claim}
\newtheorem{proposition}[theorem]{Proposition}
\newtheorem{corollary}[theorem]{Corollary}
\newenvironment{thm_app}[1]{\noindent\textbf{Theorem~\ref{#1}.}}{\par\addvspace{\baselineskip}}
\title{Truthful Mechanisms for Matching and Clustering in an Ordinal World}
\author{Elliot Anshelevich \qquad Shreyas Sekar  \\ Rensselaer Polytechnic Institute, Troy, NY 12180\\\texttt{eanshel@cs.rpi.edu, sekars@rpi.edu.}}
\begin{document}
\maketitle


\begin{abstract}
We study truthful mechanisms for matching and related problems in a partial information setting, where the agents' true utilities are hidden, and the algorithm only has access to ordinal preference information. Our model is motivated by the fact that in many settings, agents cannot express the numerical values of their utility for different outcomes, but are still able to rank the outcomes in their order of preference. Specifically, we study problems where the ground truth exists in the form of a weighted graph of agent utilities, but the algorithm can only elicit the agents' private information in the form of a preference ordering for each agent induced by the underlying weights. Against this backdrop, we design truthful algorithms to approximate the true optimum solution with respect to the hidden weights. Our techniques yield universally truthful algorithms for a number of graph problems: a $1.76$-approximation algorithm for Max-Weight Matching, $2$-approximation algorithm for Max $k$-matching, a $6$-approximation algorithm for Densest $k$-subgraph, and a $2$-approximation algorithm for Max Traveling Salesman as long as the hidden weights constitute a metric. We also provide improved approximation algorithms for such problems when the agents are not able to lie about their preferences. Our results are the first non-trivial truthful approximation algorithms for these problems, and indicate that in many situations, we can design robust algorithms even when the agents may lie and only provide ordinal information instead of precise utilities.
\end{abstract}

\section{Introduction}
In recent years, the field of algorithm design has been marked by a steady shift towards newer paradigms that take into the account the behavioral aspects and communication bottlenecks pertaining to self-interested agents. In contrast to traditional algorithms that are assumed to have complete information regarding the inputs, mechanisms that interact with autonomous individuals commonly assume that the input to the algorithm is controlled by the agents themselves.  In this context, a natural constraint that governs the process by which the algorithm elicits inputs from these agents is  \emph{truthfulness}: agents cannot improve upon the resulting outcome by misreporting the inputs. Another constraint that has recently gained traction in optimization problems on weighted graphs (where the agents correspond to the nodes) is that of \emph{ordinality}: here, each agent can only submit a preference list of their neighbors ranked in the order of the edge weights. The need for algorithms that are both truthful and ordinal arises in a number of important settings; however, it is well known that it is impossible to obtain optimum solutions even when the algorithm is required to satisfy only one of these two constraints.

In this work, we study the design of approximation algorithms for popular graph optimization problems including matching, clustering, and team formation with the goal of understanding the \emph{combined price of truthfulness and ordinality}. To be more specific, we consider the above optimization problems on a weighted graph whose vertices represent the agents, and where the edge weights (that correspond to agent utilities) are private to the agents constituting that edge, and pose the following natural question: ``\emph{How does a computationally efficient, truthful algorithm that only has access to each agent's edge weights in the form of preference rankings perform in comparison to an optimal algorithm that has full knowledge of the weighted graph?}".

\paragraph{\textbf{Truthfulness in an ordinal world}} Mechanisms that are either truthful or ordinal have received extensive attention across the spectrum of optimization problems. However, non-trivial algorithms that satisfy both of these considerations exist only for very specific settings~\cite{feldman16, amanatidisBM16}. For instance, the price of ordinality (also referred to as \emph{distortion}) is well understood for a number of applications such as voting~\cite{anshelevichBP15,boutilierCHLPS15}, matching~\cite{filosRatsikasF014,anshelevichS16}, facility location~\cite{feldman16}, and subset selection~\cite{anshelevichS16,regret}. The common thread in all of these settings where the (input) information is often held by the users is that it may be impossible or prohibitively expensive for the agents to express their full utilities to the mechanism; the same agents may incur a smaller overhead if they communicate preference lists over the other users or candidates in the system. Our main contention in this paper is that in exactly the same types of settings, it is reasonable to expect strategic agents to lie about their preferences if it improves their resulting utilities. Motivated by this, we study ordinal algorithms that are also truthful. Even though such mechanisms are clearly less powerful than their `ordinal but not necessarily truthful' counterparts, our high level-level contribution is that for several well-studied graph maximization problems, one can obtain solutions that are only a constant factor away from the (social welfare of the) optimum, omniscient solution.

\paragraph{Model and Problem Statements}
The high-level model in this paper is the same as the one in \cite{anshelevichS16}, with the addition of truthfulness as a constraint. The common setting for all the problems studied in this work is an undirected, complete weighted graph $G$ whose nodes are the set of self-interested agents $\mathcal{N}$ with $|\mathcal{N}|=N$. We use $w(x,y)$ to denote the weight of the edge $(x,y)$ in the graph for $x,y \in \mathcal{N}$. All of the optimization problems studied in this work involve selecting a subset of edges from $G$ that obey some condition, with the objective of maximizing the weight of the edges chosen.
\begin{description}

\item [Max $k$-Matching] Compute the maximum weight matching consisting of exactly $k$ edges. We refer to the $k=\frac{N}{2}$ case as the \emph{Weighted Perfect Matching} problem.

\item [$k$-Sum Clustering] Given an integer $k$, partition the nodes into $k$ disjoint sets $(S_1, \ldots, S_k)$ of equal size in order to maximize $\sum_{i=1}^k \sum_{x,y \in S_i}w(x,y)$. (It is assumed that $N$ is divisible by $k$). When $k=N/2$, $k$-sum clustering reduces to the weighted perfect matching problem.

\item [Densest $k$-subgraph] Given an integer $k$, compute a set $S \subseteq \mathcal{N}$ of size $k$ to maximize the weight of the edges inside $S$.

\item [Max TSP] In the maximum traveling salesman problem, the objective is to compute a tour $T$ (cycle that visits each node in $\mathcal{N}$ exactly once) to maximize $\sum_{(x,y) \in T}w(x,y)$.
\end{description}

A crucial but reasonably natural assumption that we make in this work is that the edge weights satisfy the \emph{triangle inequality}, i.e., for $x,y,z \in \mathcal{N}$, $w(x,y) \leq w(x,z) + w(z,y)$. For the specific kind of the problems that we study, the metric structure occurs in a number of well-motivated environments such as: $(i)$ social networks, where the property captures a specific notion of friendship, $(ii)$ Euclidean metrics: each agent is a point in a metric space which denotes her skills or beliefs, and $(iii)$ edit distances: each agent could be represented by a string over a finite alphabet (for e.g., a gene sequence) and the graph weights represent the edit or Levenshtein distances~\cite{troncosoKC07}. The reader is asked to refer to~\cite{anshelevichS16} for additional details on these specific applications and a mathematical treatment of friendship in social networks.

Our framework and problem set models a multitude of interesting applications, and not surprisingly, all of the problems described above (with the metric assumption) have been the subject of a dense body of algorithmic work~\cite{anshelevichS16,feo1990class,hassin1997approximation,kowalik2009deterministic}. In many of these applications, it becomes imperative that the algorithm provide good approximation guarantees even in the absence of precise numerical information regarding the graph weights. For instance, one can imagine partitioning a set of wedding guests to form a table assignment ($k$-sum clustering) or selecting a diverse team of agents in order to tackle a complex task (dense subgraph).

\subsubsection*{Algorithmic Framework} In this work, we are interested in the design of algorithms that are both ordinal and truthful. Suppose that for any one of the above problems, we are given an instance described by a weighted graph; then an algorithm $\mathcal{A}$ for this problem is said to be ordinal if it has access only to a vector of preference orderings induced by the graph weights. That is, the input to this algorithm consists of a set of $N$ preference orderings reported by each of the agents, where the preference list corresponding to agent $i \in \mathcal{I}$ is a ranking over the agents in $\mathcal{N} - \{i\}$ such that $\forall j,k \in \mathcal{N}$, if $i$ prefers $j$ to $k$, then $w(i,j) \geq w(i,k)$.

The algorithm is truthful if no single agent can improve their utility by submitting a preference ordering different from the `true ranking' induced by the graph weights. Here, the utility of each agent $i$ is simply the total weight of the edges incident to $i$ which are chosen. These utilities have a natural interpretation with respect to the problems considered in this work. For instance, for matching problems, an agent's utility corresponds to her affinity or weight to the agent to whom she is matched, and for densest subgraph as well as $k$-sum clustering, the utility is her aggregate weight to the agents in the same team or cluster. Our objective in this paper is to design mechanisms that maximize the overall social welfare, i.e., the sum of the utilities of all the agents. Thus, the goal is to select a maximum-weight set of edges while knowing only ordinal preferences (instead of the true weights $w$), with even the ordinal preferences possibly being misrepresented by the self-interested agents.

Finally, $\mathcal{A}$ is said to be an ordinal $\alpha$-approximation algorithm for $\alpha \geq 1$ if for any given instance along with the graph weights,  the total objective value of the maximum weight solution with respect to the instance weights is at most a factor $\alpha$ times the value of the solution returned by $\mathcal{A}$, when the input corresponds to the preference rankings induced by the weights. In other words, such algorithms produce solutions which are always a factor $\alpha$ away from optimum, {\em without actually knowing} what the weights $w$ are. We conclude by pointing out that despite the extensive body of work on all of the problems described previously, hardly any of the proposed mechanisms satisfy either truthfulness or ordinality (see Related Work for exceptions), motivating the need for a new line of algorithmic thinking.

\subsection*{Our Contributions}

Our main results are summarized in Table \ref{table_results}. All of the non-matching problems that we study are NP-Hard even in the full information setting~\cite{feo1990class,ravi1994heuristic,kosarajuPS94}. Our truthful ordinal algorithms provide constant approximation factors for a variety of problems in this setting, showing that even if only ordinal information is presented to the algorithm, and even if the agents can lie about their preferences, we can still form solutions efficiently with close to optimal utility. Note that as seen in Table \ref{table_results}, in \cite{anshelevichS16} the authors already gave ordinal approximation algorithms for matching problems: those algorithms were not truthful, however, and achieving non-trivial approximation bounds while always giving players incentive to tell the truth requires significant additional work. For example, even the natural, greedy 2-approximation algorithm for Max $k$-matching from \cite{anshelevichS16} is {\em not} truthful.

\begin{table}[tb]
\centering
\begin{tabular}{|l|c|c|}
\hline
\textbf{Problem} &  \multicolumn{2}{|c|}{\textbf{Our Results}}\\

& Truthful Ordinal & Non-Truthful Ordinal\\
\hline
Weighted Perfect Matching  & $1.7638$ & $1.6$ \cite{anshelevichS16}\\
\hline
Max $k$-Matching & $2$ & $2$ \cite{anshelevichS16}\\
\hline
$k$-Sum Clustering& $2$ & $2$ \\
\hline
Densest $k$-Subgraph & $6$ & $(\frac{4}{\beta^2},\beta)$ (*) \\
\hline
Max TSP & $2$ & $1.88$\\
\hline
\end{tabular}
\caption{Approximation factors provided in this paper by both truthful and non-truthful ordinal algorithms. (*) A bicriteria result for Densest $k$-subgraph where the set size is relaxed to $\beta k$ but the approximation factor is improved from $4$ to $\frac{4}{\beta^2}$ for $\beta \geq 1$.
} 
\label{table_results}
\end{table}


In addition to considering truthful mechanisms, we also develop new approximation algorithms for the setting where the agents are not able to lie, and thus the algorithm knows their true preference ordering. By dropping the truthfulness constraint, we are able to obtain better approximation factors for clustering, densest subgraph, and max TSP. The improved results are enabled by more involved algorithmic techniques that invariably sacrifice truthfulness; they establish a clear separation between the performance of an unconstrained ordinal algorithm and one that is required to be truthful. 


\noindent\textbf{Techniques:} Our proof techniques involve carefully stitching together \emph{greedy}, \emph{random}, and \emph{serial dictatorship} based solutions. Understandably, and perhaps unavoidably for ordinal settings, the algorithmic paradigms that form the bedrock for our mechanisms are rather simple. However, beating the guarantees obtained by a naive application of these techniques involves a more intricate understanding of the interplay between the various approaches. For instance, our algorithm for the weighted perfect matching problem involves mixing between two simple $2$-approximation algorithms (greedy, random) to achieve a $1.76$-guarantee: towards this end, we establish new tradeoffs between greedy and random matchings showing that when one is far away from the optimum solution, the other one must provably be close to optimum.



\subsection{Related Work}

Algorithms proposed in the vast matching literature usually belong to one of two classes:  $(i)$ Ordinal algorithms that ignore agent utilities, and focus on (unquantifiable) axiomatic properties such as stability, truthfulness, or other notions of efficiency, and $(ii)$ Optimization algorithms where the numerical utilities are fully specified. Algorithms belonging to the former class usually do not result in good approximations for the hidden optimum utilities, while techniques used in the latter tend to heavily rely on the knowledge of the exact edge weights and are not suitable for this setting. A notable exception to the above dichotomy is the class of optimization problems studying \emph{ordinal measures of efficiency}~\cite{abrahamIKM07,chakrabartyS14,bhalgatCK11a,krysta14}, for example, the average rank of an agent's partner in the matching.  Such settings usually involve the definition of `new utility functions' based on given preferences, and thus are fundamentally different from our model where preexisting cardinal utilities give rise to ordinal preferences.



Broadly speaking, the truthful mechanisms in our work fall under the umbrella of `mechanism design without money'~\cite{amanatidisBM16,filosRatsikas16, dughmiG10, filosRatsikasF014, procacciaT13}, a recent line of work on designing strategyproof mechanisms for settings like ours, where monetary transfers are irrelevant. A majority of the papers in this domain deal with mechanisms that elicit agent utilities, specifically for one-sided matchings, assignments and facility location problems that are somewhat different from the graph problems we are interested in. The notable exceptions are the recent papers on truthful, ordinal mechanisms for one-sided matchings~\cite{filosRatsikasF014,filosRatsikas16} and general allocation problems~\cite{amanatidisBM16}. While~\cite{filosRatsikasF014} looks at normalized agent utilities and shows that no ordinal algorithm can provide an approximation factor better than $\Theta({\sqrt{N}})$,~\cite{filosRatsikas16} considers {\em minimum} cost metric matching under a resource augmentation framework. The main differences between our work and these two papers are (1) we consider two-sided matching instead of one-sided, as well as other clustering problems, as well as non-truthful algorithms with better approximation factors, and (2) we consider maximization objectives in which users attempt to maximize their utility instead of minimize their cost. The latter may seem like a small difference, but it completely changes the nature of these problems, allowing us to create many different truthful mechanisms and achieve {\em constant-factor approximations}. Finally, ~\cite{amanatidisBM16} looks at the problem of allocating goods to buyers in a `fair fashion'. In that paper, the focus is on maximizing a popular non-linear objective known as the \emph{maximin share}, which is incompatible with our objective of social welfare maximization. That said, an interesting direction is to see if our techniques extend to other objectives.




As discussed in the Introduction, this paper improves on several results from \cite{anshelevichS16}. In \cite{anshelevichS16}, the authors focused on the problem of maximum-weight matching for the non-truthful setting, with the main result being an ordinal 1.6-approximation algorithm. In the current paper, we greatly extend the techniques from \cite{anshelevichS16} so that they may be applied to other problems in addition to matching. Moreover, we introduce several new techniques for this setting in order to create {\em truthful} algorithms; such algorithms require a somewhat different approach and make much more sense for many of the settings that we are interested in. Other than \cite{filosRatsikas16}, these are the first known truthful algorithms for matching and clustering with metric utilities.



Our work is similar in motivation to the growing body of research studying settings where the voter preferences are induced by a set of hidden utilities~\cite{anshelevichBP15,boutilierCHLPS15,caragiannisP11,randomized,regret,feldman16}. The voting protocols in these papers are essentially ordinal approximation algorithms, albeit for a very specific problem of selecting the utility-maximizing candidate from a set of alternatives.



\section{Preliminaries}
\label{sec:prelim}

\subsection{Truthful Ordinal Mechanisms}

As mentioned previously, we are interested in designing incentive-compatible mechanisms that elicit ordinal preference information from the users, i.e., mechanisms where agents are incentivized to truthfully report their preferences in order to maximize their utility. We now formally define the notions of truthfulness pertinent to our setting. Throughout the rest of this paper, we will use $P_i$ to represent a true ordinal preference of agent $i$ (i.e., one that is induced by the utilities $u(i,j)$), and $s_i$ to represent the preference ordering that agent $i$ submits to the mechanisms (which will be equal to $P_i$ if $i$ tells the truth).

\begin{definition}{(Truthful Mechanism)}
A deterministic mechanism $\mathcal{M}$ is said to be truthful if for every $i \in \mathcal{N}$, all $\vec{s}_{-i}, s'_i$, we have that $u_i(P_i, \vec{s}_{-i}) \geq u_i(s'_i, \vec{s}_{-i})$, where $u_i$ is the utility guaranteed to agent $i$ by the mechanism.
\end{definition}

\begin{definition}{(Universally Truthful Mechanisms)}
A randomized mechanism is said to be universally truthful if it is a probability distribution over truthful deterministic mechanisms.
\end{definition}
Informally, in a universally truthful mechanism, a user is incentivized to be truthful even when she knows the exact realization of the random variables involved in determining the mechanism.

\begin{definition}{(Truthful in Expectation)}
A randomized mechanism is said to be truthful in expectation if an agent always maximizes her expected utility by truthfully reporting her preference ranking. The expectation is taken over the different outcomes of the mechanism.
\end{definition}

All of our algorithms are {\em universally} truthful, not just in expectation.  The reader is asked to refer to~\cite{dobzinskiNS12} for a useful discussion on the types of randomized mechanisms, and settings where universally truthful mechanisms are strongly preferred as opposed to the mechanisms that only guarantee truthfulness in expectation.



\subsection{Approaches for Designing Truthful Matching Mechanisms}
\label{subsec:highlevelapproach}
As a concrete first step towards designing truthful ordinal mechanisms, we introduce three high-level algorithmic paradigms that will form the backbone of all the results in this work. These paradigms are based on the popular algorithmic notions of \emph{Greedy}, \emph{Serial Dictatorship}, and \emph{Uniformly Random}. For each of these paradigms, we develop approaches towards designing truthful mechanisms for the maximum matching problem. 
In Sections~\ref{sec:truthfulmatching} and~\ref{sec:truthfulother}, we develop more sophisticated truthful mechanisms that build upon the simple paradigms presented here, leading to improved approximation factors.

\subsubsection*{Greedy via Undominated Edges:}
Our first algorithm is the ordinal analogue of the classic greedy matching algorithm, that has been extensively applied across the matching literature. In order to better understand this algorithm, we first define the notion of an \textit{undominated edge}.
\begin{definition}{(Undominated Edge)}
Given a set $E$ of edges, $(x,y) \in E$ is said to be an undominated edge if for all $(x,a)$ and $(y,b)$ in $E$, $w(x,y) \geq w(x,a)$ and $w(x,y) \geq w(y,b)$.
\end{definition}

We make two simple observations here regarding undominated edges based on which we define Algorithm~\ref{alg_greedy}.
\begin{enumerate}
\item Every edge set $E$ has at least one undominated edge. In particular, any maximum weight edge in $E$ is obviously an undominated edge.
\item Given an edge set $E$, one can efficiently find at least one undominated edge {\em using only the ordinal preference information}~\cite{anshelevichS16}.
\end{enumerate}

\begin{algorithm}[htbp]
{$M:= \emptyset$, $T$ is the valid set of edges initialized to the complete graph on $\mathcal{N}$\;}
\While{$T$ is not empty}{
pick an undominated edge $e=(x,y)$ from $T$ and add it to $M$\;
remove all edges containing $x$ or $y$ from $T$; if $|M| = k$, $T = \emptyset$.
}
\caption{Greedy Algorithm for Max $k$-Matching}
\label{alg_greedy}
\end{algorithm}

It is not difficult to see that this algorithm gives a 2-approximation for Max-Weight Perfect Matching, and is truthful for that case. Unfortunately, for Max $k$-Matching with smaller $k$, it is no longer truthful, and thus none of the algorithms that use Greedy as a subroutine (such as the algorithms from \cite{anshelevichS16}) are truthful.

\begin{proposition}
\label{clm_truthfulundom}
Algorithm~\ref{alg_greedy} is truthful for the Max $k$-Matching problem only when $k=\frac{N}{2}$.
\end{proposition}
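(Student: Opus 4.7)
The proposition decomposes into two directions: truthfulness when $k = N/2$, and non-truthfulness whenever $k < N/2$. My plan is to prove the positive direction by induction on $N$, and the negative direction by exhibiting an explicit small counterexample.

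For truthfulness when $k = N/2$, I would proceed by strong induction on $N$, with the trivial base case $N = 2$. In the inductive step, let $(x_1, y_1)$ denote the first undominated edge picked under truthful reporting; by definition, $y_1$ is the true top choice of $x_1$ and vice versa. If agent $i$ is one of these two endpoints, then truthful reporting pairs $i$ with its overall true top, which is the best possible outcome; any deviation can only re-route $i$ to an edge $(i, z)$ with $z \neq y_1$, and since $y_1$ is $i$'s true top we have $w(i, z) \le w(i, y_1)$, so deviating cannot help. If $i$ is not an endpoint of $(x_1, y_1)$, then the undominated status of $(x_1, y_1)$ depends only on the reports of $x_1$ and $y_1$, so this pair is eventually matched under any report of $i$; after removing its endpoints, the algorithm runs on an $(N - 2)$-agent subinstance containing $i$, to which the inductive hypothesis applies.

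For non-truthfulness when $k < N/2$, I would give a concrete small counterexample. Take $N = 4$ and $k = 1$ with edge weights inducing the preference lists $P_1 = (3, 4, 2)$, $P_2 = (1, 4, 3)$, $P_3 = (4, 1, 2)$, $P_4 = (3, 1, 2)$ (consistent with a metric via, for instance, $w(3,4) = 10$, $w(1,3) = 9$, $w(1,4) = 8$, $w(1,2) = 7$, $w(2,4) = 6$, $w(2,3) = 5$). Under truthful reporting the unique undominated edge is $(3, 4)$, so the algorithm outputs $\{(3, 4)\}$ and agent $1$ obtains utility zero. If agent $1$ instead ranks $2$ first, the edge $(1, 2)$ also becomes undominated (since $2$'s true top is $1$), and with any natural tie-breaking rule favoring lex-smaller pairs the algorithm picks $(1, 2)$, yielding strictly positive utility $w(1, 2)$ for agent $1$. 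The construction extends to arbitrary $k < N/2$ by padding with $k - 1$ disjoint pairs of agents forming mutual top choices; those pairs are matched automatically and do not interfere with the $(3, 4)$-versus-$(1, 2)$ decision.

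The main technical obstacle lies in a subcase of the positive direction: when $i$ is not an endpoint of the first truthful undominated pair, yet $i$'s misreport creates a new undominated edge $(i, z)$ with $z$'s true top equal to $i$, so that tie-breaking might select $(i, z)$ before $(x_1, y_1)$. To rule out a profitable deviation here, I would couple the truthful and deviated runs and argue that any such $z$ is still available at the moment $i$ gets matched under truth, since $z$ can only be matched through an undominated edge with $i$ while $i$ remains unmatched. Consequently $i$'s truthful partner $a$, being $i$'s true top among the agents still remaining at that instant, satisfies $w(i, a) \ge w(i, z)$, closing the induction.
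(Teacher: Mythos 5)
Your proposal is correct and covers both directions, but you package the positive direction differently from the paper. The paper fixes the round $j$ at which agent $i$ is matched in the truthful run and proves by induction on $j$ that every edge $e_1,\ldots,e_{j-1}$ chosen before $i$'s turn under truth must also appear in the deviated matching $M'$; from there $i$'s truthful partner is its best remaining option, so lying cannot help. You instead induct on $N$, peeling off the first truthful undominated edge $(x_1,y_1)$ and recursing on the $(N-2)$-agent subinstance, and you handle the awkward subcase (a misreport creating a new undominated edge $(i,z)$ that tie-breaking might pick before $(x_1,y_1)$) by a coupling argument: since $z$'s true top is $i$, the edge $(z,w)$ is dominated for every $w\neq i$ while $i$ is available, so $z$ survives to the round where $i$ is truthfully matched, and $i$'s truthful partner $a$ then satisfies $w(i,a)\ge w(i,z)$. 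This coupling argument is the same underlying observation as the paper's inductive claim, just organized around the agent $i$ rather than the round index; your version arguably separates the ``easy'' and ``hard'' cases more explicitly, though it silently relies on the local-stability fact that the greedy matching on the residual $(N-2)$-agent instance coincides with the residual of the full greedy matching (the paper proves this separately as a property of greedy matchings in the appendix). For the negative direction you improve on the paper: the paper only gestures abstractly at an instance where two unmatched agents $i,j$ exist with tie-breaking favoring $(i,j)$, whereas you supply a concrete $4$-node metric instance with $k=1$ and an explicit padding construction for general $k<N/2$ that I verified is metric-consistent and produces the claimed profitable deviation under a lexicographic tie-break. Both proofs, like the paper's, depend on a specific deterministic tie-breaking rule; this is acceptable since the proposition is read as ``there exist instances and tie-breaking rules under which Algorithm~\ref{alg_greedy} fails to be truthful for $k<N/2$.''
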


\begin{proof}
We need to prove that for any given strategy profile adopted by the other players $\vec{s}_{-i}$, player $i$ maximizes her utility when she is truthful, i.e., if $P_i$ is the true preference ordering of agent $i$ and $s_{-i}$ is any set of preference orderings for the other agents, then $u_i(P_i, \vec{s}_{-i}) \geq u_i(s'_i, \vec{s}_{-i})$ for any $s'_i$. Our proof will proceed via contradiction and will make use of the following fundamental property: \emph{if Algorithm~\ref{alg_greedy} (for some input) matches agent $i$ to $j$ during some iteration, then both $i$ and $j$ prefer each other to every other agent that is unmatched during the same round}.

We introduce some notation: suppose that $M$ denotes the matching output by Algorithm~\ref{alg_greedy} for input $(P_i, \vec{s}_{-i})$, and for every $x \in \mathcal{N}$, $m(x)$ is the agent to whom $x$ is matched to under $M$. Let $e_j$ be the edge added to the matching $M$ in round $j$ of Algorithm \ref{alg_greedy}, denote the round in which $i$ is matched to $m(i)$ as round $k$. Assume to the contrary that for input $(s'_i, \vec{s}_{-i})$, $i$ is matched to an agent she prefers more than $m(i)$. Let the altered matching be referred to as $M'$, and let $m'(x)$ be the agent who $x$ is matched with in $M'$.

We begin by proving the following claim: {\em For each $j < k$, we have that $e_j\in M'$}. In other words, all the edges which are included into $M$ before $i$ is matched by Algorithm \ref{alg_greedy} must appear in both matchings no matter what $i$ does. Once we prove this claim, we are done, since $e_k$ is the highest-weight edge from $i$ to any node not in $e_1,\ldots,e_{k-1}$, so $i$ maximizes its utility by telling the truth and receiving utility equal to the weight of $e_k$.

To prove the claim above, we proceed by induction. Note that if $k=1$, then $i$ is trivially truthful, since $m(i)$ is its top choice in the entire graph. Now suppose that we have shown the claim for edges $e_1,\ldots,e_{j-1}$. Let $e_j=(x,y)$, and without loss of generality suppose that $x$ is matched in our algorithm constructing $M'$ before $y$. At the time that $x$ is matched with $m'(x)$, it must be that $m'(x)$ is the top choice of $x$ from all available nodes. But, by the definition of our algorithm, $y$ is the top choice of $x$ that is not contained in $e_1,\ldots,e_{j-1}$. Since $m'(x)$ is not contained in $e_1,\ldots,e_{j-1}$ due to our inductive hypothesis, this means that $x$ prefers $y$ over $m'(x)$, and since $y$ is not matched yet, this means that $x$ and $y$ will become matched together in $M'$. Thus, $e_j$ is in $M'$ as well. This completes the proof of our claim.

To see why this mechanism is not truthful for smaller $k$, notice that agents which would not be matched in the first $k$ steps have incentive to lie and form undominated edges where none exist, all in order to be matched earlier. Assume that the algorithm uses a deterministic tie-breaking rule to choose between multiple undominated edges in each round. While this does not really alter the final output for the perfect matching problem, the tie-breaking rule may lead to certain undominated edges not getting selected for the final matching.

Fix $k$ and suppose that when the input preferences are truthful, agents $i$, $j$ are not present in the matching $M$ returned by Algorithm~\ref{alg_greedy}. Moreover, suppose that (1) $j$'s first preference is $i$, and (2) the deterministic tie-breaking always prefers $(i,j)$ over other edges (one can design preferences so that agents favoured by the tie-breaking are not selected for truthful inputs).

Clearly $i$ has incentive to alter its preferences to identify $j$ as its most preferred node and receive a utility of $w(i,j)$, which is more than its previous utility of zero.
\end{proof}

Can we use a similar approach to design algorithms for the other problems that we are interested in? For $k$-sum clustering and Densest $k$-subgraph, one can follow the approach taken in~\cite{hassin1997approximation,anshelevichS16}, and use the above matching as an intermediate to compute $4$-approximations for the above problems. For Max TSP, we can directly leverage the above algorithm  by maintaining $M$ as a (forest of) path(s) instead of a matching in order to obtain a $2$-approximate Hamiltonian tour. Unfortunately, as we show in the Appendix, these approaches {\bf do not} lead to truthful algorithms at all.

\subsubsection*{Serial Dictatorship}
Another popular approach to compute incentive compatible matchings (albeit usually for one-sided matchings~\cite{filosRatsikas16, filosRatsikasF014}) is serial dictatorship, which we formally define below for our two-sided matching setting.

\begin{proposition}
Algorithm~\ref{alg_sd} is universally truthful for the Max $k$-Matching problem for all $k$.
\end{proposition}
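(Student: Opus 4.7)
The plan is to prove that for any fixed realization of the randomness (i.e., any fixed order in which agents are designated as dictators), truthful reporting is a weakly dominant strategy for every agent. Since a universally truthful mechanism is a distribution over deterministic truthful mechanisms, this suffices.

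First I would pin down the structure of the algorithm as I understand it from context: a random order $\pi$ over $\mathcal{N}$ is drawn once, and the agents are processed one at a time in that order. When it is agent $j$'s turn, if $j$ is still unmatched and fewer than $k$ edges have been added, $j$ is made a \emph{dictator} and matched to the agent they rank highest among those still unmatched; the algorithm halts when $k$ edges have been placed. Fix any agent $i$, any profile $\vec{s}_{-i}$ of reports by the others, and any realization of $\pi$. I want to show $u_i(P_i, \vec{s}_{-i}) \geq u_i(s'_i, \vec{s}_{-i})$ for every $s'_i$.

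The key structural lemma I would prove by induction on the dictatorship turns is: the sequence of matches $(d_1, m(d_1)), (d_2, m(d_2)), \dots$ made strictly before $i$'s first involvement (either as a dictator or as someone chosen by a previous dictator) depends only on $\pi$ and on $\vec{s}_{-i}$, and not on $s_i$. The reason is simple: in each earlier round the dictator $d_t$ deterministically selects their most preferred currently-unmatched neighbor according to $s_{d_t}$, so $d_t$'s pick is a function of $s_{d_t}$ and the current unmatched pool. The unmatched pool in turn depends only on prior picks, which by induction depend only on $\pi$ and $\vec{s}_{-i}$. In particular, whether $i$ is selected by some earlier dictator $d_t$ depends only on whether $i$ happens to be $d_t$'s most preferred remaining neighbor, which is independent of $s_i$.

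From this I split into the standard three cases. (1) If $i$ is chosen by some earlier dictator $d_t$, then $i$'s partner $d_t$ and hence $i$'s utility is independent of $s_i$; truth is (weakly) optimal. (2) If the algorithm terminates before reaching $i$ and without $i$ being selected, then $u_i = 0$ regardless of $s_i$. (3) If $i$ reaches the front of the queue still unmatched (with the budget of $k$ edges not yet exhausted), then by the lemma above the set $U$ of remaining unmatched agents when it is $i$'s turn does not depend on $s_i$; $i$ is then matched to the top-ranked element of $U$ under $s_i$. Reporting truthfully yields $i$'s most preferred partner in $U$ under the true utilities $w(i,\cdot)$, which maximizes $u_i$; any other report can only yield a partner in $U$ that $i$ weakly ranks lower. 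In all cases truthful reporting is optimal, so each deterministic mechanism in the support is truthful, giving universal truthfulness.

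The only genuine subtlety — and the one worth spelling out carefully in the proof — is that in a two-sided setting $i$'s report could in principle matter even when $i$ is not a dictator, because $i$ might be ``chosen'' by someone else. The inductive lemma kills exactly this concern by observing that a dictator's choice depends solely on the dictator's own ranking and the unmatched pool, never on the preferences of the agent being chosen. Beyond this, the argument is straightforward bookkeeping, so I expect no serious obstacle.
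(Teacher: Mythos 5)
Your proof is correct. Note that the paper states this proposition without providing a proof (it is one of the preliminary propositions left implicit), so there is no paper argument to compare against; your argument is the standard and right one. The inductive lemma — that the partial matching built before agent $i$ is first touched is a function of $\pi$ and $\vec{s}_{-i}$ alone — is precisely the content that needs to be established, and you correctly identify and resolve the one subtlety specific to two-sided serial dictatorship: that a dictator's pick is governed by the dictator's own ranking, not by the ranking of the agent being picked, so $i$'s report can never influence whether $i$ is claimed by an earlier dictator. One small point worth making explicit in the write-up: the claim holds only under the interpretation that the rule ``pick an available agent $x$ arbitrarily from $T$'' does not depend on the reported preferences (e.g., it is fixed by index or randomized in advance). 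Your reformulation with a random order $\pi$ fixes this interpretation correctly; if the dictator-selection rule were itself allowed to depend on the reports, an agent could manipulate to be selected earlier and truthfulness could fail.
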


\begin{algorithm}[htbp]
{$M:= \emptyset$, $T$ is the set of available agents initialized to $\mathcal{N}$}\;
\While{$T$ is not empty}{
pick an available agent $x$ arbitrarily from $T$\;
let $y$ denote $x$'s most preferred agent in $T - \{x\}$; add $(x,y)$ to $M$\;
remove all edges containing $x$ or $y$ from $T$; if $|M| = k$, $T = \emptyset$.
}
\caption{Serial Dictatorship for Max $k$-Matching}
\label{alg_sd}
\end{algorithm}

Serial dictatorship is among the most prominent of algorithms to feature in this work: our primary approximation algorithms for Max $k$-matching and Max TSP involve randomized versions of serial dictatorship. 


\textbf{Randomness}
A much simpler approach that is completely oblivious to the input preferences involves selecting a solution uniformly at random. Such an algorithm (described in Algorithm~\ref{alg_random}) is obviously truthful. Many of the techniques in this paper rely on carefully combining these three types of algorithms in order to produce good approximation factors while retaining truthfulness.

\begin{algorithm}[htbp]
{$M:= \emptyset$, $T$ is the valid set of edges initialized to the complete graph on $\mathcal{N}$\;}
\While{$T$ is not empty}{
pick an edge $e=(x,y)$ from $T$ uniformly at random and add it to $M$\;
remove all edges containing $x$ or $y$ from $T$; if $|M| = k$, $T = \emptyset$.
}
\caption{Random Algorithm for Max $k$-matching}
\label{alg_random}
\end{algorithm}

\begin{proposition}
\label{prop_randomtruth}
Algorithm~\ref{alg_random} is universally truthful for the Max $k$-matching problem for all $k$.
\end{proposition}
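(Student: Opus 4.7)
The plan is essentially to observe that Algorithm~\ref{alg_random} never consults the reported preferences $s_i$ at any step. The edges are drawn uniformly at random from the set $T$ of remaining valid edges, and the update of $T$ (removing all edges incident to a newly matched endpoint, and terminating once $|M|=k$) depends only on which edges have been chosen, not on any agent's reported ranking. So the plan is to make this preference-independence precise and then invoke the definition of universal truthfulness.

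Concretely, I would first fix the random tape, i.e., condition on a realization $\omega$ of all the uniform choices made by the algorithm. For that fixed $\omega$, the algorithm reduces to a deterministic procedure $\mathcal{M}_\omega$ whose output $M(\omega)$ is a function solely of $\mathcal{N}$, $k$, and $\omega$. Thus for any agent $i$, any $\vec{s}_{-i}$, and any $s_i, s_i'$, we have $M(\omega)(s_i,\vec{s}_{-i}) = M(\omega)(s_i',\vec{s}_{-i})$, and in particular the edge (if any) incident to $i$ in $M(\omega)$ is the same regardless of what $i$ reports. Since agent $i$'s utility is determined entirely by the edge incident to $i$ in the returned matching, we get $u_i(P_i,\vec{s}_{-i}) = u_i(s_i',\vec{s}_{-i})$ for every realization $\omega$, and in particular the weak inequality required by the definition of a truthful deterministic mechanism holds. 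Hence each $\mathcal{M}_\omega$ is truthful.

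Finally, I would note that Algorithm~\ref{alg_random} is by construction the probability distribution (induced by the uniform choices at each iteration) over the deterministic mechanisms $\{\mathcal{M}_\omega\}_\omega$, which matches exactly the definition of a universally truthful mechanism given earlier in the excerpt. There is no real obstacle here: the algorithm is oblivious to reports, so truthfulness is immediate; the only thing to be careful about is to phrase it in terms of realizations of the randomness, so as to get \emph{universal} truthfulness rather than the weaker notion of truthfulness in expectation.
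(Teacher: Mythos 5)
Your proof is correct and is essentially the same argument the paper makes, except the paper only asserts it informally ("such an algorithm... is obviously truthful" because it is "completely oblivious to the input preferences") and gives no formal proof of Proposition~\ref{prop_randomtruth}. Your write-up -- conditioning on the random tape to get a deterministic mechanism that ignores reports entirely, then observing that the distribution over these is by definition universally truthful -- is a faithful and appropriately careful formalization of exactly that intuition.
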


\section{Truthful Mechanisms for Matching}
\label{sec:truthfulmatching}

\subsection{Weighted Perfect Matching}
So far, we have looked at two simply approaches for designing truthful mechanisms (Greedy and Random) for the weighted perfect matching problem, both of which yield $2$-approximations~\cite{anshelevichS16} to the optimum matching. Can we do any better? In~\cite{anshelevichS16}, the authors use a complex interleaving of greedy and random approaches to extract a {\em non-truthful} $1.6$-approximation algorithm. In this paper, we instead present a simpler algorithm and rather surprising result: a simple random combination of Algorithms~\ref{alg_greedy} and~\ref{alg_random} results in a $1.76$-approximation to the optimum matching. The main insight driving this result is the fact that the random and greedy approaches are in some senses complementary to each other, i.e., on instances where the approximation guarantee for the greedy algorithm is close to $2$, the random algorithm performs much better.
\begin{theorem}\label{thm:truthfulMatching}
The following algorihm is a universally truthful mechanism for the weighted perfect matching problem that obtains a $1.7638$-approximation to the optimum matching.

\noindent\textbf{Greedy-Random Mix Algorithm for Weighted Perfect Matching}: With probability $\frac{3}{7}$, return the output of Algorithm~\ref{alg_greedy} for $k=\frac{N}{2}$ and with probability $\frac{4}{7}$, return the output of Algorithm~\ref{alg_random} for $k=\frac{N}{2}$.
\end{theorem}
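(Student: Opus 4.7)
The plan has two parts: truthfulness and the approximation bound.

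Truthfulness is immediate. Algorithm~\ref{alg_greedy} at $k = N/2$ is truthful by Proposition~\ref{clm_truthfulundom}, and Algorithm~\ref{alg_random} is universally truthful by Proposition~\ref{prop_randomtruth}. The proposed mechanism is a probability distribution over these two. Unfolding both levels of randomness, it is a probability distribution over truthful deterministic mechanisms, which is precisely the definition of universally truthful. Crucially, this argument does not require anything beyond the fact that each component is universally truthful individually; no cross-realization comparison is needed, since the agent's incentives are preserved in every realization of the outer coin.

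For the approximation, let $G$ and $R$ denote the expected weights returned by Algorithms~\ref{alg_greedy} and~\ref{alg_random} on a fixed instance. Both individually satisfy $G, R \geq OPT/2$. The trivial combination $M := (3/7)\,G + (4/7)\,R \geq OPT/2$ gives only a $2$-approximation, so the heart of the proof must exploit \emph{complementarity}: the structural conditions that make greedy tight at the $2$-approximation bound should force random to do strictly better than $OPT/2$, and conversely. Greedy is tight only when every selected edge $e_i$ of weight $g_i$ ``kills'' two OPT edges of weight nearly $g_i$ apiece; but then metric inequalities between the greedy edges, their endpoints, and their displaced OPT partners inflate the total edge weight $\sum_e w(e)$, which directly boosts $R = \tfrac{2}{N-1}\sum_e w(e)$ above $OPT/2$. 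Conversely, random is tight only when $\sum_e w(e)$ barely meets its metric lower bound of roughly $(N-2)\,OPT/2$, forcing non-OPT edges to be small, which in turn lets greedy detect and select the OPT edges themselves and drive $G$ toward $OPT$.

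Concretely, I would aim to prove a tradeoff lemma of the form ``if $G = \alpha \cdot OPT$ then $R \geq h(\alpha)\cdot OPT$'' for an explicit decreasing function $h$ with $h(1/2) > 1/2$ and $h(1) \geq 1/2$. Then $M/OPT \geq (3/7)\alpha + (4/7)\,h(\alpha)$, and minimizing the right-hand side over $\alpha \in [1/2,1]$ would yield the claimed bound $1/1.7638$; the specific weights $3/7$ and $4/7$ are precisely those that equalize the worst-case ratio at the two extremes of the tradeoff curve, so no other mix can do better with this pair of algorithms. The main obstacle I expect is proving this quantitative tradeoff: greedy's tightness is a \emph{per-iteration}, local property (charging missed OPT edges to the greedy edges that kill them), whereas random's slack requires a \emph{global} lower bound on $\sum_e w(e)$. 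Bridging these two disparate kinds of bounds is the technical crux, and I expect the argument to proceed by producing, for each near-tight greedy iteration, additional metric inequalities involving spectator vertices that inject slack into the global sum proportional to $\alpha - 1/2$. The numerical constant $1.7638$ would then emerge as the optimum of the resulting constrained optimization.
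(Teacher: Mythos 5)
Your truthfulness argument is correct and complete: a coin-flip over two universally truthful mechanisms is universally truthful, no cross-realization reasoning needed. Your high-level plan for the approximation bound — fix greedy's performance as a parameter, prove a quantitative tradeoff showing random must compensate, then optimize the mix — is also the same framing the paper uses. But you explicitly leave the tradeoff lemma as something you ``would aim to prove,'' and that lemma \emph{is} the entire proof; without it you have nothing beyond the trivial 2-approximation.

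Moreover, the mechanism you sketch for proving the tradeoff does not match what actually works. You propose a per-iteration charging argument (``each greedy edge kills two OPT edges of nearly its own weight, and metric inequalities with spectator vertices inject slack into $\sum_e w(e)$''), but the paper's key structural ingredient is quite different and you do not guess it: Lemma~\ref{lem_greedy_tophalf} shows that the \emph{top half by weight} of the greedy matching alone already achieves $OPT/2$. This makes the natural parameter not $\alpha = w(GR)/OPT$ but $x = w(GR(B))/OPT$, the weight of greedy's \emph{bottom} half, which greedy gets ``for free'' above $OPT/2$. The random matching is then lower-bounded by writing $E[w(RD)] = \frac{1}{N}(w(T)+w(T,B)+w(B))$ for the node partition $T = \mathcal{N}(GR_{1/2})$, $B = \mathcal{N}\setminus T$, and using upper bounds on $w(B)$ (and on $w(B_1,B_2)$, $w(B_2)$ for a further split of $B$) in terms of the greedy edge weights in $GR(B)$; this is a global induced-weight argument, not a per-edge charging of killed OPT edges, and it requires a case analysis on $\alpha = w(GR(B_1))/w(GR(B))$ and on whether $x \le 1/8$. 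Finally, your assertion that $3/7,4/7$ ``equalize the worst-case ratio at the two extremes of the tradeoff curve'' is incorrect: in the paper's optimization the worst case occurs at the interior point $x = \tfrac12 - \tfrac1{4p} = \tfrac1{16}$, not at the endpoints, and the $3/7$--$4/7$ split is chosen to balance the two case bounds there.
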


\begin{proof}\subsubsection*{Notation}
Our proof mainly involves non-trivial lower bounds on the performance of the random matching which highlights its complementary nature to the greedy matching. As usual, we begin with notation that allows us to divide the greedy matching into several parts for easy analysis.

\textbf{Dividing Greedy into Two Halves} Suppose that $GR$ is the output of the greedy algorithm for the given instance, and $RD$ is the random matching for the same instance. We abuse notation and define $T:= \mathcal{N}(GR_\frac{1}{2})$, and $GR(T) = GR_\frac{1}{2}$. Recall that $GR_\frac{1}{2}$ comprises of the top (max-weight) fifty percent of the edges in $GR$. We will some times refer to $T$ as the top half and the rest of the nodes as the bottom half. Next, define $B := \mathcal{N} \setminus T$, and let $GR(B)$ denote $GR \setminus GR(T)$. Observe that both $T$ and $B$ consist of exactly $\frac{N}{2}$ nodes. Finally, suppose that $w(GR(B)) = xOPT$.

\noindent\textbf{Sub-Dividing B} We will now go one step further and divide the bottom half $B$ into two sub-parts, $B_1$ and $B_2$, which will aid us in our analysis of the random matching. Define $GR(B_1)$ to be the top $\frac{xN}{2}$ edges from $GR(B)$, i.e., $GR(B_1) := \{GR(B)\}_{2x}$ since $GR(B)$ consists only of $\frac{N}{4}$ edges. Finally, $GR(B_2)$ is the final part of the greedy matching, i.e., $GR(B_2) = GR(B) \setminus GR(B_1)$. As with our previous definitions, $B_1$ and $B_2$ will represent the nodes contained in $GR(B_1)$ and $GR(B_2)$ respectively.

We begin by highlighting some easy observations in order to get familiar with the various sub-matchings defined above.

\begin{proposition}
\label{prop_simpleproprdgrper}
\begin{enumerate}
\item $w(GR(T)) \geq \frac{1}{2}OPT$.

\item $B_1$ consists of $xN$ nodes and $B_2$ consists of $(\frac{1}{2} - x)N$ nodes.

\item No edge in $GR(B_2)$ can have a weight larger than $\frac{2GR(B_1)}{xN}$.
\end{enumerate}

\end{proposition}
The first part of the Proposition comes from Lemma~\ref{lem_greedy_tophalf}. The last part is simply because this is the average of edge weights in $GR(B_1)$.

The rest of the proof involves proving new lower bounds on the weight of the random matching as a function of $x$. Specifically, we will fix the performance of the greedy matching (fix $x$) and then show that when $x$ is small, the random matching's weight is close to $\frac{5}{8}OPT$. The reminder of the proof is just basic algebra to bring out the worst-case performance. Let us first formally state our trivial lower bound on the greedy matching.

\begin{proposition}
\label{prop_grlowerboundfinal}
The weight of the greedy matching is given by:

$$w(GR) = w(GR(T)) + w(GR(B)) \geq \frac{1}{2}OPT + xOPT.$$
\end{proposition}

Before developing the machinery towards our lower bound for the random matching, we will first state our end-goal, which we will prove later. Essentially, our main claim provides an unconditional lower bound for the performance of the random matching as a well as a (conditional) bound for small $x$, which will serve as the worst-case.

\begin{claim}
\label{clm_mainlbrandommatch}
The weight of the random matching is always at least

$$E[w(RD)] \geq \frac{5}{8}OPT - x(1-\frac{3}{2}x)OPT.$$

Moreover, when $x \leq \frac{1}{8}$, the following is a tighter lower bound for the random matching

$$E[w(RD)] \geq \frac{5}{8}OPT - x(1-2x)OPT.$$
\end{claim}

\subsection*{Tackling the Random Matching for Different Cases}
We will now prove three lemmas that will act as the main bridges to showing Claim~\ref{clm_mainlbrandommatch}. These lemmas provide insight on the random matching for different cases depending on the relative weights of $GR(B_1)$ and $GR(B_2)$. First, define $\alpha := \frac{w(GR(B_1))}{w(GR(B))}$. We begin by studying the case when $\alpha$ is smaller than $\frac{1}{2}$, i.e., the weights of the edges in $GR(B)$ are somewhat evenly distributed across $GR(B_1)$ and $GR(B_2)$. Moreover, since every edge in $GR(B_1)$ is larger than every edge in $GR(B_2)$, the following is an easy lower bound on $\alpha$.

\begin{lemma}
For any given instance where $w(GR(B)) = xOPT$, we have that $\alpha \geq 2x$.
\end{lemma}
\begin{proof}
$GR(B_1)$ consists of $\frac{xN}{2}$ edges whereas $GR(B)$ consists of $\frac{N}{4}$ edges. $\hfill\qed$
\end{proof}

Therefore, the above lemma indicates that when $\alpha \leq \frac{1}{2}$, $x$ canot be larger than $\frac{1}{4}.$

Now we give the first of the three lemmas.

\begin{lemma}
\label{lem_rd_lowerbound1}
Suppose that for a given instance with $w(GR(B)) = xOPT$, $w(GR(B_1)) = \alpha xOPT$ with $\alpha \leq \frac{1}{2}$. Then, for $x \leq \frac{1}{4}$, we have that

$$w(RD) \geq \frac{5}{8}OPT - x(1-2x)OPT.$$

\end{lemma}
\begin{proof}
From Lemma~\ref{lem_rand_lowerbound2}, we get the following generic lower bound for $RD$ since $|B| = \frac{N}{2}$,

$$w(RD) \geq \frac{1}{2}OPT + \frac{1}{N}\{w(T)  - w(B)\}.$$

Moreover, applying Lemma~\ref{lem_matchingupper} to $T$, we also get that $\frac{w(T)}{N} \geq \frac{1}{8}OPT$ since there exists a matching ($GR(T)$) solely on the nodes inside $T$ having a weight of $\frac{OPT}{2}$. Therefore, it suffices to prove an upper bound on $\frac{w(B)}{N}$. Recall that $B$ consists of exactly $n=\frac{N}{2}$ nodes, $GR(B) = xOPT$, and $w(\{GR(B)\}_{2x}) = \alpha xOPT \leq \frac{1}{2}xOPT$. So, directly applying Lemma~\ref{lem_grtotaldist2parts}, we get that,

$$\frac{1}{n}w(B) = \frac{2}{N}w(B) \leq 2w(GR(B))(1 - 2x).$$

So, $\frac{1}{N}w(B) \leq xOPT(1-2x)$. Putting this inside the generic lower bound for $RD$, we complete the proof of this lemma. $\hfill\qed$
\end{proof}

We now have a bound for the case when $\alpha \leq \frac{1}{2}$. Next, we provide a universal bound for the other case $(\alpha \geq \frac{1}{2})$. Observe that in this case, $w(GR(B_1)) \geq w(GR(B_2))$. We leverage the low weight of $GR(B_2)$ to prove the following bound.

\begin{lemma}
\label{lem_rd_lowerbound2}
Suppose that for a given instance with $w(GR(B)) = xOPT$, $w(GR(B_1)) = \alpha xOPT$ with $\alpha \geq \frac{1}{2}$. Then, we have that

$$w(RD) \geq \frac{5}{8}OPT - x(1-\frac{3}{2}x)OPT.$$

\end{lemma}
\begin{proof}
Once again, we begin with a generic lower bound on $RD$ (Corollary~\ref{corr_random_lower3}) that depends on partitioning the node set $\mathcal{N}$ into $3$ parts ($T, B_1, B_2)$. Notice that $\frac{|B_2|}{N} = \frac{1}{2} - x$.

$$w(RD) \geq \frac{1}{2}OPT - \frac{x}{2}OPT + \frac{1}{N}\{w(T)  + \frac{1}{2}(w(T,B_1) - w(B_1,B_2)) - w(B_2)\}.$$

As with Lemma~\ref{lem_rd_lowerbound1}, we know that $\frac{w(T)}{N} \geq \frac{1}{8}OPT$. Now for every edge $e$ in $GR(T)$, note that the triangle inequality implies that for any node in $B_1$, going to that node from an endpoint of $e$ and coming back to the other endpoint of $e$ is larger than the weight of $e$. Summing these up, we get that $w(T,B_1)\geq |B_1|GR(T)$. Using the fact that $GR(T) \geq \frac{OPT}{2}$ gives a slightly simplified version.

$$w(RD) \geq \frac{5}{8}OPT - \frac{x}{4}OPT - \frac{1}{N} \{\frac{1}{2}w(B_1,B_2) + w(B_2)\}.$$

So now, it suffices to prove a lower bound on the negative quantities. From Lemma~\ref{lem_gr_crossupperbound}, we get that $w(B_1, B_2) \leq 2w(GR(B_1))|B_2| = 2\alpha xOPT(1/2 - x)N$.

Next, we have to provide an upper bound on $w(B_2)$ in order to complete the proof. We know as per our definitions of $GR(B_1)$, $GR(B_2)$ that each edge in the latter is no larger than the smallest edge in the former. Moreover, from Proposition~\ref{prop_simpleproprdgrper}, we know that $w^*= 2GR(B_1)/xN = \frac{2\alpha OPT}{N}$ is an upper bound on the weight of every edge inside $GR(B_2)$. So, we can directly turn to Lemma~\ref{lem_grupperequalweight} applied specifically to $GR(B_2)$ to obtain

$$w(B_2) \leq 2w(GR(B_2))(\frac{N}{2} - xN - t),$$

where $t = \frac{w(GR(B_2))}{w^*} = \frac{N(1-\alpha)xOPT}{2 \alpha OPT} = \frac{N(1-\alpha)x}{2\alpha}$. In conclusion, we have that

\begin{equation}\label{eq1}
\frac{1}{N}w(B_2) \leq (1-\alpha)xOPT(1 - 2x  - \frac{1-\alpha}{\alpha}x) \leq (1-\alpha)xOPT(1-2x).
\end{equation}

We are now ready to complete our (lower) bounds on the negative quantities

\begin{align*}
\frac{1}{2N}w(B_1,B_2) + \frac{1}{N}w(B_2) & \leq \alpha x OPT(1/2 - x) + (1-\alpha)xOPT(1-2x) \\
& = xOPT(1/2 - x)(\alpha + 2 - 2\alpha) \\
& \leq xOPT(1/2 - x)\frac{3}{2} & \text{(Since $\alpha \geq \frac{1}{2}$)}\\
& = \frac{3}{4}xOPT - \frac{3}{2}x^2OPT.
\end{align*}
Plugging the final inequality into the simplified generic lower bound completes the proof. $\hfill\qed$
\end{proof}

A careful inspection of the proof of Lemma~\ref{lem_rd_lowerbound2} reveals that our lower bound is a bit loose in two places where we independently replaced $\alpha$ with $\frac{1}{2}$ and $1$ respectively to provide a worst-case bound. Unfortunately, as a result, the lower bounds for the $\alpha \geq \frac{1}{2}$ and $\alpha \leq \frac{1}{2}$ cases do not align.

For our purposes however, it is enough to show that the two lower bounds apply when $x \leq \frac{1}{8}$, which we prove below in the third of our lemmas in this subsection.

\begin{lemma}
\label{lem_rd_lowerbound3}
Suppose that for a given instance with $w(GR(B)) = xOPT$ with $x \leq \frac{1}{8}$, we have $w(GR(B_1)) = \alpha xOPT$ with $\alpha \geq \frac{1}{2}$. Then, the following lower bound is true

$$w(RD) \geq \frac{5}{8}OPT - x(1-2x)OPT.$$

\end{lemma}
\begin{proof}
The proof of the lemma picks up from the previous Lemma~\ref{lem_rd_lowerbound2} with only a few simple tweaks. From Lemma~\ref{lem_rd_lowerbound3} (specifically using Inequality \ref{eq1}), we have that

$$\frac{1}{2N}w(B_1, B_2) + \frac{1}{N}w(B_2) \leq xOPT\{\alpha(1/2 - x) + (1-\alpha)(1 - x - \frac{x}{\alpha})\}.$$

From Lemma~\ref{lem_genericffunctiondiffernetial}, we know that the expression inside the curly parenthesis attains its maximum value for $\alpha = \frac{1}{2}$ in the given range of $x$. Therefore, substituting $\alpha = \frac{1}{2}$, we get

$$\frac{1}{2N}w(B_1, B_2) + \frac{1}{N}w(B_2) \leq OPT\{\frac{x}{4} - \frac{x^2}{2} + \frac{x}{2} - \frac{3}{2}x^2\}.$$

Directly plugging this upper bound into the simplified generic lower bound from Lemma~\ref{lem_rd_lowerbound2} is enough to prove the statement in the Lemma. $\hfill\qed$
\end{proof}

\textit{(Proof of Claim~\ref{clm_mainlbrandommatch})} The proof is a direct consequence of Lemmas~\ref{lem_rd_lowerbound1},~\ref{lem_rd_lowerbound2}, and~\ref{lem_rd_lowerbound3}. $\hfill\qed$

\subsection{Final Leg: Proving the Actual Bound}
Proposition~\ref{prop_grlowerboundfinal} and Claim~\ref{clm_mainlbrandommatch} are the only tools that we require to show the final bound. We prove this in two cases depending on whether or not $x \leq \frac{1}{8}$.

\subsubsection*{Case I: $x \leq \frac{1}{8}$}
Recall that we pick the random matching with probability $p = \frac{4}{7}$ and the greedy mathing with probability $1-p = \frac{3}{7}$. Suppose we use $w(M)$ to denote the weight of the matching returned by our algorithm. Then,

\begin{align*}
E[w(M)] & = (1-p)w(GR) + p\cdot w(RD) \\
& \geq OPT \{ (1-p)(\frac{1}{2} + x) + (p)(\frac{5}{8} - x + 2x^2)\} \\
& = OPT \{\frac{1}{2} + p\frac{1}{8} + x(1-2p) + 2px^2\} \\
\end{align*}

Since $p$ is fixed, it is not hard to see that the quantity $x(1-2p) + 2px^2$ is minimized at $x=\frac{1}{2} - \frac{1}{4p}$. Substituting $p=\frac{4}{7}$, we get $\frac{OPT}{E[w(M)]} \leq 1.7638$

\subsubsection*{Case I: $x \geq \frac{1}{8}$}
In this case, we need to use a weaker lower bound for $RD$.

\begin{align*}
E[w(M)] & = (1-p)w(GR) + pw(RD) \\
& \geq OPT \{ (1-p)(\frac{1}{2} + x) + (p)(\frac{5}{8} - x + \frac{3}{2}x^2)\} \\
& = OPT \{\frac{1}{2} + p\frac{1}{8} + x(1-2p) + \frac{3}{2}px^2\} \\
\end{align*}

Using basic calculus, we observe the expression in the final line is a non-decreasing function of $x$ in the range $[\frac{1}{8}, \frac{1}{2}]$ and so, its minimum value is attained at $x=\frac{1}{8}$. Substituting this value above, we get
$\frac{OPT}{E[w(M)]} \leq 1.7638.$
\end{proof}

\subsection{Max $k$-Matching}\label{sec:kmatching}
We now move on to the more general Max $k$-matching problem, where the objective is to compute a maximum weight matching consisting only of $k \leq \frac{N}{2}$ edges. Our previous results do not carry over to this problem. While we know from~\cite{anshelevichS16} that the greedy algorithm is half-optimal, one can easily construct examples where this is not truthful. On the other hand, the random matching algorithm is truthful but its approximation factor can be as large as $\frac{N}{k}$. Our main result in this section is based on the \textit{Random Serial Dictatorship} algorithm that in some sense combines the best of greedy and random into a single algorithm. Such algorithms have received attention for other matching problems~\cite{filosRatsikas16, filosRatsikasF014}; ours is the first result showing that these algorithms can approximate the optimum matching up to a small constant factor for metric settings. Specifically, while serial dictatorship is usually easy to analyze, our algorithm greatly exploits the randomness to select good edges \emph{in expectation}.

\noindent{\bf Definition}: \emph{Random Serial Dictatorship} is the same algorithm as Serial Dictatorship (Algorithm \ref{alg_sd}), except the agents $x$ from $T$ are picked uniformly at random.


\begin{theorem}
\label{thm_rsdmaxk}
Random serial dictatorship is a universally truthful mechanism that provides a $2$-approximation for the Max $k$-matching problem.
\end{theorem}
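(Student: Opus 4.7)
The plan splits into (a) establishing universal truthfulness and (b) proving the $2$-approximation guarantee.

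\textbf{(a) Universal truthfulness.} I will fix any realization of the random order in which dictators are chosen and argue that truth-telling is dominant for every agent. For any agent $v$, $v$'s match arises in exactly one of two ways: either $v$ is drawn as the dictator at some step and picks $v$'s favorite still-available agent (based solely on $v$'s own report), or $v$ is picked by some other dictator $d$ in a step where $d$ consults only $d$'s own report. In the first situation, reporting the true preference list yields $v$'s top available partner, which weakly beats anything a misreport could achieve; in the second situation, $v$'s report is irrelevant to the outcome. Since no agent's report can influence another agent's pick, truthfulness holds for every fixed realization, giving universal truthfulness.

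\textbf{(b) $2$-approximation.} Let $M^*=\{(a_i,b_i)\}_{i=1}^{k}$ be an optimal $k$-matching with weights $w_i=w(a_i,b_i)$. For each optimal edge $i$ I introduce $\tau_i$, the first RSD step at which either $a_i$ or $b_i$ becomes matched, and let $(d_i,p_i)$ be the RSD edge selected at step $\tau_i$ (with $d_i$ the dictator). The central lemma is:
\[
w(d_i,p_i)\ \geq\ \tfrac{1}{2}\,w_i.
\]
The proof splits by whether the dictator at $\tau_i$ is an endpoint of the OPT edge. In \emph{Case A} ($d_i\in\{a_i,b_i\}$, say $d_i=a_i$) the other endpoint $b_i$ is still available by definition of $\tau_i$, so $d_i$'s top pick has weight at least $w(a_i,b_i)=w_i$. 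In \emph{Case B} ($d_i\notin\{a_i,b_i\}$ and $p_i\in\{a_i,b_i\}$, say $p_i=a_i$), both $a_i$ and $b_i$ were available when $d_i$ acted, so $w(d_i,a_i)\geq w(d_i,b_i)$; triangle inequality then gives $w_i\leq w(d_i,a_i)+w(d_i,b_i)\leq 2w(d_i,p_i)$.

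I then charge each OPT edge $i$ the value $w_i$ to the RSD step at $\tau_i$. Each RSD step $(d,p)$ can receive charges from at most two OPT edges, namely $(d,M^*(d))$ (Case A contribution, bounded by $w(d,p)$ because $M^*(d)$ was available when $d$ chose $p$) and $(p,M^*(p))$ (Case B contribution, bounded by $2w(d,p)$ via triangle with the available $M^*(p)$). The goal is to combine both bounds using the dictator's joint top-pick inequalities $w(d,p)\geq w(d,M^*(d))$ and $w(d,p)\geq w(d,M^*(p))$ to conclude that the total charge per RSD step is at most $2w(d,p)$. Summing over the $k$ RSD steps gives $\sum_i w_i \leq 2\cdot\mathrm{ALG}$, i.e.\ $E[\mathrm{ALG}]\geq \mathrm{OPT}/2$.

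\textbf{Main obstacle.} The technically delicate step is the per-RSD-step bound when both assigned OPT edges are charged simultaneously. Handling Case A and Case B independently gives only $w(d,p)+2w(d,p)=3w(d,p)$, which would yield a $3$-approximation. Pushing this to the $2$-approximation requires exploiting that the dictator's simultaneous preference of $p$ over both $M^*(d)$ and $M^*(p)$ ties the two OPT weights together (so that only one of them can really saturate its bound), and using the triangle inequality once more across the configuration $(d,p,M^*(d),M^*(p))$. This joint argument, rather than the per-case bound, is the heart of the proof; the rest is bookkeeping over RSD steps.
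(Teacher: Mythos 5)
Your truthfulness argument is fine (same as the paper's), but the charging argument for the $2$-approximation has a genuine gap, and the fix you speculate about cannot work. You correctly note that your per-step analysis yields $w(d,p)+2w(d,p)=3w(d,p)$ when an RSD step $(d,p)$ absorbs a Case A charge from $(d,M^*(d))$ and a Case B charge from $(p,M^*(p))$, but your proposed repair, that ``only one of them can really saturate its bound,'' is false. Consider $w(d,p)=w(d,M^*(d))=w(d,M^*(p))=1$ and $w(p,M^*(p))=2$; all triangle inequalities hold, both $M^*(d)$ and $M^*(p)$ are worse for $d$ than $p$, and the charge is exactly $3w(d,p)$. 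So a per-realization, per-step bound of $2w(d,p)$ is simply not available, and your deterministic charging gives a $3$-approximation, not $2$.

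The paper sidesteps this by \emph{not} charging per realization. Its structural claim bounds the \emph{expected} drop in $\mathrm{OPT}(S,r)$ when a uniformly random first-preference edge $(a,s_a)$ is removed by $\frac{2}{|S|}\sum_{e\in\bar G(S)}w(e)$, i.e.\ twice the algorithm's expected per-step gain, and then telescopes. The factor $2$ comes from a global symmetry: partitioning preference edges into $E_1$ (where $a$ is matched in OPT) and $E_2$ (where it isn't), the $E_1$ loss term involves both $w(a,s_a)$ and $w(o_a,s_{o_a})$, and the map $(a,s_a)\mapsto(o_a,s_{o_a})$ is an involution on $E_1$. Hence $\sum_{E_1} w(o_a,s_{o_a})=\sum_{E_1}w(a,s_a)$, making the $E_1$ contribution exactly $2\sum_{E_1}w(a,s_a)$ rather than $3\sum_{E_1}w(a,s_a)$. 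This averaging over the random dictator, combined with the permutation identity, is precisely what your per-step charging cannot replicate. To salvage your proof you would need to move to an amortized or expectation-based accounting of exactly this kind.
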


\section{Truthful Mechanisms for Other Problems}\label{sec:truthfulother}

\subsection{Densest $k$-Subgraph}

In this section we present our truthful, ordinal algorithm for Densest $k$-subgraph, which requires techniques somewhat different from the ones outlined in Section \ref{sec:prelim}. While ``conventional" approaches such as Greedy and Serial Dictatorship {\em do} lead to good approximations for this problem, they are not truthful, whereas random approaches are truthful but result in poor worst-case approximation factors. We combat this problem with a somewhat novel approach that combines the best of both worlds by designing a semi-oblivious algorithm that has the following property: if agent $i$ is included in the solution, then changing her preference ordering $s_i$ does not affect the mechanism's output.

\begin{algorithm}[htbp]
{$S:= \emptyset$, $T$ is the set of available agents initialized to $\mathcal{N}$}\;
\While{$|S| < k$}{
pick an anchor agent $a$ and another node $x$, both uniformly at random from $T$\;
let $b$ denote $a$'s most preferred agent in $T - \{a,x\}$\;
with probability $\frac{1}{2}$, add $a,x$ to $S$, and set $T = T -\{a,x\}$\;
with probability $\frac{1}{2}$, add $b,x$ to $S$ and set $T = T -\{a,b,x\}$\;
}
\label{alg_dks}
\caption{Hybrid Algorithm for Densest $k$-Subgraph}
\end{algorithm}

\begin{theorem}
Algorithm~\ref{alg_dks} is a universally truthful mechanism that yields a $6$-approximation for the Densest $k$-Subgraph problem.
\end{theorem}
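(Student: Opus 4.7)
My plan is to fix an arbitrary realization of the random tape (the sequences of $a, x$, and coin flips used to simulate the uniformly random choices from the current $T$) and show that, under this deterministic realization, agent $i$'s final utility is independent of her reported preference $s_i$. Observe that $s_i$ is consulted only in a round where $i$ is chosen as the anchor $a$, and since that round ends with $i$ either added to $S$ (case 1) or removed from $T$ (case 2), agent $i$ serves as anchor at most once in any execution. If $i$ is never the anchor, $s_i$ is never inspected. If $i$ is the anchor and the coin is case 1, then $b$ is computed but discarded, and the update $T \to T\setminus\{a,x\}$ depends only on $a, x$; hence the entire subsequent execution and, in particular, $i$'s utility is independent of $s_i$. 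If the coin is case 2, then $i$ is removed entirely and contributes $0$ to her utility regardless of the $b$ selected by her report. Thus truthful reporting is weakly optimal on every random tape, which is exactly universal truthfulness.

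\textbf{Approximation via matching plus a metric lift.} Let $OPT$ denote the optimal $k$-subset. I plan to prove $E[w(S)] \ge w(OPT)/6$ in two independently analyzable pieces. Let $M_{\text{alg}}$ denote the collection of $k/2$ primary pairs added across rounds (the pair $\{a,x\}$ in case 1 and $\{b,x\}$ in case 2); since these are all edges inside $S$, we have $w(S)\ge w(M_{\text{alg}})$. The first piece is to lower bound $E[w(M_{\text{alg}})]$ against $w(M_{OPT})$, where $M_{OPT}$ is a maximum $(k/2)$-matching inside $OPT$. The key ingredients are: (i) in case 2, $w(b,x)\ge w(a,v)$ for every $v\in T\setminus\{a,x\}$, so averaging over the random anchor recovers a constant fraction of the top edge-weight available from each vertex in $T$; and (ii) in case 1, the pair $\{a,x\}$ is uniformly random in $T$, contributing in proportion to the average edge density of $T$. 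The second piece is a purely metric lift from matching to subgraph weight: for any matching $M$ inside $OPT$ with partner function $m(\cdot)$ and any non-matched pair $(u,v)$ in $OPT$, the triangle inequality gives $w(u,v)\le w(u,m(u)) + w(m(u),m(v)) + w(m(v),v)$, which upon careful double counting (each matching edge is charged $O(k)$ times but its weight is scaled by $1/(k-1)$ in the average) yields $w(OPT)\le c\cdot w(M_{OPT})$ for an absolute constant $c$. Combining the two pieces and tuning constants yields the claimed factor $6=2\cdot 3$, where the $2$ is the price of mixing case 1 and case 2 via the fair coin and the $3$ is the price of the metric lift.

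\textbf{Main Obstacle.} The technical heart of the proof is the first piece, because the state $T$ evolves stochastically---it shrinks by $2$ or $3$ depending on the coin---so the anchor's top choice $b$ in $T\setminus\{a,x\}$ need not coincide with $a$'s globally preferred partner and in particular need not lie in $OPT$. My plan to handle this is to couple the algorithm's execution with an idealized process that would have selected anchors from $OPT\cap T$, and to charge each round's expected primary-pair weight against a fixed reference matching inside $OPT$; the stochastic shrinkage of $T$ and the independent random $x$ each contribute multiplicative constants that must be absorbed into the final factor without exceeding it. This argument is in the spirit of the analysis of Random Serial Dictatorship for Max $k$-Matching (Theorem~\ref{thm_rsdmaxk}) but is strictly more delicate, since here we must ultimately bound the weight of a full $k$-subgraph rather than a $k$-matching, and the per-round edge bound must be turned into a bound on all $\binom{k}{2}$ pairs via the metric inequality above.
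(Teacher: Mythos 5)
Your truthfulness argument is correct and matches the paper's: the anchor's reported list is consulted only to compute $b$, which influences the execution only in the coin branch where the anchor is discarded without being placed in $S$, so on every fixed random tape the anchor's eventual utility is independent of its report. That part is fine.

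The approximation plan, however, has a genuine gap. Your plan is to (a) lower bound $E[w(M_{\mathrm{alg}})]$ against a max $(k/2)$-matching $M_{OPT}$ inside $OPT$, and (b) close with a ``metric lift'' $w(OPT)\le c\cdot w(M_{OPT})$ for an absolute constant $c$, using only the crude inequality $w(S)\ge w(M_{\mathrm{alg}})$ on the algorithm's side. Claim (b) is false: in a uniform metric where every pair has weight $1$, $w(OPT)=\binom{k}{2}$ while $w(M_{OPT})=k/2$, so the ratio is $k-1$, not a constant. (The correct triangle-inequality bound is $w(OPT)\le (k-1)\,w(M_{OPT})$.) And on the other side, $w(S)\ge w(M_{\mathrm{alg}})$ throws away a factor of $\Theta(k)$: Lemma~\ref{lem_matchingupper} gives $w(S)\ge \frac{k}{2}w(M')$ for \emph{any} perfect matching $M'$ on $S$, and it is precisely this $k/2$ amplification that a matching-based argument would need to cancel the $(k-1)$ in the metric lift (this is how the paper's non-truthful $4$-approximation via RSD actually works). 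Because your plan omits the $k/2$ amplification and also asserts a constant where the true factor is $\Theta(k)$, the two errors do not cancel; the composed bound would be off by $\Theta(k)$. A smaller but still real issue: the ``key ingredient'' $w(b,x)\ge w(a,v)$ in case~2 is not justified --- the anchor's preference gives $w(a,b)\ge w(a,v)$, but $w(b,x)$ is a different edge and need not dominate $w(a,v)$; at best the triangle inequality gives $w(b,x)\ge w(a,b)-w(a,x)$, which may be negative when the random $x$ happens to be $a$'s favorite.

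The paper's actual proof is structured quite differently and avoids the matching intermediary entirely. It proves $OPT_r\le 6\,Alg_r$ by induction on even $r$, showing $OPT_{r+2}-OPT_r\le 6\,alg_{r+2}$ round by round. The decisive ingredient --- which any $M_{\mathrm{alg}}$-only accounting misses --- is that the random node $x$ added each round contributes $w(x,\bar S_r)\ge\frac{1}{r-1}w(\bar S_r)$ (Lemma~\ref{lem_pointtoset}) to the growing solution, so the algorithm is credited with the full subgraph weight $w(S_r)$ rather than just the $k/2$ primary edges; this is exactly what makes the per-round comparison with $OPT_{r+2}-OPT_r$ close with a constant. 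If you want a matching-based route, you would need to prove both $w(S)\ge\frac{k}{2}\,w(M_{\mathrm{alg}})$ and $w(OPT)\le(k-1)\,w(M_{OPT})$ and then show $E[w(M_{\mathrm{alg}})]$ is a constant-factor approximation to $M_{OPT}$ --- the last of which is itself nontrivial for Algorithm~\ref{alg_dks}, since in case~2 the anchor is discarded without producing an edge.
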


\noindent To see why this is truthful, note that for any particular choice of the anchor agent $a$, the only case in which $a$'s preference ordering makes a difference is when $a$ is definitely not added to the final team. Therefore, by lying $a$ cannot influence her utility in the event that she is actually chosen. 

\textbf{Remark on size of $k$} Without loss of generality, we assume that $k \leq \frac{N}{2}$ so that $T$ does not become empty before $|S| = k$. When $k \geq \frac{N}{2}$, there is a trivial algorithm that yields a $6$-approximation to the optimum densest subgraph (see Appendix~\ref{app:dks}). Since we are interested in asymptotic performance bounds, we also assume that $k$ is even. For the rest of this proof, given any set $S$, node $x$, $w(S)$ will denote the total weight of the edges inside $S$, and $w(x,S) := \sum_{j \in S}w(x,j)$.

\begin{proof}
\noindent\textbf{Notation}
We begin by defining some notation pertinent to the analysis. Suppose that our algorithm proceeds in rounds such that in each round, exactly two nodes are added to our set $S$, and at most $3$ nodes are removed from $\mathcal{A}$. Therefore, $S$ consists of $2j$ nodes after $j$ rounds. For ease of notation, we will number the rounds $2,4,6,\ldots$ instead of $1,2,3,4,5,\ldots$; thus $S$ has $r$ nodes at the end of round $r$. Further, define $S_r$ to be the random set of selected nodes after round $r$, i.e., $|S_r| = r$ for any instantiation of this random set.

Next, let us examine the inner workings of the algorithm. Look at any round $r$, the algorithm works by selecting a triplet $\Delta_r = \{a,x,b\}$, where $a$ is referred to as the anchor node, $x$ is a node selected uniformly at random, and $b$ is $a$'s most preferred agent in $\mathcal{A}_r - \{a,x\}$, (let $\mathcal{A}_r$ be the random set of available nodes at the beginning of round $r$). For the rest of this proof, we will use $\Delta_r$ to denote the random triplet of nodes selected in round $r$. Notice that for a given (ordered) triplet $\{a,x,b\}$, the algorithm adds $(a,x)$ to $S$ with probability half and $(b,x)$ to $S$ also with the same probability.

Finally, we use $OPT_r$ to denote the weight of the optimum solution to the Densest $k$-subgraph problem when $k=r$, and $Alg_r$ to be the expected weight of the solution output by our algorithm for the same cardinality, i.e., $Alg_r=E[w(S_r)]$. Let $alg_{r+2}$ represent the expected increase in the weight of the solution output by our algorithm from $r$ to $r+2$, i.e., $alg_{r+2} = Alg_{r+2} - Alg_r$. We will prove by induction on even $r$ that $OPT_r \leq 6Alg_r$. More specifically, we will show that for each $r$, $OPT_r - OPT_{r-2} \leq 6alg_r$.

\subsubsection*{Proof by Induction: $OPT _r \leq 6Alg_r$}

\begin{claim}(\textbf{Base Case:} $r=2$)
$OPT_2 \leq 4 Alg_2$.
\end{claim}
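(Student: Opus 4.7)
The plan is to reduce the base case to a lower bound on the expected weight of the ``virtual'' edge $(a,b)$ generated inside the first round, using the triangle inequality twice. Let $(p,q)$ be a maximum-weight edge in the graph, so $OPT_2 = w(p,q) =: W$. In round $2$, the algorithm draws a uniformly random ordered pair $(a,x)$ of distinct nodes from $T = \mathcal{N}$, sets $b$ to be $a$'s most preferred node in $\mathcal{N} - \{a,x\}$, and then adds either $(a,x)$ or $(b,x)$ to $S$, each with probability $\tfrac{1}{2}$. Thus
\[
Alg_2 \;=\; \tfrac{1}{2}\bigl(E[w(a,x)] + E[w(b,x)]\bigr).
\]

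The first step I would take is to apply the triangle inequality $w(a,x) + w(b,x) \geq w(a,b)$ to obtain $Alg_2 \geq \tfrac{1}{2}\, E[w(a,b)]$. The task then reduces to showing that $E[w(a,b)] \geq W/2$ (up to lower-order terms in $N$). This is the right target: it effectively converts the round's randomness into a comparison between the anchor $a$ and its top available neighbor $b$, avoiding any need to reason about the random ``partner'' $x$ directly.

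The key observation is a second invocation of the triangle inequality, applied to the optimum edge $(p,q)$. For any third node $a \notin \{p,q\}$, the bound $W = w(p,q) \leq w(p,a) + w(a,q)$ forces $\max\{w(a,p), w(a,q)\} \geq W/2$. By construction, $b$ is a highest-weight neighbor of $a$ inside $\mathcal{N} - \{a,x\}$; so whenever both $p$ and $q$ happen to lie in $\mathcal{N} - \{a,x\}$ (equivalently, $a \notin \{p,q\}$ and $x \notin \{p,q\}$), we are guaranteed $w(a,b) \geq \max\{w(a,p), w(a,q)\} \geq W/2$.

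It remains to bound the probability of this good event. Since $(a,x)$ is a uniformly random ordered pair of distinct nodes, $\Pr\bigl[a \notin \{p,q\} \text{ and } x \notin \{p,q\}\bigr] = \tfrac{(N-2)(N-3)}{N(N-1)} = 1 - O(1/N)$, and combining with the trivial bound $w(a,b) \geq 0$ on the complementary event yields $E[w(a,b)] \geq \tfrac{W}{2}\bigl(1 - O(1/N)\bigr)$. Hence $Alg_2 \geq \tfrac{W}{4}\bigl(1 - O(1/N)\bigr)$, which proves the claim $OPT_2 \leq 4\, Alg_2$ in the asymptotic regime used throughout the paper. The only real obstacle is this $O(1/N)$ slack at the base, but since the paper already declares an asymptotic analysis (and since the inductive step will likewise tolerate vanishing multiplicative losses), this is harmless for the eventual $6$-approximation.
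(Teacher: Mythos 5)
Your proof is in the same spirit as the paper's --- write $Alg_2 = \tfrac{1}{2}E[w(a,x)+w(b,x)]$, apply the triangle inequality to lower bound by $\tfrac{1}{2}E[w(a,b)]$, then apply the triangle inequality again against the maximum-weight edge --- but you introduce an avoidable $O(1/N)$ slack by restricting to the event $\{a,x\}\cap\{p,q\}=\emptyset$. The paper's argument is exact and handles every realization of the triplet $(a,x,b)$ via a short two-way case split that never references the optimum edge $(p,q)$ directly. It applies Lemma~\ref{lem_rsdedgebound} with $T=\mathcal{N}$, which gives that $a$'s heaviest incident edge in the entire graph has weight at least $\tfrac{1}{2}w^*_{max}$. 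If $a$'s overall favorite node is something other than $x$, that node is still available when $b$ is chosen, so $w(a,b)\geq\tfrac{1}{2}w^*_{max}$, and by your first triangle inequality $w(a,x)+w(b,x)\geq w(a,b)\geq\tfrac{1}{2}w^*_{max}$; otherwise $a$'s overall favorite is $x$ itself, and then $w(a,x)\geq\tfrac{1}{2}w^*_{max}$ directly. Either way $\tfrac{1}{2}(w(a,x)+w(b,x))\geq\tfrac{1}{4}w^*_{max}$ holds deterministically for every triplet, so the degenerate configurations you discard probabilistically (where $a$ or $x$ coincides with an endpoint of the optimum edge) are handled cleanly rather than written off. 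Your version still suffices for the theorem because, as you observe, the base has slack ($4<6$) and the induction only needs $OPT_2\leq 6\,Alg_2$, so a vanishing multiplicative loss is absorbed. But the claim as stated is an exact $4$, and the paper's case split both gives it without asymptotics and is simpler than the probability computation.
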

\begin{proof}
The base case is quite straightforward.
Suppose that $w^*_{max}$ is the heaviest edge in $\mathcal{N}$. Clearly, $OPT_2 = w^*_{max}$. Next, let $a,x \in \mathcal{N}$ be any two agents, and let $b$ denote $a$'s most preferred agent in $\mathcal{N} - \{x\}$. Then, we claim that $w(a,x) + w(b,x) \geq \frac{1}{2}w^*_{max}$.

The above claim can be proved in two cases: first, suppose that $b$ is indeed $a$'s favorite node in $\mathcal{N}$. Then, as per Lemma~\ref{lem_rsdedgebound}, $w(a,x) + w(b,x) \geq w(a,b) \geq \frac{1}{2}w^*_{max}$. In the second case, if $a$'s most preferred node in $N$ and $N-\{x\}$ do not coincide, the only possibility is that $x$ is $a$'s most preferred node in $\mathcal{N}$, and by the same lemma $w(a,x) \geq \frac{1}{2}w^*_{max}$.

To complete the base case, consider any instantiation of the random triplet, $\Delta_2 = \{a,b,x\}$. We have that $S_2 = \{a,x\}$ with probability $\frac{1}{2}$ and $S_2 =\{b,x\}$ otherwise. Therefore, for this instantiation $w(S_2) = \frac{1}{2}(w(a,x) + w(b,x)) \geq \frac{1}{4}w^*_{max}$. Taking the expectation over every such triplet, we get the desired base claim. \hfill$\qed$

\end{proof}

\subsection*{Inductive Claim: To Prove $OPT_{r+2}\leq 6Alg_{r+2}$}
Recall that $S_r$ denotes the random set of chosen nodes at the end of round $r$. We know from the induction hypothesis that $OPT_{r} \leq 6Alg_r = 6E[w(S_r)]$. Consider some specific instantiation of $S_r$, call it $\bar{S}_r$, and for this instantiation, let $\bar{\Delta}_{r+2} = \{a,x,b\}$ denote some random triplet selected by the algorithm in round $r+2$, i.e., we have a specific instantiation of $S_r$ and $\Delta_{r+2}$ for our algorithm. As usual, for this triplet, $(a,b,x)$, $a$ is the anchor node, $x$ is the random node and $b$ is $a$'s most preferred node in $\mathcal{N} \setminus \{\bar{S}_r \cup \{a,x\}\}$.

Suppose that $\overline{alg}_{r+2}$ is the increase in the expected weight of the solution returned by our algorithm during round $r+2$ for this specific instantiation of $\bar{S}_r, \bar{\Delta}_{r+2}$., i.e., $\overline{alg}_{r+2} = \frac{1}{2}[w(\bar{S}_r \cup \{a,x\}) + w(\bar{S}_r \cup \{b,x\})] - w(\bar{S}_r)$. Our proof will proceed as follows: we establish an upper bound for $OPT_{r+2} - OPT_r$ in terms of $\overline{alg}_{r+2}$, and then take the expectation over all possible instantiations to get the actual bound.

Before starting with the proof of the inductive claim, we define some auxiliary notation that will allow us to process $OPT$ as a sequence of additions in each round, so that we can compare the addition to $OPT$ in round $r+2$ to that of our algorithm in the same round. Fix $p,q$ to be any two nodes in $OPT_{r+2} \setminus \bar{S}_r$, and let $T := OPT_{r+2} \setminus \{p,q\}$. $T$ will act as a proxy to $OPT_r$ in our proofs. Notice that $p,q \in \mathcal{N} \setminus \bar{S}_r$. Finally, in order to avoid messy notation, assume that $b$ is $a$'s (most) preferred node in $\mathcal{N} \setminus \bar{S}_r$. If this is not the case (and this can happen with a small probability), then $a$'s most preferred node in $\mathcal{N} \setminus \bar{S}_r$ \emph{has to be} $x$. We deal with this case separately in Section~\ref{subsec_inductivebadcase} although the proof is quite similar.

We begin with a nice lower bound for $\overline{alg}_{r+2}$. Suppose that $w(a,b) = w^*_a$.
\begin{lemma}\label{lem:alglb}
(Lower Bound for our Algorithm)
$$\overline{alg}_{r+2} \geq \frac{1}{6}[w(a,\bar{S}_r \cap T) + w(b,\bar{S}_r \cap T)] + \frac{1}{3}|\bar{S}_r \cap T|w^*_a + \frac{1}{2}(|\bar{S}_r \setminus T|+1)w^*_a + \frac{1}{r-1}w(\bar{S}_r).$$
\end{lemma}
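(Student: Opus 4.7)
The plan is to expand $\overline{alg}_{r+2}$ directly from its definition and lower-bound each resulting summand using only the triangle inequality together with the identity $w(a,b) = w^*_a$. Writing out the two increments $w(\bar{S}_r \cup \{a,x\}) - w(\bar{S}_r)$ and $w(\bar{S}_r \cup \{b,x\}) - w(\bar{S}_r)$ and averaging yields
$$\overline{alg}_{r+2} = \tfrac{1}{2}w(a,\bar{S}_r) + \tfrac{1}{2}w(b,\bar{S}_r) + w(x,\bar{S}_r) + \tfrac{1}{2}w(a,x) + \tfrac{1}{2}w(b,x).$$
Every term on the right then needs to be converted into one of the four summands on the right-hand side of the claimed inequality.

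The main step is to decompose $\bar{S}_r = (\bar{S}_r \cap T) \cup (\bar{S}_r \setminus T)$ in the first two summands. For each $y \in \bar{S}_r \cap T$ I would use the (slightly non-obvious) algebraic split
$$\tfrac{1}{2}\bigl(w(a,y) + w(b,y)\bigr) = \tfrac{1}{6}\bigl(w(a,y) + w(b,y)\bigr) + \tfrac{1}{3}\bigl(w(a,y) + w(b,y)\bigr) \geq \tfrac{1}{6}\bigl(w(a,y) + w(b,y)\bigr) + \tfrac{1}{3} w^*_a,$$
invoking $w(a,y) + w(b,y) \geq w(a,b) = w^*_a$. Summing over $y \in \bar{S}_r \cap T$ produces exactly the $\tfrac{1}{6}[w(a,\bar{S}_r \cap T) + w(b,\bar{S}_r \cap T)] + \tfrac{1}{3}|\bar{S}_r \cap T|\,w^*_a$ piece of the bound.

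For each $y \in \bar{S}_r \setminus T$ I would discard the $\tfrac{1}{6}$-residual and just use the triangle inequality $\tfrac{1}{2}(w(a,y) + w(b,y)) \geq \tfrac{1}{2} w^*_a$; analogously, $\tfrac{1}{2}(w(a,x) + w(b,x)) \geq \tfrac{1}{2} w^*_a$ handles the two edges touching the random node $x$. Together these yield $\tfrac{1}{2}(|\bar{S}_r \setminus T| + 1)\,w^*_a$, the ``$+1$'' accounting for the $(a,x),(b,x)$ pair. For the remaining summand $w(x,\bar{S}_r)$, I would apply the triangle inequality across pairs inside $\bar{S}_r$: for every ordered pair $y \neq z$ in $\bar{S}_r$, $w(x,y) + w(x,z) \geq w(y,z)$. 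Summing over all such ordered pairs gives $2(r-1)\,w(x,\bar{S}_r) \geq 2w(\bar{S}_r)$, so $w(x,\bar{S}_r) \geq \tfrac{1}{r-1}w(\bar{S}_r)$, producing the final term.

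The only real subtlety is the split $\tfrac{1}{2} = \tfrac{1}{6} + \tfrac{1}{3}$ applied to nodes in $\bar{S}_r \cap T$: it is tuned so that the $\tfrac{1}{6}$-residual on $w(a,\cdot) + w(b,\cdot)$ survives for the subsequent inductive comparison with $OPT_{r+2} - OPT_r$, while simultaneously extracting enough $w^*_a$-mass to dominate the ``new optimum edges'' added in round $r+2$. All other estimates are essentially immediate from the metric assumption, so no genuine obstacle arises beyond locating this balanced decomposition.
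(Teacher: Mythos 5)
Your proposal is correct and follows essentially the same route as the paper: the same expansion of $\overline{alg}_{r+2}$, the same split of $\bar S_r$ into $\bar S_r\cap T$ and $\bar S_r\setminus T$, the same $\tfrac12=\tfrac16+\tfrac13$ weighting on the $\cap T$ part (the paper writes it as $1=\tfrac13+\tfrac23$ before multiplying by $\tfrac12$, which is identical), and the same triangle-inequality extraction of $\tfrac12 w^*_a$ from the $(a,x),(b,x)$ terms. The only superficial difference is that you re-derive the estimate $w(x,\bar S_r)\ge\tfrac1{r-1}w(\bar S_r)$ inline, whereas the paper cites it as Lemma~\ref{lem_pointtoset}; your derivation matches that lemma's proof.
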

\begin{proof}
Recall that $\overline{alg}_{r+2} = \frac{1}{2}[w(\bar{S}_r \cup \{a,x\}) + w(\bar{S}_r \cup \{b,x\})] - w(\bar{S}_r)$. Simplifying the expression, we get

\begin{equation}
\label{eqn_dks_genlowerb}
\overline{alg}_{r+2} = \frac{1}{2}[w(a,\bar{S}_r) + w(b,\bar{S}_r) + w(a,x) + w(b,x)] + w(x,\bar{S}_r)
\end{equation}

Consider the first two terms inside the square brackets. We can divide $\bar{S}_r$ into $\bar{S}_r \cap T$ and $\bar{S}_r \setminus T$ and simplify the two parts as follows,

$$w(a,\bar{S}_r \cap T) + w(b,\bar{S}_r \cap T) \geq \frac{1}{3}[w(a,\bar{S}_r \cap T) + w(b,\bar{S}_r \cap T)] + \frac{2}{3}|\bar{S}_r \cap T|w^*_a.$$

The right most term in the RHS simply comes from the triangle inequality since for any $i \in \bar{S}_r \cap T$, $w(i,a) + w(i,b) \geq w(a,b) = w^*_a$. Now for the second part, which also follows from the triangle inequality,
$$w(a,\bar{S}_r \setminus T) + w(b,\bar{S}_r \setminus T) \geq |\bar{S}_r \setminus T|w^*_a.$$

To wrap up the proof, we apply Lemma~\ref{lem_pointtoset} to $w(x,\bar{S}_r)$ to get $w(x,\bar{S}_r) \geq \frac{1}{r-1}w(\bar{S}_r)$. An additional $\frac{1}{2}w^*_a$ can extracted from $\frac{1}{2}[ w(a,x) + w(b,x)]$. Adding up the various parts completes the lemma. \hfill$\qed$
\end{proof}

Before showing our upper bound on $OPT_{r+2} - OPT_r$, we present a simple lemma that allows us to relate the weights of any given node to the members of a set in terms of $w^*_a$ and the weight of $a$ to the members of that set. Recall the definitions of $p,q \in OPT_{r+2} \setminus \bar{S}_r$.

\begin{lemma}
\label{lem_dkssetcharging}
Suppose that $T,p,q$ are as defined previously. Then,

\begin{enumerate}
\item $w(p,T) \leq w(a,T \cap \bar{S}_r) + |T \cap \bar{S}_r|w^*_a +  2|T \setminus \bar{S}_r|w^*_a$.

\item $w(q,T) \leq w(b,T\cap \bar{S}_r) + 2|T \cap \bar{S}_r|w^*_a + 2|T \setminus \bar{S}_r|w^*_a$.
\end{enumerate}
\end{lemma}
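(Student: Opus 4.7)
The plan is short because both inequalities reduce to the same two tools: the triangle inequality on the metric $w$, and the defining property of $b$. Specifically, since $b$ is $a$'s most preferred node in $\mathcal{N}\setminus\bar{S}_r$, any node $z\in\mathcal{N}\setminus\bar{S}_r$ with $z\neq a$ satisfies $w(a,z)\leq w(a,b)=w^*_a$. The nodes $p$, $q$, and every $i\in T\setminus\bar{S}_r$ lie in $\mathcal{N}\setminus\bar{S}_r$ (for $i$, this is because $T\subseteq OPT_{r+2}$ and $i\notin\bar{S}_r$), so all of $w(a,p)$, $w(a,q)$, and $w(a,i)$ for such $i$ are bounded by $w^*_a$. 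This single observation powers everything.

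For part 1, I would split $T=(T\cap\bar{S}_r)\cup(T\setminus\bar{S}_r)$ and route through $a$. For $i\in T\cap\bar{S}_r$, the triangle inequality gives $w(p,i)\leq w(p,a)+w(a,i)\leq w^*_a+w(a,i)$; summing over this subset yields $|T\cap\bar{S}_r|w^*_a+w(a,T\cap\bar{S}_r)$. For $i\in T\setminus\bar{S}_r$, both $p$ and $i$ sit in $\mathcal{N}\setminus\bar{S}_r$, so $w(p,i)\leq w(p,a)+w(a,i)\leq 2w^*_a$, contributing $2|T\setminus\bar{S}_r|w^*_a$ in total. Adding the two pieces produces the first inequality exactly.

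For part 2, the same strategy works, but now I need $w(b,\cdot)$ on the right-hand side instead of $w(a,\cdot)$, which forces one extra hop. For $i\in T\cap\bar{S}_r$, I would write $w(q,i)\leq w(q,a)+w(a,b)+w(b,i)\leq 2w^*_a+w(b,i)$, giving $2|T\cap\bar{S}_r|w^*_a+w(b,T\cap\bar{S}_r)$ after summing. For $i\in T\setminus\bar{S}_r$, the bound $w(q,i)\leq w(q,a)+w(a,i)\leq 2w^*_a$ from before still applies, producing $2|T\setminus\bar{S}_r|w^*_a$. Adding gives the second inequality.

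There is no real obstacle here; the lemma is bookkeeping that sets up the telescoping comparison against $\overline{alg}_{r+2}$ in the inductive step. The only thing worth flagging in the write-up is the asymmetric coefficient $2|T\cap\bar{S}_r|w^*_a$ in part 2 versus $|T\cap\bar{S}_r|w^*_a$ in part 1: it arises solely because we rendezvous at $b$ rather than at $a$, costing one additional $w^*_a$ per term. The bound for the portion $T\setminus\bar{S}_r$ is the same $2|T\setminus\bar{S}_r|w^*_a$ in both cases, since once we reach $a$ the target $i$ is already $w^*_a$-close to $a$ regardless of which endpoint ($p$ or $q$) we started at.
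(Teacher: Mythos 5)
Your proof is correct and follows essentially the same route as the paper: split $T$ into $T\cap\bar{S}_r$ and $T\setminus\bar{S}_r$, route through $a$ (or $a$ then $b$) by triangle inequality, and use that $b$ being $a$'s top choice in $\mathcal{N}\setminus\bar{S}_r$ makes $w(a,z)\leq w^*_a$ for all $z\in\mathcal{N}\setminus\bar{S}_r$. The only cosmetic difference is that the paper packages the bound $w(i,j)\leq 2w^*_a$ for $i,j\notin\bar{S}_r$ as a citation to Lemma~\ref{lem_rsdedgebound}, whereas you inline the same two triangle-inequality hops through $a$.
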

\begin{proof}
\textbf{(Part I)} The proof proceeds as follows: remember that since $b$ is $a$'s most preferred node in $\mathcal{N} \setminus \bar{S}_r$, for any $i \notin \bar{S}_r$, $w(i,a) \leq w^*_a$. This includes $i = p$. Moreover, for any $i,j \notin \bar{S}_r$, $w(i,j) \leq 2w^*_a$ as per Lemma~\ref{lem_rsdedgebound}.
\begin{align*}
w(p,T) & = \sum_{j \in T \cap \bar{S}_r}w(p,j) + \sum_{j \in T \setminus \bar{S}_r}w(p,j)\\
& \leq \sum_{j \in T \cap \bar{S}_r}[w(p,a) + w(a,j)] + \sum_{j \in T\setminus \bar{S}_r}2w^*_a\\
& \leq \sum_{j \in T \cap \bar{S}_r}[w^*_a + w(a,j)] + 2|T \setminus \bar{S}_r|w^*_a\\
& \leq \sum_{j \in T \cap \bar{S}_r}w(a,j) + |T \cap \bar{S}_r|w^*_a + 2|T \setminus \bar{S}_r|w^*_a.
\end{align*}

\textbf{(Part II)} The proof of the second part is almost the same as the first, except that for any $i \notin \bar{S}_r$, we have that $w(b,i) \leq 2w^*_a$, once again as the product of Lemma~\ref{lem_rsdedgebound}.
\begin{align*}
w(q,T) & = \sum_{j \in T \cap \bar{S}_r}w(q,j) + \sum_{j \in T \setminus \bar{S}_r}w(q,j)\\
& \leq \sum_{j \in T \cap \bar{S}_r}[w(q,b) + w(b,j)] + \sum_{j \in T\setminus \bar{S}_r}2w^*_a\\
& \leq \sum_{j \in T \cap \bar{S}_r}[2w^*_a + w(b,j)] + 2|T \setminus \bar{S}_r|w^*_a\\
& \leq \sum_{j \in T \cap \bar{S}_r}w(b,j) + 2|T|w^*_a. \hfill\qed
\end{align*}

\end{proof}

\subsubsection*{Upper Bound on $OPT_{r+2}$ to Complete the Inductive Claim}

Now we express $OPT_{r+2} - OPT_r$ in terms of $\overline{alg}_{r+2}$ for the given instantiation.
\begin{lemma}
$$OPT_{r+2} - OPT_r \leq 6\overline{alg}_{r+2} + \frac{1}{r-1}OPT_r - \frac{6}{r-1}w(\bar{S}_r).$$
\end{lemma}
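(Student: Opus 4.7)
The plan is to unpack $OPT_{r+2}$ by peeling off the two nodes $p,q$ and then match the resulting pieces against the lower bound on $6\,\overline{alg}_{r+2}$ from Lemma~\ref{lem:alglb}. First I write $OPT_{r+2} = w(T) + w(p,T) + w(q,T) + w(p,q)$; since $|T| = r$, the set $T$ is a feasible $r$-subset and therefore $w(T) \leq OPT_r$, which immediately yields
\[
OPT_{r+2} - OPT_r \;\leq\; w(p,T) + w(q,T) + w(p,q).
\]

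Next I plug in Lemma~\ref{lem_dkssetcharging} for $w(p,T)$ and $w(q,T)$, and bound $w(p,q) \leq 2w^*_a$ via the triangle inequality (using that both $p$ and $q$ lie in $\mathcal{N}\setminus \bar{S}_r$ and that $w^*_a = w(a,b)$ controls the $a$-centered diameter of that set). Writing $k := |T \setminus \bar{S}_r| = |\bar{S}_r \setminus T|$, the three bounds add up to
\[
w(a, T\cap \bar{S}_r) + w(b, T\cap \bar{S}_r) + (3r + k + 2)\,w^*_a.
\]
Multiplying Lemma~\ref{lem:alglb} by $6$ gives the matching lower bound
\[
6\,\overline{alg}_{r+2} \;\geq\; w(a, \bar{S}_r\cap T) + w(b, \bar{S}_r\cap T) + (2r + k + 3)\,w^*_a + \tfrac{6}{r-1}\,w(\bar{S}_r).
\]
The $w(a,\cdot)$ and $w(b,\cdot)$ pieces cancel identically, and the $w^*_a$ coefficients collapse, leaving
\[
OPT_{r+2} - OPT_r - 6\,\overline{alg}_{r+2} \;\leq\; (r-1)\,w^*_a - \tfrac{6}{r-1}\,w(\bar{S}_r).
\]

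To conclude I would need the residual $(r-1)\,w^*_a$ to be absorbed into the slack $\tfrac{1}{r-1}\,OPT_r$ appearing on the right-hand side of the target inequality, equivalently $OPT_r \geq (r-1)^2 w^*_a$. \textbf{This last step is the main obstacle.} The natural witness $\{a,b\}\cup X$ with $X\subseteq \bar{S}_r$ of size $r-2$ gives only $OPT_r\geq (r-1) w^*_a$ (via $w(a,j)+w(b,j)\geq w^*_a$ summed over $j\in X$), a full factor of $r-1$ short. I expect to close the gap by a case split on whether $a$ or $b$ happens to lie in $OPT_{r+2}\setminus \bar{S}_r$: whenever one of them does I set $p=a$ or $q=b$ accordingly, which replaces the loose triangle-inequality detour inside Lemma~\ref{lem_dkssetcharging} (e.g.\ $w(q,b) \leq 2w^*_a$) by the exact equality $w(q,b)=0$, shrinking the residual to something non-positive outright. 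The genuinely delicate subcase is when neither $a$ nor $b$ is in $OPT_{r+2}$; I anticipate needing either a more refined averaging argument over the available pairs $\{p,q\}\subseteq OPT_{r+2}\setminus \bar{S}_r$, or an additional structural observation about the metric combined with the inductive hypothesis $OPT_r\leq 6\,Alg_r$, to close the final factor of $r-1$.
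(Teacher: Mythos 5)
Your computation is careful and correctly isolates the obstruction: using the single decomposition
\[
OPT_{r+2} - OPT_r \;\leq\; w(p,T) + w(q,T) + w(p,q)
\]
and then pushing everything through Lemma~\ref{lem_dkssetcharging} and Lemma~\ref{lem:alglb} really does leave a residual of exactly $(r-1)w^*_a$ that cannot be charged to anything. You are also right that the natural witness only gives $OPT_r \geq (r-1)w^*_a$, a full factor of $r-1$ short, so the direct route is genuinely stuck. The issue is that the remedy you sketch (a case split on whether $a$ or $b$ lies in $OPT_{r+2}\setminus\bar{S}_r$, or a refined averaging over $\{p,q\}$) is not where the slack comes from, and you flag yourself that the hard subcase is unresolved; as written this is a real gap.

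The fix in the paper is different and more surgical: they never pay for $w(q,T)$ at full weight. They write $OPT_{r+2}$ in \emph{two} ways,
\[
OPT_{r+2} \;\leq\; OPT_r + w(p,q) + w(p,T) + w(q,T)
\qquad\text{and}\qquad
OPT_{r+2} \;\leq\; OPT_{r+1} + w(p,q) + w(p,T),
\]
the second by treating $T\cup\{q\}$ as a feasible $(r+1)$-set, and then use Lemma~\ref{lem_dkskincreases} to replace $OPT_{r+1}$ with $OPT_r + \tfrac{2}{r-1}OPT_r$. Averaging the two displays gives
\[
OPT_{r+2} \;\leq\; OPT_r + \tfrac{1}{r-1}OPT_r + w(p,q) + w(p,T) + \tfrac{1}{2}w(q,T),
\]
so the coefficient on $w(q,T)$ drops to $\tfrac12$. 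Tracing this through Lemma~\ref{lem_dkssetcharging} as you did reduces the $w^*_a$ coefficient in the upper bound from $(3r+k+2)$ to $(2r+k+2)$, i.e.\ by exactly $r$, which more than absorbs your residual; meanwhile the $\tfrac{1}{r-1}OPT_r$ slack that you correctly suspected must appear somewhere is produced automatically by the detour through $OPT_{r+1}$. So the missing ingredient is not a case analysis on $a,b$ but the asymmetric decomposition through $OPT_{r+1}$ plus Lemma~\ref{lem_dkskincreases}.
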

\begin{proof}
Consider $OPT_{r+2}$ and remember that by definition $p,q \notin \bar{S}_r$. We can divide up $OPT_{r+2}$ in two ways.

\begin{align*}
OPT_{r+2} & = w(T) + w(p,q) + w(p,T) + w(q,T)\\
&  \leq  OPT_{r} + w(p,q) + w(p,T) + w(q,T). & \text{(First)}\\\\
OPT_{r+2} & = w(T \cup \{q\}) + w(p,q) + w(p,T) \\
& \leq  OPT_{r+1} + w(p,q) + w(p,T). & \text{(Second)}
\end{align*}

Clearly, $w(T) \leq OPT_r$ and $w(T \cup \{q\}) \leq OPT_{r+1}$. Further, applying Lemma~\ref{lem_dkskincreases}, we get that $OPT_{r+1} \leq OPT_r + \frac{2}{r-1}OPT_r.$ Using this to simplify the second inequality, we then add the simplified inequalities above and divide by two to get:

\begin{equation}
\label{eqn_optupper}
OPT_{r+2} \leq OPT_r + \frac{1}{r-1}OPT_r + w(p,q) + w(p,T) + \frac{1}{2}w(q,T).
\end{equation}

Next, we simplify the (two) rightmost terms in the RHS to reflect their dependence on $w^*_a$, which we recall is the weight of the maximum weight edge containing $a$ in $\mathcal{N} \setminus \bar{S}_r$. Applying Lemma~\ref{lem_dkssetcharging} (Part 1), we get

\begin{equation}
\label{eqn_p_upper}
w(p,q) + w(p,T) \leq 2w^*_a + w(a,\bar{S}_r \cap T) + |T \cap \bar{S}_r|w^*_a + 2|T \setminus \bar{S}_r|w^*_a.
\end{equation}
Similarly, using the second half of Lemma~\ref{lem_dkssetcharging}, we bound $w(q,T)$.
\begin{equation}
\label{eqn_q_upper}
\frac{1}{2}w(q,T) \leq \frac{1}{2}w(b, T \cap \bar{S}_r) + |T \cap \bar{S}_r|w^*_a + |T \setminus \bar{S}_r|w^*_a.
\end{equation}

Adding Equations~\ref{eqn_p_upper} and~\ref{eqn_q_upper}, and substituting the result into Equation~\ref{eqn_optupper}, we have that

$$OPT_{r+2} - OPT_{r} \leq \frac{1}{r-1}OPT_r + 3(|T \setminus \bar{S}_r| + 1)w^*_a + w(a,\bar{S}_r \cap T) + w(b,\bar{S}_r \cap T) + 2|T \cap \bar{S}_r|w^*_a$$

Recall from Lemma \ref{lem:alglb} that $\overline{alg}_{r+2} \geq \frac{1}{6}w(a, T \cap \bar{S}_r) + w(b, T \cap \bar{S}_r)] + \frac{1}{3}|T \cap \bar{S}_r|w^*_a + \frac{1}{2}(|T \setminus \bar{S}_r| + 1)w^*_a + \frac{1}{r-1}w(\bar{S}_r)$. Therefore, we get our required lemma by writing the RHS of our upper bound on $OPT_{r+2} - OPT_{r}$ in terms of $6\overline{alg}_{r+2}$.

$$OPT_{r+2} - OPT_r \leq \frac{1}{r-1}OPT_r+ 6\overline{alg}_{r+2} - \frac{6}{r-1}w(\bar{S}_r) \hfill\qed$$

\end{proof}

Having wrapped up our upper bound, we are ready to prove our actual inductive claim.

From our upper bound, we have that for every possible realization of $S_r$, $\Delta_{r+2}$, we know that $OPT_{r+2} - OPT_r \leq \frac{1}{r-1}OPT_r+ 6\overline{alg}_{r+2} - \frac{6}{r-1}w(\bar{S}_r).$ Now, we push to complete our proof,

$$OPT_{r+2} - OPT_r \leq \frac{1}{r-1}OPT_r + E_{S_r, \Delta_{r+2}}[6\overline{alg}_{r+2} - \frac{6}{r-1}w(S_r)].$$

The term inside the expectation is clearly $6alg_{r+2} - \frac{6}{r-1}E[w(S_r)] \leq 6alg_{r+2} - \frac{1}{r-1}OPT_{r}$; the final inequality is a result of the induction hypothesis. The term $\frac{1}{r-1}OPT_{r}$ cancels out, giving us our desired claim

$$OPT_{r+2} \leq OPT_r + 6alg_{r+2} \leq 6E[w(S_r)] + 6alg_{r+2} = 6E[w(S_{r+2})] = 6Alg_{r+2}. \hfill\qed $$

%

Therefore our hybrid algorithm for Densest $k$-subgraph always returns a solution that is within a sixth of the optimum densest subgraph.

\subsection{Inductive Claim when $w^*_a = w(a,x) > w(a,b)$}
\label{subsec_inductivebadcase}

Suppose that $x$ is $a$'s most preferred node in $\mathcal{N} \setminus \bar{S_r}$. Claim for claim, the proof proceeds in the same way as above except that the role played by $b$ in the previous proof is now played by $x$. We go over the proof of this case for completeness. Assume the same notation as before, and consider the following lower bound on $\overline{alg}_{r+2}$.

\subsubsection*{Part 1: Lower Bound}

\begin{equation}
\label{eqn_altcase1}
\overline{alg}_{r+2} \geq \frac{1}{6}[w(a, T \cap \bar{S}_r) + w(x,T \cap \bar{S}_r)] + \frac{1}{3}|T \cap \bar{S}_r|w^*_a + \frac{1}{2}(|T \setminus \bar{S}_r | + 1)w^*_a + \frac{1}{r-1}w(\bar{S}_r)
\end{equation}

To see why Equation~\ref{eqn_altcase1} is true: first notice that $w(a,T \cap \bar{S}_r) + w(x,T\cap \bar{S}_r) \geq |T \cap \bar{S}_r|w^*_a$, and $w(a,T \setminus \bar{S}_r) + w(x,T\setminus \bar{S}_r) \geq |T \setminus \bar{S}_r|w^*_a$ by a direct application of the triangle inequality. Therefore, from these two inequalities we glean that

$$\frac{1}{2}[w(a,\bar{S}_r) + w(x,\bar{S}_r)] \geq \frac{1}{6}[w(a,T \cap \bar{S}_r) + w(x,T \cap \bar{S}_r)] + \frac{1}{3}|T \cap \bar{S}_r|w^*_a + \frac{1}{2}|T \setminus \bar{S}_r|w^*_a.$$

The remaining terms in $\overline{alg}_{r+2}$ are $\frac{1}{2}[w(x,\bar{S}_r) + w(b,\bar{S}_r) + w(a,x) + w(b,x)]$: $(i)$ From Lemma~\ref{lem_pointtoset}, $\frac{1}{2}w(x,\bar{S}_r) \geq \frac{1}{2(r-1)}w(\bar{S}_r)$, $(ii)$ Via the same lemma, $\frac{1}{2}w(b,\bar{S}_r) \geq \frac{1}{2(r-1)}w(\bar{S}_r)$, and $(iii)$ $\frac{1}{2}(w(a,x) + w(b,x)) \geq \frac{1}{2}w^*_a$ since $w(a,x) = w^*_a$. Adding $(i)$, $(ii)$, $(iii)$ with our lower bound for $\frac{1}{2}[w(a,\bar{S}_r) + w(x,\bar{S}_r)]$ completes the proof of the first part, i.e., Equation~\ref{eqn_altcase1}.

\subsubsection*{Part 2: Simplifying Lemma}

Now, we make the second claim for set $T$ as defined previously and $q \notin \bar{S}_r$, also as defined previously.

\begin{equation}
\label{eqn_altcase2}
w(q,T) \leq w(x, T \cap \bar{S}_r) + 2|T \cap \bar{S}_r|w^*_a + 2|T \setminus \bar{S}_r|w^*_a.
\end{equation}

The proof is exactly the same as in Lemma~\ref{lem_dkssetcharging} so we do not restate it. Once again, the main observation here is that for any $j \notin \bar{S}_r$, $w(q,j) \leq 2w^*_a$ from Lemma~\ref{lem_rsdedgebound}.

\subsubsection*{Part 3: Upper Bound}

Now we are ready to complete the proof. Let us begin by restating Equation~\ref{eqn_optupper}, which is a generic condition and does not depend on $a$, $x$ or $b$: $$OPT_{r+2} \leq OPT_r + \frac{1}{r-1}OPT_r + w(p,q) + w(p,T) + \frac{1}{2}w(q,T).$$

From Lemma~\ref{lem_dkssetcharging}, we have that $(iv)$ $w(p,q) + w(p,T) \leq w(a, T \cap \bar{S}_r) + |T \cap \bar{S}_r|w^*_a + 2|T \setminus \bar{S}_r + 1|w^*_a$; $(v)$ From Equation~\ref{eqn_altcase2}, $\frac{1}{2}w(q,T) \leq w(x, T \cap \bar{S}_r)+ |T \cap \bar{S}_r|w^*_a + |T \setminus \bar{S}_r|w^*_a.$

Adding these two equations, the rest of the proof follows as from before. 
\end{proof}

\subsection{A $2$-approximation algorithm for $k$-Sum Clustering}\label{sec:ksum}
In the literature, the $k$-sum clustering problem has only been studied in a full information setting, sometimes amidst the class of dispersion problems~\cite{hassin1997approximation}. The best known approximation algorithm for this is a $2$-approximation that uses the optimum matching as an intermediate. Instead, we give a \emph{much simpler} algorithm with the same factor that is completely oblivious to the input, and is therefore truthful. Although the analysis of the algorithm involves new upper bounds on the optimum solution, it is still not difficult, so we include this result mostly for completeness.

Recall that in the Max $k$-Sum problem, we are provided as input a vector of preference lists $P(\mathcal{N})$ along with a positive integer $2 \leq k \leq \frac{N}{2}$ with the objective being to partition the set of points $\mathcal{N}$ into $k$ equal-sized clusters (of size $\gamma = \frac{N}{k}$; we assume that $N$ is divisible by $k$) $S=(S_1, \ldots, S_k)$ to maximize $\sum_{i=1}^k \sum_{x,y \in S_i} w(x,y).$

\begin{theorem}
\label{thm:maxksum}
There exists an ordinal universally truthful $2$-approximation algorithm for the $k$-sum clustering problem.
\end{theorem}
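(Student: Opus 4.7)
\textbf{Proof plan for Theorem~\ref{thm:maxksum}.}

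Since the algorithm is required to be oblivious to the preferences (and is therefore automatically universally truthful), the natural candidate is to simply output a \emph{uniformly random partition} of $\mathcal{N}$ into $k$ clusters of size $\gamma = N/k$. Operationally, one draws a uniformly random permutation of $\mathcal{N}$ and cuts it into consecutive blocks of size $\gamma$; since no agent's reported preference ever influences the output, universal truthfulness is immediate. The rest of the proof is to show this oblivious procedure is a $2$-approximation in expectation, which I expect to follow from a single triangle-inequality bound on $\mathrm{OPT}$ together with an elementary computation of the expected objective value.

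The expectation is easy: for any pair $x,y$, the probability that $x$ and $y$ land in the same block of a uniformly random $\gamma$-partition equals $(\gamma-1)/(N-1)$ (condition on $x$'s position and note that $y$ is equally likely to occupy any of the remaining $N-1$ slots, of which $\gamma-1$ lie in $x$'s block). Hence, letting $W_{\mathrm{tot}} := \sum_{x,y} w(x,y)$,
\begin{equation*}
\mathbb{E}[w(\mathrm{Alg})] \;=\; \frac{\gamma-1}{N-1}\,W_{\mathrm{tot}}.
\end{equation*}

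The main step is the new upper bound on $\mathrm{OPT}$. Fix any optimum cluster $S$ of size $\gamma$. For each unordered pair $\{u,v\}\subseteq S$ and each $z\notin S$, the triangle inequality gives $w(u,v)\le w(u,z)+w(v,z)$. Summing first over $z\in V\setminus S$ (yielding a factor of $N-\gamma$ on the left), then over pairs $\{u,v\}\subseteq S$ (which, on the right, causes each term $w(u,z)$ to appear $\gamma-1$ times as $u$ is paired with the remaining $\gamma-1$ members of $S$), I obtain
\begin{equation*}
(N-\gamma)\!\!\sum_{\{u,v\}\subseteq S} w(u,v)\;\le\;(\gamma-1)\,w(S,V\setminus S).
\end{equation*}
Summing this inequality over the $k$ optimum clusters and using the fact that $\sum_i w(S_i,V\setminus S_i)$ counts every cross-cluster edge exactly twice, so it equals $2(W_{\mathrm{tot}}-\mathrm{OPT})$, yields $(N-\gamma)\mathrm{OPT}\le 2(\gamma-1)(W_{\mathrm{tot}}-\mathrm{OPT})$, hence
\begin{equation*}
\mathrm{OPT}\;\le\;\frac{2(\gamma-1)}{N+\gamma-2}\,W_{\mathrm{tot}}.
\end{equation*}

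Finally, dividing the two bounds gives $\mathrm{OPT}/\mathbb{E}[w(\mathrm{Alg})]\le 2(N-1)/(N+\gamma-2)\le 2$, with the last inequality holding whenever $\gamma\ge 1$ (and being strict for $\gamma\ge 2$). The only place where any real thought is required is the triangle-inequality double count that produces the $(N+\gamma-2)$ denominator; once that is correctly bookkept, the $2$-approximation drops out. Truthfulness requires no separate argument because the algorithm never reads the reported preferences.
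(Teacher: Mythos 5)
Your proof is correct, and it uses the same random-partition algorithm as the paper, so truthfulness is identical. Where you diverge is in the upper bound on $\mathrm{OPT}$. The paper sums the triangle inequality $w(x,y)\le w(x,z)+w(y,z)$ over \emph{all} $z\in\mathcal{N}$ (including $z$ inside the cluster containing $x,y$), which after counting gives $N\cdot\mathrm{OPT}\le 2(\gamma-1)W_{\mathrm{tot}}$ and hence $\mathrm{OPT}\le\frac{2(\gamma-1)}{N}W_{\mathrm{tot}}$. You instead restrict to $z\notin S$, which makes the right-hand side involve only cross-cluster edges; recognizing that $\sum_i w(S_i,V\setminus S_i)=2(W_{\mathrm{tot}}-\mathrm{OPT})$ lets you move an $\mathrm{OPT}$ term to the left and obtain $\mathrm{OPT}\le\frac{2(\gamma-1)}{N+\gamma-2}W_{\mathrm{tot}}$, a strictly tighter inequality for $\gamma\ge 2$. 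Both are honest triangle-inequality arguments; yours is slightly more careful about which $z$'s you invoke and, as a byproduct, gives an approximation ratio of $\frac{2(N-1)}{N+\gamma-2}$ rather than just $2$. The paper's version is marginally shorter to state and already suffices for the theorem, while yours makes explicit how the guarantee improves as the clusters get larger (and in particular recovers the fact that for $k=N/2$, i.e. $\gamma=2$, the bound matches the known $2$-approximation for random perfect matching, while for smaller $k$ it is better). The two proofs are otherwise structurally parallel (same algorithm, same expectation computation, same style of upper bound), so this is a refinement of the paper's argument rather than a fundamentally different one.
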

\begin{proof}
We use a simple approach that picks sets (clusters) of size $\gamma$ uniformly at random.
\begin{enumerate}
\item For $i=1$ to $k$
\item Choose $\gamma$ nodes uniformly at random from $\mathcal{N}$.
\item Remove these nodes from $\mathcal{N}$, and add them to $S_i$.
\item Return the final solution $S$.
\end{enumerate}


\begin{lemma}
(Lower Bound) The expected value of the objective function for the clustering returned by our algorithm $(S_1, \ldots, S_k)$ is exactly
$$\frac{\gamma-1}{N-1}\sum_{(x,y) \in \mathcal{N} \times \mathcal{N}}w(x,y).$$
\end{lemma}
\begin{proof}
We proceed via a symmetry argument although it is not hard to verify that the same bound can be obtained via a more exhaustive counting argument. First, by linearity of expectation, we have that the value of the objective (in expectation) is $\sum_{(x,y) \in \mathcal{N} \times \mathcal{N}}w(x,y)Pr(x,y \in S_i)$ where the second term is the probability that $x$ and $y$ belong to the same cluster in $S$. Using a symmetry argument (since our process chooses edges uniformly at random), we claim that the probability $Pr(x,y \in S_i)$ is the same for every $x,y \in \mathcal{N}$.

Now, fix any arbitrary node $x \in \mathcal{N}$: since there $\gamma-1$ other nodes in the same cluster as $x$, this means that $\sum_{y \neq x}Pr(x,y \in S_i) = \gamma-1$. Therefore, for every $(x,y)$, $Pr(x,y \in S_i) = \frac{\gamma-1}{N-1}$. Substituting this in the expected value of the objective function gives us the desired result.
\end{proof}

\begin{lemma}
(Upper Bound) Suppose that $O=(O_1, \ldots, O_k)$ is the optimum solution for a given instance of the Max $k$-sum problem. Then, we have the following upper bound on the value of the optimum solution
$$\sum_{i=1}^k \sum_{x,y \in O_i}w(x,y) \leq \frac{2(\gamma-1)}{N-1}\sum_{(x,y) \in \mathcal{N} \times \mathcal{N}}w(x,y).$$
\end{lemma}
\begin{proof}
Suppose that $x$ and $y$ are two nodes belonging to the same cluster in $O$. Then, by the triangle inequality, we have that for every $z \in \mathcal{N}$ (including $x$ and $y$), $w(x,z) + w(y,z) \geq w(x,y)$. Summing this up over all $z \in \mathcal{N}$, we have $\sum_{z \in \mathcal{N}}(w(x,z) + w(y,z)) \geq Nw(x,y)$. Repeating this process over all $(x,y) \in S$ and $z \in \mathcal{N}$, we get

\begin{align*}
\sum_{i=1}^k \sum_{x,y \in S_i} \sum_{z \in \mathcal{N}}(w(x,z) + w(y,z)) & \geq N \sum_{i=1}^k \sum_{x,y \in O_i}w(x,y)\\
& = N OPT.
\end{align*}

Now, given some edge $w(x,z)$, how many times does this edge appear in the LHS? Without loss of generality, suppose that $x \in O_i$ and $z \in O_j$. Then, $x$ has $\gamma-1$ edges inside $O_i$ and $w(x,z)$ appears once in the LHS for each of these neighboring edges. Similarly, $z$ has $\gamma-1$ edges inside $O_j$ and $w(x,z)$ appears once in the LHS for each edge. Therefore, for every $x,z \in \mathcal{N}$, $w(x,z)$ appears $2(\gamma-1)$ times in the LHS of the above equation. Substituting this, we prove the lemma,
$$\sum_{x,y \in \mathcal{N}}2(\gamma-1)w(x,y) \geq NOPT.$$

\end{proof}
The rest of the theorem follows immediately from the two lemmas.
\end{proof}

\subsection{Max Traveling Salesman Problem}\label{sec:truthfulTSP}
The max traveling salesman problem has received a lot of attention in the literature despite not being as popular as the minimization variant, and has seen a plethora of algorithms for both the metric and the non-metric versions \cite{kosarajuPS94,kowalik2009deterministic}. Such algorithms usually work by looking at the optimum matching and cycle cover and cleverly interspersing the two solutions to form a Hamiltonian cycle. In adapting this approach to our setting, we would be bottlenecked by the best possible ordinal algorithms for the above two problems. Instead, we take a direct approach towards computing a tour and show that a simple algorithm based on Serial Dictatorship results in a $2$-approximation factor.

\begin{algorithm}[htbp]
{Initialize $T$ to be a random edge from the complete graph on $\mathcal{N}$\;
 Let $S$ be the set of available agents initialized to $\mathcal{N}$}\;
\While{$S \neq \emptyset$}{
pick one of the end-points of $T$, say $x$ \;
let $y$ denote $x$'s most preferred agent in $S$;
add $(x,y)$ to $T$ and remove $y$ from $S$\;
}
Complete $T$ to form a Hamiltonian cycle\;
\label{alg_tsp}
\caption{Serial Dictatorship for Max TSP}
\end{algorithm}

\begin{theorem}
Algorithm~\ref{alg_tsp} is a universally truthful mechanism that provides a $2$-approximation to the optimum tour. Moreover, the algorithm provides a $(2+\epsilon)$-approximation, where $\epsilon \to 0$ as $N \to \infty$, even when the edge weights do not obey the metric assumption.
\end{theorem}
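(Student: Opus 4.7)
My plan begins with universal truthfulness, which follows from the fact that agent $i$'s reported preference list is consulted only in the single round (if any) in which the algorithm selects $i$ as the endpoint to extend from, and that this selection depends solely on the internal rule and the random realization, not on any agent's report. In that round $i$ adds the edge $(i,y)$ with $y$ being $i$'s top available choice; any misreport replaces $y$ by a strictly less-preferred available agent, strictly lowering $w(i,y)$. The other edge at $i$ in the final tour---either the agent who picked $i$ when $i$ first entered the path, or the other final endpoint via the closing edge---is determined by other agents' reports and the fixed realization, independent of $i$'s own report. Since this holds for every realization of the randomness, the mechanism is universally truthful.

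\noindent\textbf{Metric $2$-approximation.} For the approximation bound I would fix the random realization, label the nodes $v_0,v_1,\ldots,v_{N-1}$ in order of insertion into the path, and adopt WLOG the implementation in which the algorithm always extends from the newer endpoint, so that $(v_0,v_1)$ is the initial random edge, $(v_i,v_{i+1})$ is the extension in which $v_i$ picked $v_{i+1}$ as top available (so $w(v_i,v_{i+1}) \geq w(v_i,v_j)$ for every $j>i$), and the closing edge is $(v_0,v_{N-1})$. The core step is a charging argument: for each OPT edge $(v_a,v_b)$ with $1 \leq a<b$, bound $w(v_a,v_b) \leq w(v_a,v_{a+1})$ by greedy, and since every vertex has OPT-degree $2$, no extension edge is charged by more than two OPT edges. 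For the two OPT edges incident to $v_0$ I would use the triangle inequality $w(v_0,v_j) \leq w(v_0,v_1) + w(v_1,v_j) \leq w(v_0,v_1) + w(v_1,v_2)$ (the last inequality uses the greedy property at $v_1$ for $j \geq 2$). Summing all contributions and comparing to $w(T) = w(v_0,v_1) + \sum_{i=1}^{N-2} w(v_i,v_{i+1}) + w(v_0,v_{N-1})$, the goal is to show that the overhead of using $w(v_1,v_2)$ a second time and the extra copies of $w(v_0,v_1)$ are absorbed by the ``unused budget'' coming from the initial and closing edges, yielding $\mathrm{OPT} \leq 2\,w(T)$.

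\noindent\textbf{Non-metric $(2+\epsilon)$ and main obstacle.} Without the triangle inequality, the two OPT edges at $v_0$ cannot be detoured through $(v_0,v_1)$ and $(v_1,v_2)$, so I would instead bound them by their own weights and take expectations over the random initial edge. Since the initial edge is uniform over $\binom{N}{2}$ pairs and (after symmetrizing the labels within the pair) $v_0$ is uniformly distributed over $\mathcal{N}$, by linearity $\mathbb{E}[w(v_0,a_{v_0}) + w(v_0,b_{v_0})] = \tfrac{1}{N} \sum_v \bigl(w(v,a_v) + w(v,b_v)\bigr) = 2\,\mathrm{OPT}/N$, where $a_v,b_v$ denote the two OPT neighbors of $v$. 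Plugging this into the same greedy charging gives the ratio $2 + O(1/N)$, matching $\epsilon \to 0$ as $N \to \infty$. The main obstacle is the metric-case accounting: each extension edge $w(v_a,v_{a+1})$ may be simultaneously invoked as the direct bound for up to two OPT edges of $v_a$ and (when $a=1$) as part of the triangle-inequality detour for a $v_0$-incident OPT edge, so the factor of $2$ is attained only if one carefully verifies that the initial-edge and closing-edge contributions inside $w(T)$---which are counted in $w(T)$ but never appear on the right-hand side of any greedy inequality used in the charging---exactly compensate for this overlap. Once this balancing is checked, both the metric and non-metric statements follow from the same skeleton.
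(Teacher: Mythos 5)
Your truthfulness argument and your non-metric $(2+\epsilon)$ argument both line up with the paper's; the paper likewise symmetrizes the initial edge by flipping a coin to declare one endpoint the ``dead node'' (your $v_0$), and then observes that in expectation the two OPT edges at the dead node carry total weight $\tfrac{2}{N}\,w(T^*)$.

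The gap is in your metric accounting, and it is exactly the place you flag as the main obstacle. Your proposed detour $w(v_0,v_j)\le w(v_0,v_1)+w(v_1,v_2)$ (for $j\ge 2$) charges the two $v_0$-incident OPT edges to the initial edge $(v_0,v_1)$ and to $v_1$'s extension edge $(v_1,v_2)$. But when neither OPT-neighbor of $v_0$ is $v_1$, the edge $(v_1,v_2)$ already receives two first-phase charges (both of $v_1$'s OPT edges go to higher-indexed nodes), so after your detour it carries up to four. The only slack in $2\,w(T)$ is the never-charged closing edge, so the balancing you hope for reduces to $w(v_1,v_2)\le w(v_0,v_{N-1})$, and there is no reason for a greedily-picked extension edge to be dominated by the forced closing edge. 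So the accounting as sketched does not close.

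The paper's proof uses a different detour that does close. For the dead node's OPT edge to $v_a$, it writes $w(v_0,v_a)\le w(v_0,v_{N-1})+w(v_{N-1},v_a)\le w(v_0,v_{N-1})+w(v_a,v_{a+1})$; the first triangle step goes through the \emph{closing} edge, and the second step uses that $v_{N-1}$ was still available when $v_a$ made its greedy pick. This is the right routing for two reasons. The closing edge $(v_0,v_{N-1})$ is never touched in phase one, so it can absorb both dead-node detours. And the extension edge $(v_a,v_{a+1})$ of an OPT-neighbor of $v_0$ has at most one phase-one charge, because one of $v_a$'s two OPT edges is the dead-node edge $(v_0,v_a)$, which is withheld from phase one. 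So after both phases every edge of $T$ carries at most two charges, the initial edge carries zero, and $\mathrm{OPT}\le 2\,w(T)$ follows immediately. In short: detour through the closing edge and the OPT-neighbor's own extension edge, not through the initial edge and $v_1$'s extension edge. With that one change your skeleton matches the paper's proof.
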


\noindent It is easy to see that this algorithm is truthful: when an agent $i$ is asked for its preferences, the first edge of $T$ incident to agent $i$ has already been decided, so $i$ cannot affect it. Thus, to form the second edge of $T$ incident to $i$, it may as well specify its most-preferred edge. Note that the randomization in the first step is {\em essential:} if we had selected the first edge based on the input preferences, then the first node could improve its utility by lying, and the algorithm would no longer be strategy-proof.

\section{Non-Truthful Ordinal Mechanisms}
In this section we consider the case when agents are not able to lie, i.e., the algorithm knows their true ordinal preferences $P_i$, but is still ignorant of the hidden underlying metric utilities which induce those preferences. Designing algorithms for this setting captures the true power of ordinal information, as the necessity for approximation arises from the fact that the algorithm only has limited ordinal information at its disposal, as opposed to the agents being self-interested. This can occur due to the fact that specifying ordinal preferences is much easier than specifying the full numerical utility information; in fact even in the case when such latent numerical utilities exist, it may be difficult for the agents to quantify them precisely. This is the setting studied in papers such as \cite{anshelevichBP15,boutilierCHLPS15,procacciaR06,randomized,regret}; in \cite{anshelevichS16}, the authors previously gave ordinal approximation algorithms for Densest $k$-Subgraph and Max TSP with approximation factors of 4 and 2.14; here we improve the TSP approximation factor to 1.88 and give a new ordinal bicriteria approximation algorithm which shows that by relaxing the set size $k$ by a small amount for Densest $k$-Subgraph, a much better approximation can be achieved.

\subsection{Densest $k$-Subgraph}
\subsubsection*{$4$-approximation Algorithm}
We begin by presenting an extremely simple $4$-approximation algorithm for this problem; while not explicitly mentioned there, it was alluded to in \cite{anshelevichS16}. Given an input $k$, the algorithm essentially computes a $2$-approximate maximum matching with exactly $\frac{k}{2}$ edges using the algorithm from Theorem~\ref{thm_rsdmaxk}. We then show that the $k$ nodes that make up these edges provide a $4$-approximation to the optimum solution for this problem. Unfortunately, despite the truthfulness of RSD for Max $k$-matching, it is easy to construct examples where this mechanism is no longer truthful for Densest subgraph.

\subsubsection*{Bicriteria Approximation}
\label{sec_bicriteria}
Among the problems that we study, Densest $k$-subgraph naturally lends itself to bicriteria approximation algorithms. For instance, when we construct committees, a little additional leverage on the committee size may lead to much more diverse committees. Formally, given a parameter $k$, an algorithm for Densest $k$-subgraph is said to be a bi-criteria $(\alpha, \beta$) approximation if the objective value of the solution $S$ output by the algorithm for every instance is at least a factor $\frac{1}{\alpha}$ times that of the optimum solution of size $k$, and if $|S| \leq \beta k$, for $\alpha, \beta \geq 1$.
Here, we give bounds on how $\alpha$ decreases when $\beta$ increases. In particular, we show that if we are allowed to choose a committee of size $2k$, the value of our solution is equal to the optimum solution of size $k$. If instead we {\em must} form a committee of size exactly $k$, then this results in an ordinal 4-approximation algorithm.

\begin{theorem}
We can efficiently compute an ordinal $(\frac{4}{\beta^2}, \beta)$-approximate solution for the Densest $k$-subgraph problem for $\beta \leq 2$, i.e., a solution of size $\beta k$, whose value is at least $\frac{\beta^2}{4}$ times that of the optimum solution of size $k$.
\end{theorem}

\noindent The algorithm that provides us the approximation factor is simple: we compute a greedy matching of size $\frac{\beta k}{2}$, and return its endpoints. However the analysis is quite involved. One of the salient features of this result is that a small change in $\beta$ results in a super-linear improvement in efficiency.
For example, in order to obtain a $2$-approximation to the Densest $k$-subgraph, it is enough to compute a set of size $\sim 1.4k$.

\vskip 2pt\noindent{\em Proof Sketch:~}  The proof involves carefully charging different sets of node distances in the optimal solution to node distances in our solution. So, before giving the main proof, we provide a series of very general charging lemmas. We define a new helpful tool which we call the {\em top-intersecting matching}; we are able to use this to establish various bounds which yield our result for Densest $k$-subgraph. We believe that this tool and the bounds we show using it may be useful in forming other ordinal approximations.

Specifically, given a matching $M$, we will use $N(M)$ to denote the set of nodes which form the endpoints of the edges in $M$. Suppose that we are provided a matching $M$ of some given size, and a set $B \subseteq N(M)$. Now, given an integer $t \leq |B|$, define $M(t,B)$ to be the top (i.e., highest weight) $t$ edges in $M$, such each edge in $M(t,B)$ contains at least one node from $B$. We refer to $M(t,B)$ as the top-intersecting matching. In the proof, we highlight the versatility of the top-intersecting matching by charging different sets of inter-node distances to this matching. Afterwards, we use these charging lemmas to prove the main theorem. To give the flavor of these arguments, we provide some of the upper bounds below. Here we assume that $M$ is a greedy matching of size $k$, initialized with the complete edge set.

\begin{lemma}
Suppose that $M$ is a greedy matching, and suppose that $B$ and $C$ are two disjoint sets such that $B \subseteq N(M)$ with $|B|=2m$, and $C \cap N(M) = \emptyset$. Then the following bounds hold,
\begin{eqnarray*}
\sum_{x,y \in B}w(x,y) \leq \sum_{x,y \in N(M(m,B)) \cap B}w(x,y) + \frac{5r}{2}w(M(m,B))\\
\sum_{x \in B, y \in C}w(x,y) \leq 2|C|w(M(m,B))\\
\sum_{x ,y \in C}w(x,y) \leq \frac{(|C|^2)}{|M| - m}w(M \setminus M(m,B))
\end{eqnarray*}
where $r = |B \setminus N(M(m,B))|$.
\end{lemma}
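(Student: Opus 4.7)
The plan is that all three bounds will rest on the greedy/undominated property of $M$: each time an edge $(u, v)$ is inserted, both endpoints prefer it over every edge to any remaining node, so $w(u, v) \geq w(u, z)$ and $w(u, v) \geq w(v, z)$ for every $z$ still available at that moment. I will exploit that $C \cap N(M) = \emptyset$ implies every $y \in C$ is unmatched throughout the run, hence available when every edge of $M$ is inserted; consequently $w(u, v) \geq w(u, y)$ and $w(u, v) \geq w(v, y)$ for every $(u, v) \in M$ and every $y \in C$. A second tool will be that $M(m, B)$ consists of the top-weight $m$ edges of $M$ touching $B$, so any edge of $M \setminus M(m, B)$ touching $B$ has weight at most $\min_{e \in M(m,B)} w(e) \leq w(M(m, B))/m$. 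I will also use the partition $B = B_1 \sqcup B_2$ with $B_1 = B \cap N(M(m, B))$ and $B_2 = B \setminus N(M(m, B))$, noting that if $k$ edges of $M(m, B)$ have both endpoints in $B$ then $|B_1| = m + k$ and $r = |B_2| = m - k \leq m$.

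Bound 3 is the cleanest, so I will attack it first: for $x, y \in C$ and any $(a, b) \in M$, the triangle inequality plus the greedy property applied twice give $w(x, y) \leq w(x, a) + w(a, y) \leq 2 w(a, b)$. Averaging this inequality over the $|M| - m$ edges of $M \setminus M(m, B)$ and then summing over the $\binom{|C|}{2}$ unordered pairs of $C$ will yield the stated bound. For bound 2, I will fix $y \in C$ and split the $B$-sum according to $B_1, B_2$. On the $B_1$ piece, each edge $(a_i, b_i) \in M(m, B)$ supplies $w(a_i, y) + w(b_i, y) \leq 2 w(a_i, b_i)$, so summing over $M(m, B)$ dominates $\sum_{x \in B_1} w(x, y)$ by $2 w(M(m, B))$. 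On the $B_2$ piece I will use $w(x, y) \leq w(e_x) \leq w(M(m, B))/m$ together with $r \leq m$; summing then over $y \in C$ will close the bound.

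Bound 1 will require more bookkeeping. I will decompose $\sum_{x, y \in B} w(x, y)$ into the three contributions from $B_1 \times B_1$, $B_1 \times B_2$, and $B_2 \times B_2$, identifying the first directly with the first term on the right-hand side. For the cross term, $x \in B_1$ and $y \in B_2$ are both matched, and considering the earlier of their matching rounds gives $w(x, y) \leq \max(w(e_x), w(e_y)) = w(e_x)$, since $e_x \in M(m, B)$ and $e_y \notin M(m, B)$; summing, together with the fact that each edge of $M(m, B)$ contributes to at most two nodes of $B_1$, will yield at most $2 r \cdot w(M(m, B))$. For the last sum, both endpoints live in $B_2$ with matching edges in $M \setminus M(m, B)$; the greedy argument plus the top-$m$ property give $w(x, y) \leq w(M(m, B))/m$, and the $\binom{r}{2}$ such pairs will contribute at most $\tfrac{r}{2} w(M(m, B))$ using $r \leq m$. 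Summing the two cross contributions produces exactly $\tfrac{5r}{2} w(M(m, B))$.

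The hardest part will be keeping track of which nodes are provably still available at the moment a given greedy edge is inserted. The algorithm's ordering of insertions is adversarial, so the greedy inequality can only be invoked against nodes whose availability at that moment is certain; this is what forces the three-way case split between $C$-nodes (always available), $B_2$-nodes (available only until $e_x$ is added), and $B_1$-nodes. The combinatorial levers $r \leq m$ and the top-$m$ bound on edges of $M \setminus M(m, B)$ touching $B$ are the glue that makes every sub-bound close with the stated constants, and the same style of greedy/triangle argument will recur throughout.
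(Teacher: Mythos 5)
Your plan for bounds~1 and~3 is correct and checks out. In fact, your bound~3 argument is a bit more direct than the paper's: you apply the triangle inequality and the undominated-edge inequality pointwise to get $w(x,y)\leq 2w(a,b)$ for every $(a,b)\in M\setminus M(m,B)$ and then average, whereas the paper routes the same estimate through an auxiliary optimum matching $M^*(C)$ on $C$; the two approaches give the same constant. Your bound~1 decomposition into $B_1\times B_1$, $B_1\times B_2$, $B_2\times B_2$ with the observations that $w(x,y)\leq w(e_x)$ (using $e_x\in M(m,B)\succeq e_y$), that each edge of $M(m,B)$ covers at most two nodes of $B_1$, and that $\binom{r}{2}\cdot\frac{1}{m}\leq \frac{r}{2}$, is essentially the paper's argument and closes with exactly $\frac{5r}{2}w(M(m,B))$.

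Bound~2, however, has a genuine gap. As you have written it, you bound the $B_1$ piece by $2w(M(m,B))$ and the $B_2$ piece \emph{independently} by $r\cdot\frac{w(M(m,B))}{m}\leq w(M(m,B))$, which after summing over $y\in C$ gives only $3|C|\,w(M(m,B))$, not the required $2|C|\,w(M(m,B))$. The problem is that the $B_1$ bound of $2w(M(m,B))$ is only tight when every edge of $M(m,B)$ has \emph{both} endpoints in $B$; when some edges have a single endpoint in $B$ there is unused ``capacity'' on those edges, and that slack is exactly what should absorb the $B_2$ contribution. Concretely, split $M(m,B)=D\sqcup S$ into double-$B$ edges ($|D|=k$) and single-$B$ edges ($|S|=m-k$), so $|B_1|=m+k$ and $r=m-k$. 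Then $\sum_{x\in B_1}w(e_x)=2w(D)+w(S)$ (not $2w(M(m,B))$), and $\sum_{x\in B_2}w(e_x)\leq r\cdot\min_{e\in M(m,B)}w(e)\leq r\cdot\frac{w(S)}{|S|}=w(S)$, so the total is $2w(D)+2w(S)=2w(M(m,B))$ per fixed $y$. The paper phrases this as a slot-transfer/majorization argument: $|B|=2m$ charges are distributed over edges of $M(B)$, and since every edge of $M(B)\setminus M(m,B)$ is dominated in weight, all $2m$ charges can be moved onto the $m$ edges of $M(m,B)$, each holding at most~$2$. Without some version of this transfer you lose a $\frac{3}{2}$ factor.
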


\subsection{Max Traveling Salesman}
\label{sec:TSP}
We now present an ordinal (but not truthful) $1.88$-approximation algorithm for Max TSP. Unlike most of the algorithms in this paper, this algorithm is rather complex, since it requires carefully balancing several different tour constructions.

\begin{theorem}
\label{thm:TSPimproved}
We give an ordinal and efficient approximation algorithms for Max TSP whose approximation factor approaches $\frac{32}{17}\approx 1.88$ as $N \to \infty$.
\end{theorem}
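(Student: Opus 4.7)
The plan is to construct an ordinal $32/17$-approximation for Max TSP by carefully combining several complementary tour constructions, in the same spirit as the mixed greedy-random approach used in Theorem~\ref{thm:truthfulMatching} but now adapted to the connectivity constraint imposed by Hamiltonian cycles. The three natural building blocks are: the Serial-Dictatorship-based $2$-approximation from Algorithm~\ref{alg_tsp}; a construction that starts from an ordinal greedy maximum weight matching of size $N/2$ and patches it into a tour; and a uniformly random Hamiltonian cycle. Each of these is the ``winner'' on a different family of inputs, so mixing them smooths out their respective worst cases.

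For the matching-based construction, I would take a greedy matching $M$ obtained from Algorithm~\ref{alg_greedy}, then convert $M$ into a Hamiltonian cycle by inserting ``bridge'' edges. The key point is that each matching edge $e\in M$ contributes to the final tour directly, while each bridge edge can be bounded below by invoking the triangle inequality against the two neighboring matching edges. Since $w(M)\geq\tfrac{1}{2}\mathrm{OPT}_{\text{match}}$ by the standard ordinal greedy analysis (Lemma~\ref{lem_greedy_tophalf}), and any matching is in turn comparable to $\mathrm{OPT}_{TSP}$ (because the optimal tour decomposes into two matchings up to one edge), this yields a clean lower bound on the weight of the patched tour in terms of the weight of $M$. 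For the random tour, the expected weight is exactly $\tfrac{2}{N-1}\sum_e w(e)$, which is large precisely when the weight is spread uniformly across the graph — exactly the regime in which the greedy and serial-dictatorship constructions are weakest.

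To assemble the final bound I would parametrize every instance by a scalar $x$ measuring what fraction of $\mathrm{OPT}$ is concentrated in the top half of the greedy matching, mirroring the definitions $T$, $B$, $B_1$, $B_2$ and $xOPT$ from the proof of Theorem~\ref{thm:truthfulMatching}. For each of the three tour constructions I would derive a piecewise lower bound that is an affine or quadratic function of $x$, reusing the triangle-inequality charging lemmas (Lemma~\ref{lem_rand_lowerbound2}, Lemma~\ref{lem_grtotaldist2parts}, Lemma~\ref{lem_pointtoset}, and Corollary~\ref{corr_random_lower3}) that are already developed in this paper. With three such bounds in hand, I would choose mixing probabilities $(p_1,p_2,p_3)$ to equalise the worst-case ratios across the partition of the $x$-range. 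The rational constant $32/17$ would then fall out by solving the resulting small LP and optimising over $x$ using elementary calculus, exactly as in the final case analysis of the proof of Theorem~\ref{thm:truthfulMatching}.

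The main obstacle is the patching step in the matching-based construction. Unlike the matching setting, where any undominated edge can be safely added to a partial solution, here the tour topology forces us to commit to bridge edges that are not directly controllable in an ordinal manner; a naive patching can lose a constant fraction of the weight of $M$. The technical heart of the proof will therefore be an ordinal stitching procedure that selects bridges in such a way that their total weight can be charged, via the triangle inequality, to at most an $\varepsilon$-fraction of the adjacent heavy matching edges, so that the patched tour retains essentially all of $w(M)$. Making this bridge-charging argument tight enough to push the final mixing bound down to $32/17$, rather than the weaker $2.14$ obtained in~\cite{anshelevichS16}, is where most of the work must happen; once that lower bound is in place, the remainder of the proof is a direct adaptation of the mixed-algorithm calculus already used for weighted perfect matching.
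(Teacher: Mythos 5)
Your high-level instinct—mix a matching-based tour with a randomized tour so that their worst cases are complementary—is on the right track, but the plan as described has two concrete gaps that would prevent it from reaching $32/17$.

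First, you propose starting from a greedy matching of size $N/2$ and then worry about the ``patching step,'' treating the bridge edges as pure loss to be bounded by an $\varepsilon$-fraction of $w(M)$. This undersells what the triangle inequality buys you and misses the central technical fact the paper uses: completing a matching $M$ with $k$ edges into a Hamiltonian path can be done ordinally so that the path has weight at least $(\tfrac{3}{2}-\tfrac{1}{k})w(M)$ (Lemma~\ref{lem_tourcompletion_body}). That is, the stitching step \emph{gains} roughly $50\%$, because each bridge can be charged to the lighter of its two neighboring matching edges. Your proposed bridge-charging argument aims only to avoid losing a constant fraction of $w(M)$; without the $3/2$ gain you cannot beat $2$.

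Second, the mixing structure you envision—three constructions mixed with probabilities $(p_1,p_2,p_3)$ and a scalar parameter $x$ tracking concentration of weight in the top half of the greedy matching, as in the weighted perfect matching proof—is not what makes $32/17$ fall out, and it is not clear it would. The paper instead deliberately uses a greedy matching of size $N/3$, leaving a complementary set $B$ of size $N/3$ outside the matching, and mixes exactly two tour constructions each with probability $1/2$. The two bounds are
\[
E[w(T_1)]\geq \Bigl(\tfrac{3}{8}-\tfrac{3}{4N}\Bigr)w(T^*)+\tfrac{6}{N}\!\!\sum_{x,y\in B}\!w(x,y),
\qquad
E[w(T_2)]\geq \Bigl(\tfrac{11}{16}-\tfrac{3}{4N}\Bigr)w(T^*)-\tfrac{6}{N}\!\!\sum_{x,y\in B}\!w(x,y),
\]
and the $\sum_{x,y\in B}w(x,y)$ terms cancel exactly on averaging, yielding $\tfrac{1}{2}(\tfrac{3}{8}+\tfrac{11}{16})=\tfrac{17}{32}$. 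The $32/17$ is thus the result of a designed cancellation of the error term rather than an LP over an $x$-parametrized family, and the choice of matching size $N/3$ (not $N/2$) is essential because it is what creates the set $B$ of the right size for the two error terms to have equal magnitude and opposite sign. With a perfect matching there is no leftover set $B$ and the cancellation structure disappears entirely, so the plan as written cannot be carried out in the stated form.
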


\noindent{\em Proof Sketch:~}  Before defining our randomized algorithm, we first present the following lemmas: one gives a relationship between matching and Hamiltonian paths and the other shows how to stitch together two paths to form a good tour using only ordinal information.

\begin{lemma}
\label{lem_tourcompletion_body}
Given any matching $M$ with $k$ edges, there exists an efficient ordinal algorithm that computes a Hamiltonian path $Q$ containing $M$ such that the weight of the Hamiltonian path in expectation is at least
$$[\frac{3}{2} - \frac{1}{k}]w(M).$$
\end{lemma}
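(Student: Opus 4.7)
My plan would be to extend $M$ into a Hamiltonian cycle on the $2k$ matched nodes by adding random ``connector'' edges, and then delete one of these connectors to produce the Hamiltonian path $Q$. Concretely, I would (i) independently orient each matching edge $e_i = (a_i, b_i)$ uniformly at random, designating one endpoint as a ``head'' $h_i$ and the other as a ``tail'' $t_i$; (ii) fix an arbitrary ordering $e_1, \ldots, e_k$ of the matching edges and add the $k$ connectors $(t_i, h_{i+1 \bmod k})$ to close up the cycle; and (iii) remove one of the $k$ connectors uniformly at random to obtain $Q$. This procedure consults no preferences at all, so it is trivially ordinal, and $Q$ is a Hamiltonian path on the $2k$ matched nodes containing $M$ by construction.

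The key analytic step would be to lower-bound the expected total weight $E[C]$ of the $k$ connector edges. For any two matching edges $e_i, e_j$ that end up adjacent in the cycle, the independent random orientations mean that the connector between them is, with probability $1/4$ each, one of the four cross-edges $(a_i, a_j), (a_i, b_j), (b_i, a_j), (b_i, b_j)$. Two applications of the triangle inequality give $w(a_i, a_j) + w(a_j, b_i) \geq w(e_i)$ and $w(a_i, b_j) + w(b_j, b_i) \geq w(e_i)$, so the sum of the four cross-edge weights is at least $2w(e_i)$; by symmetry it is also at least $2w(e_j)$, and hence at least $w(e_i) + w(e_j)$. Therefore the expected weight of the connector between $e_i$ and $e_j$ is at least $(w(e_i) + w(e_j))/4$. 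In the cycle each matching edge is incident to exactly two connectors, so summing over all $k$ connectors and using linearity of expectation yields $E[C] \geq w(M)/2$.

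Finally, since the connector to be deleted is chosen uniformly at random independent of the orientations, its expected weight equals $E[C]/k$. Combining the pieces gives
\[ E[w(Q)] = w(M) + E[C] - \frac{E[C]}{k} = w(M) + \left(1 - \frac{1}{k}\right) E[C] \geq w(M) + \frac{1}{2}\left(1 - \frac{1}{k}\right)w(M) = \left(\frac{3}{2} - \frac{1}{2k}\right) w(M), \]
which is at least the required $(3/2 - 1/k)w(M)$. I expect the main bit of work to be the triangle-inequality bookkeeping that lower-bounds each cross-edge sum by $w(e_i) + w(e_j)$; once that bound is in hand, the rest of the argument is just linearity of expectation and a one-line calculation. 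Note that the randomness here is used solely to make the analysis of connectors tractable; the algorithm itself is oblivious to the reported preference lists, so no additional work is needed to guarantee ordinality.
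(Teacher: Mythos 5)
Your proof is correct, and it takes a genuinely different route from the paper's.

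The paper builds the path directly via a serial-dictatorship-style greedy procedure: it orders the matching edges, picks a random starting endpoint, and then repeatedly lets the current dangling endpoint pick its preferred endpoint of the next matching edge. Each preference query plus one triangle inequality gives a connector worth at least $\frac{1}{2}w(e_j)$ for the next matching edge $e_j$, so deterministically $w(Q)\geq\frac{3}{2}w(M)-\frac{1}{2}w(e_1)$, and the randomness over the starting edge converts the $-\frac{1}{2}w(e_1)$ term into $-\frac{1}{2k}w(M)$ in expectation. Your construction is instead fully oblivious: you orient each matching edge independently at random, close the tails-to-heads connectors into a cycle, and drop one connector uniformly at random. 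Ordinality is then automatic because the algorithm never consults preferences at all. The analysis relies on a different (and nice) triangle-inequality observation — that the four cross-edges between any two matching edges have total weight at least $w(e_i)+w(e_j)$ — so each random connector has expected weight at least $\frac{1}{4}(w(e_i)+w(e_j))$, summing to $E[C]\geq\frac{1}{2}w(M)$. Both arguments yield essentially the same constant (and in fact both prove the stronger $\frac{3}{2}-\frac{1}{2k}$ factor), but yours avoids preference queries altogether, while the paper's uses them to get a deterministic per-connector guarantee and only randomizes the choice of starting edge. Your approach is arguably cleaner; the paper's fits more naturally into its broader serial-dictatorship theme.
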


\begin{lemma}
\label{lem_pathtotour_body}
Let $H_1$ and $H_2$ be two Hamiltonian paths on two different sets of nodes, with $a$ and $b$ the endpoints of $H_1$. Then, we can form a tour $T$ by connecting the two paths such that $w(T) \geq w(H_1) + w(H_2) + w(a,b)$ without knowing the edge weights.
\end{lemma}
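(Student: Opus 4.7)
The plan is to randomize over the two natural ways of splicing $H_2$ into $H_1$. Let $c$ and $d$ denote the two endpoints of $H_2$. I will consider the two Hamiltonian cycles
\[
T_1 \;=\; a \to H_1 \to b \to c \to H_2 \to d \to a
\quad\text{and}\quad
T_2 \;=\; a \to H_1 \to b \to d \to H_2 \to c \to a,
\]
each of which visits every vertex of $H_1 \cup H_2$ exactly once (recall the two node sets are disjoint) and consists of the edges of $H_1$, the edges of $H_2$, and two ``bridging'' edges. For $T_1$ the bridges are $(b,c)$ and $(d,a)$; for $T_2$ they are $(b,d)$ and $(c,a)$. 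The algorithm will simply output one of $T_1,T_2$ uniformly at random. Since which tour to output is determined solely by the four endpoints $a,b,c,d$ of the given paths together with a fair coin flip, no edge-weight information (or preference information) is needed, so the construction is ordinal in the strongest possible sense.

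The analysis then reduces to lower-bounding the expected weight of the two bridging edges. Averaging $T_1$ and $T_2$ gives an expected bridging weight of
\[
\tfrac{1}{2}\bigl(w(b,c)+w(d,a)+w(b,d)+w(c,a)\bigr).
\]
Two applications of the triangle inequality, $w(a,b)\le w(a,c)+w(c,b)$ and $w(a,b)\le w(a,d)+w(d,b)$, added together, yield
\[
2\,w(a,b) \;\le\; w(a,c)+w(c,b)+w(a,d)+w(d,b),
\]
so the expected bridging weight is at least $w(a,b)$. Combining this with the fact that every realized tour deterministically contributes $w(H_1)+w(H_2)$ from the path edges themselves gives the desired bound $\mathbb{E}[w(T)]\ge w(H_1)+w(H_2)+w(a,b)$.

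The only real obstacle is recognizing that some form of averaging is unavoidable: a single fixed orientation of $H_2$ cannot always achieve the bound because, for instance, both endpoints $c,d$ could be very close to $a$ and far from $b$, in which case the bridges of one orientation contribute almost nothing. Ordinal information alone does not tell us which of the two orientations has the heavier pair of bridges, so an ordinal algorithm cannot deterministically pick the better one; however, since the \emph{sum} of the two orientations' bridging weights is pinned down by two triangle inequalities, a fair coin flip suffices in expectation. Once this symmetric pairing of endpoints is spotted, the rest of the argument is simply adding those two triangle inequalities, so I do not anticipate any further technical difficulty.
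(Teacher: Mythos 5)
Your randomized averaging argument is internally consistent, but it does not prove the lemma as stated, and the obstruction you invoke to justify the randomization is not real. The lemma asserts a pointwise, deterministic guarantee: one can connect the two paths so that $w(T) \geq w(H_1)+w(H_2)+w(a,b)$ on every instance, whereas your coin flip only delivers this in expectation --- strictly weaker, as you yourself note that one of your two tours may have bridges of negligible weight. The missing idea is that the algorithm does not need to identify the \emph{heavier} of the two pairings; it only needs a pairing whose two bridging edges together weigh at least $w(a,b)$, and the ordinal preference of the single endpoint $a$ between the two endpoints $c,d$ of $H_2$ already determines such a pairing. This is exactly what the paper does: if $a$ prefers $c$ to $d$ (so $w(a,c)\geq w(a,d)$), connect $(a,c)$ and $(b,d)$; then the triangle inequality gives $w(a,b)\leq w(a,d)+w(d,b)\leq w(a,c)+w(b,d)$, so the two bridges alone cover $w(a,b)$, deterministically and using a single ordinal comparison. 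Your claim that ``ordinal information alone does not tell us which of the two orientations has the heavier pair of bridges, so an ordinal algorithm cannot deterministically pick the better one'' conflates picking the better pairing with picking a sufficiently good one.

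As a secondary point, your in-expectation version would actually suffice for the way the lemma is used downstream (the bounds in Lemmas~\ref{lem_tsppart1_body} and~\ref{lem_tsplast_body} are themselves in expectation), but it would force a restatement of the lemma and an extra layer of randomness threaded through those proofs; the deterministic ordinal rule above avoids this entirely and matches the statement as written.
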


Our main techniques involve carefully stitching together greedy and random sub-tours, and establishing the tradeoffs between them. Our randomized algorithm returns two tours computed by two different sub-routines with equal probability: these are given by Algorithms \ref{alg_subroutine1_body} and \ref{alg_subroutine2_body}. 


\begin{algorithm}[hbtp]
\SetKwInOut{Input}{input}\SetKwInOut{Output}{output}
\Output{Tour $T_1$}
Let M be a greedy matching of size $k=\frac{N}{3}$, and $B$ be the nodes not in $M$\;
Complete $M$ using Lemma~\ref{lem_tourcompletion_body} to form a Hamiltonian path $H_T$ on nodes of $M$\;
Form a Hamiltonian path $H_B$ on $B$ using the following randomized algorithm.\;
\textbf{Randomized Path Algorithm} \;
Form a random permutation on the nodes in B\;
Join the nodes in the same order to form the path\;
(i.e., join the first and second nodes, second and third, and so on.)\;
\textbf{Final Output}
$T_1$ is the output formed by using Lemma~\ref{lem_pathtotour_body} for $H_1 = H_B$ and $H_2 = H_T$.
\caption{First Subroutine of the randomized algorithm for Max TSP}
\label{alg_subroutine1_body}
\end{algorithm}

\begin{lemma}
\label{lem_tsppart1_body}
The following is a lower bound on the weight of the tour returned by Algorithm~\ref{alg_subroutine1_body}
$$E[w(T_1)] \geq [\frac{3}{8} - \frac{3}{4N}]w(T^*) + \frac{6}{N}\sum_{x,y \in B}w(x,y) .$$
\end{lemma}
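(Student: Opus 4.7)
The plan is to decompose $E[w(T_1)]$ using the two Hamiltonian paths built by the subroutine, evaluate the contribution of the random path $H_B$ exactly, invoke Lemma~\ref{lem_tourcompletion_body} to lower bound the weight of $H_T$ in terms of the greedy matching $M$, and finally relate $w(M)$ to $w(T^*)$. From Lemma~\ref{lem_pathtotour_body} applied with $H_1 = H_B$ and $H_2 = H_T$, we have $w(T_1) \geq w(H_B) + w(H_T) + w(a,b) \geq w(H_B) + w(H_T)$, so that $E[w(T_1)] \geq E[w(H_B)] + E[w(H_T)]$.

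For the first term, since $H_B$ is produced from a uniformly random permutation of the $|B| = N/3$ nodes of $B$, any pair $y,z \in B$ appears as consecutive nodes (and thus contributes $w(y,z)$) with probability exactly $\tfrac{2}{|B|} = \tfrac{6}{N}$, so by linearity of expectation $E[w(H_B)] = \tfrac{6}{N}\sum_{x,y \in B} w(x,y)$, which already matches the second summand of the claimed bound verbatim. For the second term, Lemma~\ref{lem_tourcompletion_body} applied with $k = N/3$ yields $E[w(H_T)] \geq \bigl(\tfrac{3}{2} - \tfrac{3}{N}\bigr) w(M)$.

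The hard part will be establishing the structural inequality $w(M) \geq \tfrac{1}{4}\, w(T^*)$. My approach is to combine a 2-factor decomposition of the optimum tour with the top-half property of greedy perfect matchings: write $T^* = M_1 \cup M_2$ as a union of two perfect matchings, so that $w(\mathrm{OPT}_{\mathrm{PM}}) \geq \max\{w(M_1), w(M_2)\} \geq w(T^*)/2$. Then by Lemma~\ref{lem_greedy_tophalf} (the same bound used in Proposition~\ref{prop_simpleproprdgrper}), the heaviest $N/4$ edges of the greedy perfect matching together have weight at least $w(\mathrm{OPT}_{\mathrm{PM}})/2 \geq w(T^*)/4$. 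Since the heaviest remaining edge at every step is always undominated, one may take the greedy tie-breaking rule to pick the heaviest undominated edge, so that the first $N/3$ picks are precisely the $N/3$ heaviest edges of the greedy perfect matching; these include the top $N/4$, giving $w(M) \geq w(T^*)/4$.

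Combining the pieces, $E[w(T_1)] \geq \bigl(\tfrac{3}{2} - \tfrac{3}{N}\bigr) w(M) + \tfrac{6}{N}\sum_{x,y \in B} w(x,y) \geq \bigl(\tfrac{3}{2} - \tfrac{3}{N}\bigr)\cdot \tfrac{w(T^*)}{4} + \tfrac{6}{N}\sum_{x,y \in B} w(x,y) = \bigl(\tfrac{3}{8} - \tfrac{3}{4N}\bigr) w(T^*) + \tfrac{6}{N}\sum_{x,y \in B} w(x,y)$, as required. The delicate step is the structural bound in the previous paragraph: if the greedy procedure were allowed to pick an arbitrary undominated edge at each step, the first $N/3$ picks could miss some of the top $N/4$ heaviest edges, and one would instead need a direct charging argument that bounds $w(T^*)$ by $4\,w(M)$ via the triangle inequality and the undominated property of edges in $M$ (with any residual $B\times B$ contribution absorbed by the $\tfrac{6}{N}\sum_{x,y \in B} w(x,y)$ term).
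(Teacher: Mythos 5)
Your decomposition of $E[w(T_1)]$ into $E[w(H_B)]+E[w(H_T)]$, the exact computation $E[w(H_B)]=\frac{6}{N}\sum_{x,y\in B}w(x,y)$ (slightly cleaner than the paper's argument via the closed cycle $T_B$), and the use of Lemma~\ref{lem_tourcompletion_body} with $k=N/3$ all match the paper's argument. The gap is in your derivation of the structural bound $w(M)\geq w(T^*)/4$, specifically in the step where you declare that greedy may be run with the tie-breaking ``pick the heaviest undominated edge.'' That tie-break is \emph{not implementable with ordinal information}: if $(a,b)$ and $(c,d)$ are two vertex-disjoint undominated edges, the agents' preference lists give no way to compare $w(a,b)$ with $w(c,d)$, so an ordinal algorithm cannot in general identify which one is heavier. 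Algorithm~\ref{alg_subroutine1_body} just says ``let $M$ be a greedy matching of size $N/3$,'' with arbitrary undominated-edge tie-breaking, and for such an $M$ the first $N/3$ picks need not be the $N/3$ heaviest edges of the eventual greedy perfect matching, so Lemma~\ref{lem_greedy_tophalf} (which speaks about the weight-sorted top half) cannot be invoked the way you invoke it.

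You partially anticipate this at the end and gesture at ``a direct charging argument that bounds $w(T^*)$ by $4w(M)$.'' That is indeed what is needed, and it is what the paper leans on (it cites a lemma of~\cite{anshelevichS16} for $w(M)\geq w(M^*)/2$ together with $w(M^*)\geq w(T^*)/2$). The charging argument works for \emph{any} ordinal tie-breaking: charge each edge $f=(x,y)$ of $M^*$ to the first greedy edge $e_j$ that covers $x$ or $y$, noting that $e_j$ is undominated at that moment so $w(e_j)\geq w(f)$; this accounts for all $M^*$-edges touching $N(M)$ using at most two of the $2\cdot\frac{N}{3}$ ``slots.'' The at most $N/6$ remaining $M^*$-edges have both endpoints outside $N(M)$, and by the undominated property plus the triangle inequality each such $f$ satisfies $w(f)\leq 2w(e_j)$ for every $j$, so each consumes two of the $\geq N/6+u$ free slots; hence $w(M^*)\leq 2w(M)$. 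You should replace your tie-breaking argument with this charging lemma (or cite it, as the paper does); as you note, no separate absorption of a $B\times B$ term is actually needed once this bound is in hand, since $w(T^*)\leq 2w(M^*)\leq 4w(M)$ directly. Also, strictly speaking, the 2-factor decomposition of $T^*$ into two perfect matchings requires $N$ even, which holds here because the analysis already assumes $N$ divisible by $6$.
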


\begin{algorithm}[hbtp]
\SetKwInOut{Input}{input}\SetKwInOut{Output}{output}
\Output{Tour $T_2$}
Let M be a greedy matching of size $k=\frac{N}{3}$, and $B$ be the nodes not in $M$\;
Select $\frac{N}{6}$ edges uniformly at random from $M$\;
Complete these edges using Lemma~\ref{lem_tourcompletion_body} to form a Hamiltonian path $H_T$ with $\frac{N}{3}$ nodes\;
Let $A$ be the set of nodes in $M$ but not in $H_T$\;
\textbf{Randomized Alternating Path Algorithm}\;
Initialize $H_{AB} = \emptyset$\;
Select one node uniformly at random from $A$\;
Select one node uniformly at random from $B$\;
Add both the nodes to $H_{AB}$ in the same order\;
Remove them from $A$ and $B$ respectively \;
Repeat the above process until $A=B=\emptyset$\;
\textbf{Final Output}\;
$T_2$ is the output formed by using Lemma~\ref{lem_pathtotour_body} for $H_1 = H_{AB}$ and $H_2 = H_T$.
\caption{Second Subroutine of the randomized algorithm for Max TSP}
\label{alg_subroutine2_body}
\end{algorithm}

\begin{lemma}\label{lem_tsplast_body}
The expected weight of the tour returned by Algorithm~\ref{alg_subroutine2_body} is at least $[\frac{11}{16} - \frac{3}{4N}]w(T^*)- \frac{6}{N}\sum_{x,y \in B}w(x,y)$.
\end{lemma}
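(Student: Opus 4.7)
The plan is to apply Lemma~\ref{lem_pathtotour_body} to write $w(T_2) \geq w(H_T) + w(H_{AB}) + w(a,b) \geq w(H_T) + w(H_{AB})$ (dropping the nonnegative stitching edge), and then to lower bound the two path expectations separately.

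For $H_T$: since the $N/6$ edges selected are uniform from $M$, their expected total weight is $w(M)/2$, and Lemma~\ref{lem_tourcompletion_body} with $k=N/6$ gives $E[w(H_T)] \geq (3/2 - 6/N)\cdot w(M)/2 = (3/4 - 3/N)\,w(M)$. The greedy matching $M$ of size $N/3$ contains in particular the top $N/4$ edges of the greedy perfect matching, which have weight at least $w(T^*)/4$ (by the top-half property from~\cite{anshelevichS16} together with the fact that two-coloring the tour edges gives a perfect matching of weight at least $w(T^*)/2$). Hence $w(M)\geq w(T^*)/4$ and so $E[w(H_T)]\geq \bigl(\tfrac{3}{16}-\tfrac{3}{4N}\bigr)w(T^*)$.

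For $H_{AB}$: by independence of the uniform random permutations of $A$ and $B$, each of the $2|A|-1 = 2N/3-1$ alternating edges has expected weight $C(A,B)/(|A||B|)$ conditional on $A$, where $C(A,B)=\sum_{a\in A,b\in B} w(a,b)$. Marginalizing over $A$ (each $u\in V_M$ lies in $A$ with probability $1/2$) yields
\[ E[w(H_{AB})] \;=\; \frac{3(2N-3)}{2N^2}\, C(V_M,B). \]
The crux of the argument is then the inequality
\[ E[w(H_{AB})] + \tfrac{6}{N}\textstyle\sum_{x,y\in B}w(x,y) \;\geq\; \tfrac{1}{2}\,w(T^*), \]
which combined with the $H_T$ bound yields the claimed statement.

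To prove this crux inequality, I would combine three complementary lower bounds on $C(V_M,B)$: (i) $C(V_M,B)\geq |B|\,w(M)\geq Nw(T^*)/12$ by triangle inequality through the matching edges of $M$ (each $v\in B$ satisfies $\sum_{u\in V_M}w(v,u)\geq w(M)$); (ii) $C(V_M,B)\geq 2\bigl(1-O(1/N)\bigr)\sum_{x,y\in B}w(x,y)$ by triangle inequality through $V_M$ applied to each pair in $B$; and (iii) $w(T^*)\leq (1+6/N)\,C(V_M,B) + \sum_{x,y\in B}w(x,y)$ obtained by classifying each edge of $T^*$ as $V_M$-internal (bounded by averaging the triangle inequality through $B$ and using that each $V_M$-node has tour-degree at most $2$), $B$-internal (trivially bounded by $\sum_{x,y\in B}w(x,y)$), or crossing (trivially bounded by $C(V_M,B)$). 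A convex combination of these three bounds — or equivalently a case split based on whether $\sum_{x,y\in B}w(x,y)$ is small or large relative to $w(T^*)$ — yields the required inequality, after which we conclude $E[w(T_2)] \geq \bigl(\tfrac{11}{16} - \tfrac{3}{4N}\bigr)w(T^*) - \tfrac{6}{N}\sum_{x,y\in B}w(x,y)$.

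The main obstacle is obtaining precisely the coefficient $\tfrac{1}{2}$ on $w(T^*)$: each individual triangle-inequality bound gives only $\tfrac{1}{4}\,w(T^*)$ or weaker once converted to $E[w(H_{AB})]$, so the $-\tfrac{6}{N}\sum_{x,y\in B}w(x,y)$ slack is essential to absorb the loose contribution of $B$-internal edges to $w(T^*)$. This slack is also exactly what makes the bound the natural complement to Lemma~\ref{lem_tsppart1_body}, so that the $\pm\tfrac{6}{N}\sum_{x,y\in B}w(x,y)$ terms cancel when the two sub-algorithms are averaged to yield the overall $\tfrac{32}{17}$-approximation.
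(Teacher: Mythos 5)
Your high-level decomposition of $T_2$ into the greedy path $H_T$ and the alternating path $H_{AB}$, and your bound on $E[w(H_T)]$, match the paper's argument. But your proposed proof of the crux inequality has a genuine gap, and there is also a minor loss of constant to flag.

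The minor point first: by ``dropping the nonnegative stitching edge'' at the outset, your coefficient on $C(V_M,B)$ comes out to $\frac{3(2N-3)}{2N^2}$ rather than $\frac{3}{N}$. The paper does not drop it --- Lemma~\ref{lem_pathtotour_body} hands you the closing edge $w(H_{AB}(f),H_{AB}(l))$ for free, turning $H_{AB}$ into an alternating \emph{tour} $T_{AB}$ with exactly $2N/3$ edges. That is where the clean $\frac{3}{N}$ factor (and hence the stated $-\frac{3}{4N}$ correction, rather than a weaker $-\frac{3}{2N}$) comes from.

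The serious problem is your proof of the crux inequality. Write $C := C(V_M,B)$ and $\Sigma := \sum_{x,y\in B}w(x,y)$. With the correct coefficient, the crux inequality is equivalent to $C \geq \frac{N}{6}w(T^*) - 2\Sigma$. In particular, when $\Sigma$ is very small you need $C \geq \frac{N}{6}w(T^*)$. Your bound (i) gives only $C \geq |B|\,w(M) \geq \frac{N}{12}w(T^*)$ --- a factor $2$ short --- and your bounds (ii) and (iii) give $C \gtrsim \Sigma$ and $C \gtrsim w(T^*) - \Sigma$ respectively, both of which are a factor $\Theta(N)$ too small to matter. No convex combination of (i)--(iii), and no case split on $\Sigma$, can synthesize a coefficient of $\frac{N}{6}$ on $w(T^*)$. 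The root of the failure is in (iii): you route the triangle inequality through $B$ only for the $V_M$-internal edges of $T^*$, and treat the crossing and $B$-internal tour edges by the trivial bounds $w(e)\leq C$ and $w(e)\leq\Sigma$, so the favorable $\frac{1}{|B|}$ scaling is attached only to a vanishing fraction of the tour.

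The correct argument --- and the paper's --- is a single application of Lemma~\ref{lem_matchingupper} with $T = B$, $\mathcal{S} = \mathcal{N}$, and $M$ the max-weight perfect matching $M^*$: route \emph{every} edge of $M^*$ through \emph{every} node of $B$, obtaining $w(M^*) \leq \frac{6}{N}\Sigma + \frac{3}{N}C$. Combined with $w(M^*) \geq \frac{1}{2}w(T^*)$, this gives $\frac{3}{N}C \geq \frac{1}{2}w(T^*) - \frac{6}{N}\Sigma$ directly, which is exactly the crux inequality with the right constants. Your intuition that the $-\frac{6}{N}\Sigma$ slack is essential and is designed to cancel against Lemma~\ref{lem_tsppart1_body} is correct, but the mechanism producing it is the $\frac{2}{n}\Sigma$ term in Lemma~\ref{lem_matchingupper}, not the trivial $B$-internal contribution to $w(T^*)$.
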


The final bound is obtained by using $E[w(T)] = \frac{1}{2}(E(w(T_1)] + E[w(T_2)])$.

\section{Conclusion}
In this paper we study ordinal algorithms, i.e., algorithms which are aware only of preference orderings instead of the hidden weights or utilities which generate such orderings. Perhaps surprisingly,  our results indicate that for many problems including Matching, $k$-sum clustering, Densest Subgraph, and Traveling Salesman, ordinal algorithms perform almost as well as algorithms which know the underlying metric weights, {\em even when the agents involved can lie about their preferences.} This indicates that for settings involving strategic agents where it is expensive, or impossible to obtain the true numerical weights or utilities, one can use ordinal mechanisms without much loss in welfare.

How do these algorithms stand in comparison to unconstrained ordinal algorithms that do not obey truthfulness? In the full version of this paper, we present non-truthful, ordinal algorithms for the same set of problems including a $4$-approximation algorithm for Densest subgraph and a $1.88$-approximation algorithm for Max TSP. In conjunction with the ordinal $1.6$-approximation algorithm for perfect matching from~\cite{anshelevichS16}, the improved approximation factors indicate a clear separation between the two classes of algorithms. On the surface, the improvement is not surprising since in many settings, truthfulness often places strong constraints on the set of allowed algorithms and techniques; indeed, all of our truthful mechanisms are derived using the three simple techniques outlined in Section~\ref{sec:prelim}. That said, given the absence of matching lower bounds in this work, the resolution of the gap between these two classes of algorithms is perhaps the most important question that is yet to be addressed.

\textbf{Acknowledgements} This work was supported in part by NSF awards CCF-1527497 and CNS-1218374.


\bibliography{bibliography}
\bibliographystyle{plain}

\appendix
\newpage 

\section{Appendix: Proofs from Section 2}

\subsubsection*{Greedy Algorithms for Other Problems}
We now provide some intuition on why greedy approaches do not lead to truthful algorithms for any of the other problems that we study in this work. \\
\textbf{Max $k$-sum and Densest $k$-subgraph}
In any clustering problem, using a greedy algorithm (or even serial dictatorship for that matter) could result in agents underplaying their most preferred node if that node will anyway be chosen in a later round. As a concrete example, consider the densest $k$-subgraph problem with $k=4$, and an instance with $6$ nodes whose preferences we define partially: $a$'s top $3$ nodes are $b,c,d$; $b$'s top two nodes are $a$ and $d$; $c$'s first two nodes are $a,e$; $d$'s top two preferences are $b$ and $e$ and finally, $e,f$ prefer each other as a first choice. 

Consider the simple algorithm that first picks a matching $M$ with $\frac{k}{2}$-edges and returns the same set of nodes as in the matching. Moreover, suppose that the algorithm's tie-breaking involves selecting edges containing and $a$ or $b$ before edges containing $e,f$ and then $c,d$. Now, under these preferences, we claim that node $a$ stands to improve her utility by lying when all the other agents are being truthful. To see why, first observe that if node $a$ truthfully reports her preferences, the algorithm returns $\{a,b,e,f\}$ as the solution set. On the other hand, if $a$ lies and points to $c$ as her first preference, then the algorithm picks $(a,c)$ first followed by $(b,d)$ resulting in the set $\{a,b,c,d\}$, which is strictly preferable from $a$'s perspective. A similar example holds for Max $k$-sum with fixed tie-breaking rules.

\textbf{Max TSP}
The negative example is a bit more subtle for Max TSP. Suppose that the greedy algorithm works by repeatedly picking undominated edges to build a forest of paths (so only edges that do not violate this property are maintained). Remember that an agent's utility for this problem is simply the sum of weights of the two edges that she is connected to. We construct a specific sub-instance where a node stands to gain by lying about her first preferences in order to increase her aggregate utility. It is not particularly hard to design a full set of preferences consistent with the sub-instance. 

Now, suppose that for a certain instance, the algorithm has already proceeded for a given number of rounds resulting in a forest of three disjoint paths: $\{a_1, a_2, a_3\}$, $\{b_1,b_2,b_3\}$, and $\{c_1, c_2, c_3\}$. Our antagonist-in-chief for this instance will be a separate node $x$ whose first four choices are $a_1 > a_3 > b_1 > t$, with $w(x,t) = 1$ and $w(x,a_1) = w(x,a_3) = w(x,b_1) = 2$. Moreover, suppose that $a_1$'s first and second choices are $x > c_1$, for $c_1$, it is $a_1 > b_1$, for $b_1$: $c_1 > x$, and finally, $a_3$'s top choice is $x$. Now, if $x$ is truthful, the unfolding series of events among these nodes will result in the addition of the edges $(x,a_1)$ and $(x,t)$ giving $x$ a total utility of $3$. On the other hand, it is preferable for $x$ to make $a_3$ its first preference resulting in $x$'s two edges being $(x,a_3)$ and $(x,b_1)$, which is a strictly better solution from $x$'s perspective. $\qed$

\section{Appendix: Proofs from Section 3: $1.7638$-Approximation Algorithm for Weighted Perfect Matching}
\label{app:perfectMatchingTruthful}

\subsection{Useful but Generic Lemmas}
We begin with some general lemmas that do not bear any obvious relation to greedy or random matchings, but will be useful in proving some of our results later on.

\begin{lemma}
\label{lem_genericffunctiondiffernetial}
Consider the following function of two variables $(x, \alpha)$ whose domains are as follows: $x \in [0, 0.5]$ and $\alpha \in [0,1]$.

$$f(x,\alpha) = \alpha(\frac{1}{2} - x) + (1-\alpha)(1 - x - \frac{x}{\alpha}).$$

For any fixed $x \in [0, \frac{1}{8}]$, $f(x)$ is not increasing from $\alpha = \frac{1}{2}$ to $1$. That is, as long as $x \in [0,\frac{1}{8}]$,

$$\max_{\alpha\in[0.5,1]} f(x,\alpha) = f(x,\frac{1}{2}).$$

\end{lemma}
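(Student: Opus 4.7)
The plan is to show that $f$ is a non-increasing function of $\alpha$ on $[1/2, 1]$ whenever $x \leq 1/8$, from which the claim is immediate. The natural tool is the partial derivative $\partial f / \partial \alpha$, which should turn out to be non-positive under the hypothesis.

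First I would compute $\partial f / \partial \alpha$ explicitly. Differentiating the two summands gives
\[
\frac{\partial f}{\partial \alpha} = \left(\tfrac{1}{2} - x\right) - \left(1 - x - \tfrac{x}{\alpha}\right) + (1-\alpha)\cdot \frac{x}{\alpha^2}.
\]
The linear terms collapse: $(1/2 - x) - (1 - x) = -1/2$, and the remaining pieces simplify as
\[
\frac{x}{\alpha} + \frac{x(1-\alpha)}{\alpha^2} = \frac{x\alpha + x(1-\alpha)}{\alpha^2} = \frac{x}{\alpha^2}.
\]
Thus $\partial f / \partial \alpha = -1/2 + x/\alpha^2$, a clean expression that does all the work.

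Next I would verify the sign. For $\alpha \in [1/2, 1]$ we have $\alpha^2 \geq 1/4$, so $x/\alpha^2 \leq 4x$. Hence, for any $x \leq 1/8$,
\[
\frac{\partial f}{\partial \alpha} \;=\; -\tfrac{1}{2} + \frac{x}{\alpha^2} \;\leq\; -\tfrac{1}{2} + 4x \;\leq\; 0.
\]
Consequently, $f(x, \cdot)$ is non-increasing on $[1/2, 1]$, which means the maximum is attained at the left endpoint $\alpha = 1/2$, establishing the lemma.

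There is no real obstacle here: the only thing to be careful about is the algebraic simplification of the $x/\alpha + x(1-\alpha)/\alpha^2$ term, which is what makes the derivative collapse to a very simple expression. Once that simplification is in place, the bound $x \leq 1/8$ is exactly the tight threshold that forces the derivative to be non-positive throughout the interval $[1/2, 1]$ (with equality at $\alpha = 1/2$, $x = 1/8$), so the hypothesis of the lemma is sharp.
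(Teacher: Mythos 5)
Your proof is correct and follows essentially the same approach as the paper: both compute $\partial f/\partial\alpha = x/\alpha^2 - 1/2$ and observe that this is non-positive on $[1/2,1]$ once $x \le 1/8$ (the paper phrases the condition as $\alpha^2 \ge 2x$). You just spell out the algebra a bit more explicitly.
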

\begin{proof}
For a fixed $x$, we can differentiate $f$ with respect to $\alpha$, and get

$$\frac{\partial f}{\partial \alpha} = \frac{x}{\alpha^2} - \frac{1}{2}.$$

The lemma follows from the observation that the derivative is not positive when $\alpha^2 \geq 2x$. $\hfill\qed$
\end{proof}

Our next set of lemmas allow us to establish upper bounds on dot-products of \emph{weight vectors}. For better understanding, one can imagine these vectors to be the weights of the edges in a greedy matching. Specifically, the lemmas help us identify the distribution of the weights that lead to our worst case bounds. We begin with the following trivial lemma.

\begin{lemma}
\label{lem_generic_weights1}
Consider two vectors $(w^1_i)_{i=1}^n$ and $(w^2_i)_{i=1}^n$ which are identical (i.e., $w^1_i=w^2_i$), except that $\exists r_1 \leq r_2$, $\epsilon > 0$ such that $w^1_{r_1} = w^2_{r_1} + \epsilon$ and $w^1_{r_2} = w^2_{r_2} - \epsilon$.

Let $\vec{a}$ be any fixed vector of the same length satisfying $a_1 \geq a_2 \geq \ldots \geq a_n$. Then,

$$\sum_{i=1}^n w^1_i a_i \geq \sum_{i=1}^n w^2_ia_i.$$
\end{lemma}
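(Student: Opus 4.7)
The plan is to prove this directly by computing the difference between the two dot products and observing that everything except the two distinguished indices cancels out. Since the vectors $\vec{w}^1$ and $\vec{w}^2$ agree everywhere except at positions $r_1$ and $r_2$, the bulk of the calculation is vacuous.

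First I would write
\[
\sum_{i=1}^n w^1_i a_i - \sum_{i=1}^n w^2_i a_i = (w^1_{r_1} - w^2_{r_1}) a_{r_1} + (w^1_{r_2} - w^2_{r_2}) a_{r_2},
\]
since all other terms cancel by the hypothesis $w^1_i = w^2_i$ for $i \notin \{r_1, r_2\}$. Substituting $w^1_{r_1} - w^2_{r_1} = \epsilon$ and $w^1_{r_2} - w^2_{r_2} = -\epsilon$, this simplifies to $\epsilon(a_{r_1} - a_{r_2})$.

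Then I would invoke the monotonicity assumption on $\vec{a}$: because $r_1 \leq r_2$ and $a_1 \geq a_2 \geq \cdots \geq a_n$, we have $a_{r_1} \geq a_{r_2}$, and combined with $\epsilon > 0$ this yields $\epsilon(a_{r_1} - a_{r_2}) \geq 0$. The inequality $\sum_i w^1_i a_i \geq \sum_i w^2_i a_i$ follows immediately.

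There is no real obstacle here; the lemma is essentially a one-line rearrangement statement (a local-swap version of the rearrangement inequality) whose only content is that shifting weight to an earlier position in a sequence with non-increasing coefficients can only increase the weighted sum. The main reason to state it is to reuse it as a building block for later arguments about worst-case distributions of edge weights in a greedy matching.
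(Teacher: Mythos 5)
Your proof is correct, and it is exactly the one-line computation one would expect: the difference of the two sums collapses to $\epsilon(a_{r_1}-a_{r_2})\geq 0$ by monotonicity of $\vec a$. The paper in fact states this lemma without proof (calling it trivial), so there is no "paper's proof" to diverge from; your write-up simply supplies the omitted verification.
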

By repeatedly applying the above lemma one can identify weight vectors that dominate all other weight vectors with respect to the above sum. In essence, the above lemma indicates that it is always preferable (higher dot product) to transfer the weights from the larger indices of a vector to smaller indices. By repeatedly using the lemma, one arrives at the following corollary.

\begin{corollary}
\label{corr_generic_weights1}
Consider two vectors $(w^1_i)_{i=1}^n$ and $(w^2_i)_{i=1}^n$ that satisfy the following conditions
\begin{enumerate}

\item $\sum_{i=1}^n w^1_i = \sum_{i=1}^n w^2_i$

\item $\exists$ some index $k$ such that for every $r \leq k$, $w^1_r \geq w^2_r$ and for every $r > k$, $w^1_r \leq w^2_r$.
\end{enumerate}

Let $\vec{a}$ be any fixed vector of the same length satisfying $a_1 \geq a_2 \geq \ldots \geq a_n$. Then,
$$\sum_{i=1}^n w^1_i a_i \geq \sum_{i=1}^n w^2_ia_i.$$

\end{corollary}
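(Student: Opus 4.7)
The plan is to prove the corollary directly by a short computation on the difference vector, exploiting the fact that the hypotheses give a single ``crossover'' index $k$ at which the sign of $w^1_i - w^2_i$ flips. This is cleaner than iterating Lemma~\ref{lem_generic_weights1}, though it is morally equivalent and presumably what the authors have in mind when they say the corollary follows by repeated application of that lemma.

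Concretely, set $d_i := w^1_i - w^2_i$, so the target inequality is $\sum_{i=1}^n d_i a_i \geq 0$. Condition~2 gives $d_i \geq 0$ for $i \leq k$ and $d_i \leq 0$ for $i > k$, and Condition~1 gives $\sum_i d_i = 0$. Since $\vec{a}$ is non-increasing, $a_i \geq a_k$ for $i \leq k$ and $a_i \leq a_k$ for $i > k$. Multiplying these bounds by $d_i$, and noting that the inequality direction is preserved in the first regime (where $d_i \geq 0$) and flipped in the second (where $d_i \leq 0$, so $a_i \leq a_k$ becomes $d_i a_i \geq d_i a_k$), we obtain the pointwise bound $d_i a_i \geq d_i a_k$ for every $i$. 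Summing, $\sum_i d_i a_i \geq a_k \sum_i d_i = 0$, which is exactly what we wanted.

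Alternatively, I would follow the hint suggested by the corollary's phrasing and build a finite sequence $w^2 = v^{(0)}, v^{(1)}, \ldots, v^{(T)} = w^1$ in which each $v^{(t+1)}$ is obtained from $v^{(t)}$ by a single mass transfer of the form in Lemma~\ref{lem_generic_weights1}: pick $r_1 \leq k$ with $v^{(t)}_{r_1} < w^1_{r_1}$ and $r_2 > k$ with $v^{(t)}_{r_2} > w^1_{r_2}$ (both must exist whenever $v^{(t)} \neq w^1$, because the total surplus on $\{1,\ldots,k\}$ must equal the total deficit on $\{k+1,\ldots,n\}$ by the sum constraint), and shift $\epsilon = \min(w^1_{r_1} - v^{(t)}_{r_1},\, v^{(t)}_{r_2} - w^1_{r_2})$ units of mass from $r_2$ to $r_1$. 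Each step zeros out at least one coordinate of the gap between $v^{(t)}$ and $w^1$, so the process terminates in at most $n$ steps, and by Lemma~\ref{lem_generic_weights1} each step does not decrease the dot product with $\vec{a}$. Chaining these inequalities yields the result.

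I do not expect any serious obstacle here. The only mild things to watch are the trivial edge case $w^1 = w^2$ (where the direct proof still goes through, with both sides equal), and the sign handling when multiplying the non-increasing bound on $\vec{a}$ by a non-positive $d_i$, which is routine.
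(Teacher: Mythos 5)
Both of your arguments are correct, and they represent two genuinely different routes. Your second argument (the finite sequence of mass transfers, each invoking Lemma~\ref{lem_generic_weights1}) is a rigorous version of what the paper sketches; the paper only gestures at ``by repeatedly applying the above lemma,'' whereas you supply the termination argument and the explicit choice of $\epsilon$, which is a real improvement in precision.

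Your first argument is a different and cleaner route that bypasses Lemma~\ref{lem_generic_weights1} entirely: it is the standard ``single crossover'' observation that if $d_i := w^1_i - w^2_i$ changes sign exactly once (nonnegative up to index $k$, nonpositive after) and sums to zero, then for any non-increasing $\vec{a}$ the pointwise bound $d_i a_i \geq d_i a_k$ holds in both regimes, giving $\sum_i d_i a_i \geq a_k \sum_i d_i = 0$. This is correct: for $i \leq k$ both $d_i \geq 0$ and $a_i \geq a_k$, and for $i > k$ both $d_i \leq 0$ and $a_i \leq a_k$, so the sign flip in the multiplication cooperates in both cases. This proof is shorter, self-contained, and needs no induction or termination argument; its only cost is that it does not generalize to the iterated-transfer framing that the paper seems to want to reuse conceptually elsewhere (e.g., in Lemma~\ref{lem_grtotaldist2parts}, where the vector manipulations are more delicate and the transfer-by-transfer picture is helpful). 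For this particular corollary the direct argument is strictly preferable.
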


Our final lemma is somewhat more specific and identifies a certain condition under which it is useful to transfer weight from the lower indices to the higher indices.

\begin{lemma}
\label{lem_genericweights2}
Consider two vectors $(w^1_i)_{i=1}^{2n+1}$ and $(w^2_i)_{i=1}^{2n+1}$ that satisfy the following conditions
\begin{enumerate}

\item $\sum_{i=1}^n w^1_i = \sum_{i=1}^n w^2_i$, and $\sum_{i=n+1}^{2n+1} w^1_i = \sum_{i=n+1}^{2n+1} w^2_i$

\item $\sum_{i=1}^n w^2_i = \sum_{i=n+1}^{2n+1}w^2_i$.

\item $w^2_1 = w_0$; for every $2 \leq i \leq 2n$, $w^2_i = \bar{w} \leq w_0$, and $w^2_{2n+1} = w_f$.

\item for every $1 \leq i \leq 2n$, $w^1_i = \bar{w}+\epsilon$ for some $\epsilon > 0$ and $w^1_{2n+1} = 0$.
\end{enumerate}
Then, for any $N \geq 2n+1$, $$\sum_{i=1}^{2n+1} w^1_i (N - 2i) \geq \sum_{i=1}^{2n+1} w^2_i(N - 2i).$$

\end{lemma}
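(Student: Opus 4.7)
My plan is to first reduce the inequality to a much cleaner one that is independent of $N$ by exploiting the equal-total-weight hypothesis, and then to carry out a short direct computation.

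\emph{Simplification.} Conditions (1a) and (1b) together imply $\sum_{i=1}^{2n+1} w^1_i = \sum_{i=1}^{2n+1} w^2_i$. Therefore in the expansion of $\sum_i (w^1_i - w^2_i)(N - 2i)$ the term involving $N$ contributes zero, and the claimed inequality is equivalent to
\[
\sum_{i=1}^{2n+1} i\,(w^1_i - w^2_i) \;\le\; 0,
\]
an inequality that does not involve $N$ at all. I note that one cannot simply invoke Corollary~\ref{corr_generic_weights1} here, since the sign pattern of the difference vector $w^1-w^2$ turns out to be $(-,+,+,\ldots,+,-)$, which violates the single-crossover hypothesis of that corollary. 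So one really does have to do an explicit calculation; fortunately it is painless.

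\emph{Parameterization.} I would next express everything in terms of the single parameter $D := w_0 - \bar{w} \ge 0$. Using (3)--(4) inside the sum condition (1a) gives $n(\bar{w}+\epsilon) = w_0 + (n-1)\bar{w}$, forcing $\epsilon = D/n$; using them inside (1b) gives $w_f = n\epsilon = D$ (condition (2) is then automatically consistent). The entries of the difference vector become
\[
w^1_1 - w^2_1 = -\tfrac{n-1}{n} D, \qquad w^1_i - w^2_i = \tfrac{D}{n} \text{ for } 2 \le i \le 2n, \qquad w^1_{2n+1} - w^2_{2n+1} = -D.
\]

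\emph{Finishing.} A one-line computation using $\sum_{i=2}^{2n} i = 2n^2 + n - 1$ collapses $\sum_i i(w^1_i - w^2_i)$ to exactly $-D$; plugging back yields $\sum_i (w^1_i - w^2_i)(N-2i) = 2D \ge 0$, which is the claimed inequality (and in fact sharper, giving a constant surplus $2(w_0 - \bar{w})$ that is independent of both $N$ and $n$).

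The only genuinely conceptual step is the initial reformulation eliminating $N$; once that is in place, the fact that the four hypotheses leave just a single degree of freedom $D$ means the remaining arithmetic is mechanical. I do not expect any real obstacle here, only bookkeeping.
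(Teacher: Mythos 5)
Your proof is correct and follows essentially the same route as the paper's: both derive $\epsilon$ and $w_f$ from the sum constraints in terms of $w_0 - \bar{w}$ and then verify the inequality by direct algebra, arriving at the same surplus $2(w_0-\bar w)$. Your upfront elimination of $N$ and single-parameter normalization in terms of $D$ is a slightly cleaner presentation of the same computation (the paper carries $N$ through and lets it cancel at the end), and your observation that Corollary~\ref{corr_generic_weights1} does not apply because the difference vector has sign pattern $(-,+,\dots,+,-)$ is a correct and useful remark that the paper does not make explicit.
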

\begin{proof}
Let us begin by establishing exact relationships between $w_0, \bar{w}, w_f, \epsilon$. First, we exploit Condition $(2)$ to show that $w_0 - \bar{w} = w_f$. Substituting the values of $w^2_i$ in Condition (2) gives us,

$$w_0 + \bar{w}(n-1) = \bar{w}n + w_f.$$

Adding and subtracting $\bar{w}$, from the LHS gives us the desired equation. Next, we derive an exact formula for $\epsilon$ in terms of $w_0, \bar{w}, w_f$, by adding up the two halves of Condition $(1)$.

$$w_0 + (2n-1)\bar{w} + w_f = 2n(\bar{w} + \epsilon).$$

Therefore, $\epsilon = \frac{w_0 - \bar{w} + w_f}{2n}.$ Now, we can show that the lemma follows from straightforward algebra. Consider the difference between the RHS and LHS of the lemma claim, this is given by,

\begin{align*}
RHS - LHS & =  (\bar{w} + \epsilon - w_0)(N-2) + \sum_{i=2}^{2n}\epsilon(N-2i) - w_f(N-2(2n+1)\\
& = \sum_{i=1}^{2n}\epsilon (N-2i) - \{(w_0 - \bar{w})(N-2) + w_f(N-2(2n+1))\}\\
& = \epsilon (2nN - 2n(2n+1)) - \{(w_0 - \bar{w})(N-2) + w_f(N-2(2n+1))\}\\
& = (w_0 - \bar{w} + w_f)(N- 2n - 1) - \{(w_0 - \bar{w})(N-2) + w_f(N-2(2n+1))\} \\
& = w_f (2n + 1) - (w_0 - \bar{w})(2n - 1)
\end{align*}

Substituting $w_0 - \bar{w} = w_f$ immediately tells us that $RHS - LHS \geq 0$, and thus we complete the lemma. $\hfill\qed$
\end{proof}

\subsection*{General Properties of Matchings}

Here we restate a simple lemma from \cite{anshelevichS16}. Since the proof is quite easy, we re-state the full proof here for completeness.

\begin{lemma}
\label{lem_matchingupper}
(Upper Bound) Let $G=(T,E)$ be a complete subgraph on the set of nodes $T \subseteq \mathcal{S}$ with $|T|=n$, and let $M$ be any perfect matching on the larger set $\mathcal{S}$. Then, the following is an upper bound on the weight of $M$,
$$w(M) \leq \frac{2}{n} \sum_{\substack{x \in T \\ y \in T}} w(x,y) + \frac{1}{n}\sum_{\substack{x \in T \\ y \in \mathcal{S} \setminus T}}w(x,y) $$
\end{lemma}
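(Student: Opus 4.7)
The bound compares the weight of an arbitrary perfect matching $M$ on $\mathcal{S}$ to sums of edge weights that involve at least one endpoint in $T$, and since $T$ is only a subset of $\mathcal{S}$, the only hope of moving weight ``into'' $T$ is via the triangle inequality, which is indeed available by the running metric assumption. So the plan is: for each edge $(a,b) \in M$, relay it through every node in $T$.

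Concretely, first I would fix any edge $(a,b) \in M$ and any $z \in T$, and invoke the triangle inequality $w(a,b) \leq w(a,z) + w(z,b)$. Summing these $n = |T|$ inequalities over all $z \in T$ gives
$$n \, w(a,b) \;\leq\; \sum_{z \in T}\bigl(w(a,z) + w(z,b)\bigr).$$
Next, I would sum this over all edges $(a,b) \in M$. The crucial observation is that because $M$ is a \emph{perfect} matching on $\mathcal{S}$, each node $u \in \mathcal{S}$ appears as an endpoint of exactly one edge of $M$, so interchanging the order of summation yields
$$n \, w(M) \;\leq\; \sum_{z \in T}\sum_{u \in \mathcal{S}} w(u,z).$$

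The final step is bookkeeping: split the inner sum according to whether $u \in T$ or $u \in \mathcal{S}\setminus T$. The piece with $u \in T$ gives a double sum over $T \times T$, which (interpreting the lemma's $\sum_{x\in T, y\in T}$ as an unordered pair sum, hence counted twice once we run over ordered $(u,z)$) accounts for the factor $\tfrac{2}{n}$ in front of the first term. The piece with $u \in \mathcal{S}\setminus T$ directly gives the cross-term $\tfrac{1}{n}\sum_{x\in T,\, y\in \mathcal{S}\setminus T} w(x,y)$. Dividing both sides of the displayed inequality by $n$ completes the proof.

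There is no real obstacle here: the entire argument is a single application of the triangle inequality followed by a change in the order of summation, which works precisely because $M$ is perfect (otherwise unmatched nodes in $\mathcal{S}$ would break the clean rearrangement). The only mildly delicate point is matching the convention for the $\sum_{x,y \in T}$ sum to the factor $\tfrac{2}{n}$; this is just a counting check rather than a substantive difficulty.
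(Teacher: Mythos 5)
Your proposal is correct and is essentially the same argument as the paper's: relay each edge of $M$ through all $n$ nodes of $T$ via the triangle inequality, sum, and rearrange, with the factor of $2$ arising because each intra-$T$ edge is counted from both of its endpoints. The paper's write-up is terser but uses the identical decomposition.
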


\begin{proof}
Fix an edge $e=(x,y) \in M$. Then, by the triangle inequality, the following must hold for every node $z \in T$: $w(x,z)+w(y,z) \geq w(x,y)$. Summing this up over all $z \in T$, we get
$$\sum_{z \in T}w(x,z) + w(y,z) \geq n w(x,y) = n(w_e).$$

Once again, repeating the above process over all $e \in M$, and then all $z \in T$ we have
$$nw(M) \leq 2 \sum_{\substack{x \in T \\ y \in T}} w(x,y) + \sum_{\substack{x \in T \\ y \in \mathcal{S} \setminus T}}w(x,y) $$

Each $(x,y) \in E$ appears twice in the RHS: once when we consider the edge in $M$ containing $x$, and once when we consider the edge with $y$.
\end{proof}

\subsection{Properties of Greedy Matchings}

\subsubsection*{Notation} We begin with some notation that helps us better characterize the solution returned by the greedy algorithm. Suppose that $GR$ denotes the output (set of edges) of the greedy algorithm (Algorithm \ref{alg_greedy}) for a given instance. Then, we use $w^{GR}_i$ to represent the weight of the $i^{th}$
largest edge in $GR$. For the rest of this proof, we will be abusing notation when expressing the total weight of edges inside a set. Specifically, $(i)$ for a set of edges $E$ (for e.g., $GR$), $w(E)$ ($w(GR)$) will denote the sum of the weights of edges inside $E$ (total weight of solution returned by greedy algorithm), $(ii)$ for a set of nodes $S$, $w(S)$ denotes the the total weight of the edges inside the induced graph on $S$, and $(iii)$ for disjoint $S,T \subseteq \mathcal{N}$, $w(S,T)$ is the total weight of the edges induced in the (complete) bipartite graph between $S,T$.

\subsubsection*{Top $x$-matching}
Given an instance, the corresponding greedy matching $GR$, and a fraction $x \leq 1$, we define the top $x$-matching $GR_x$, with respect to the given greedy matching to be the set of the top (maximum weight) $x$ fraction of edges inside $GR$. That is, if $|GR| = n$, $|GR_x| = xn$ and every edge in $GR_x$ has a weight no smaller than any edge in $GR \setminus GR_x$. Finally, let $\mathcal{N}(GR_x)$ denote the nodes that together make up $GR_x$.

Our first proposition highlights certain fundamental but very crucial properties that apply to all greedy matchings. These properties, especially local stability, will provide us with a much nicer platform towards a more detailed analysis of the greedy matching.

\begin{proposition}
\label{prop_greedyfundamental}
Suppose that $GR$ ($|GR| = n$) denotes the output of Algorithm~\ref{alg_greedy} for some instance $\mathcal{N}$, $GR' \subseteq GR$ and $T = \mathcal{N}(GR')$ is the set of nodes that form the edges in $GR'$.
\begin{enumerate}
\item \textbf{(Local Stability)} Suppose that Algorithm~\ref{alg_greedy} is run on the sub-instance consisting only of the nodes in $T$ with the same set of (sub) preferences. Then its output has to be $GR'$.

\item Suppose that $GR' = GR_x$ for some $x \leq 1$, and let $k=xn+1$. Then, in the induced subgraph on $\mathcal{N} \setminus T$, the maximum weight of any edge is at most $w^{GR}_k$.
\end{enumerate}
\end{proposition}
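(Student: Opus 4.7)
The plan is to exploit a simple monotonicity principle: when an edge $(x,y)$ is undominated relative to some available-node set $A$, it remains undominated relative to any smaller subset $A' \subseteq A$ containing $x$ and $y$, because shrinking $A$ only removes competitor edges from consideration. Part 1 uses this to mirror the original greedy run on the restricted instance, while Part 2 uses the undomination invariant at the specific moment a particular greedy edge was picked.

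For Part 1, I would induct on the order in which the edges of $GR'$ were chosen during the original greedy run. Enumerate them $e_1, \ldots, e_m$ in that order, let $A_i \subseteq \mathcal{N}$ be the set of available nodes in the original greedy just before $e_i$ is selected, and let $T_i \subseteq T$ be the corresponding set in the run on $T$ after $e_1, \ldots, e_{i-1}$ have been taken. The central observation is that $T_i \subseteq A_i$: nodes of $T$ are exactly endpoints of $GR'$-edges, and the only $GR'$-edges that have disappeared from the original pool by the time $e_i$ is picked are $e_1, \ldots, e_{i-1}$ (any $GR \setminus GR'$ edges taken earlier removed nodes outside $T$ and so do not shrink $T_i$ further). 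Since $e_i$ is undominated in $A_i$, it is also undominated in the smaller set $T_i$, so greedy on $T$ can, and with consistent tie-breaking does, pick $e_i$ next. After $m$ steps every node of $T$ is matched, so the output is exactly $GR'$.

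For Part 2, fix any edge $(a,b)$ with $a,b \in \mathcal{N} \setminus T$. In the perfect-matching regime in which this proposition is actually applied, both $a$ and $b$ are matched in $GR$; call their partners $a'$ and $b'$. Since $a \notin T = \mathcal{N}(GR_x)$ and $GR$ is a matching, $(a,a') \notin GR_x$, so $w(a,a') \leq w^{GR}_{xn+1}$, and similarly $w(b,b') \leq w^{GR}_{xn+1}$. Assume without loss of generality that $(a,a')$ is selected before $(b,b')$ in the original greedy. At that moment $b$ is still available, and because $(a,a')$ is undominated when it is picked we get $w(a,a') \geq w(a,b)$. Chaining the two inequalities gives $w(a,b) \leq w(a,a') \leq w^{GR}_{xn+1}$, as required.

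The main technical care sits in Part 1: greedy on $T$ might face several undominated choices at the same step, so one has to argue either (a) under generic distinct edge weights the only undominated edge at step $i$ is $e_i$, or (b) any alternative undominated edge $(p,q)$ in $T_i$ must satisfy $w(p,q) = w(p,p')$ (where $(p,p') \in GR'$) by combining its undomination in $T_i$ with the undomination of $(p,p')$ when it was originally taken, so a consistent tie-breaking rule inherited from the original run reproduces $GR'$ exactly. I expect case (a) is what the authors intend implicitly, with (b) as the honest handling of ties. Part 2 is essentially a corollary of the same viewpoint; the only caveat worth flagging is the tacit assumption that every node of $\mathcal{N} \setminus T$ is matched by $GR$, which is automatic in the perfect-matching setting where the proposition is invoked.
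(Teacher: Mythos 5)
Your monotonicity principle (an undominated edge stays undominated when the available set shrinks) is correct and is a genuinely different launching point from the paper, which instead argues by contradiction: it takes the first index $t$ at which $GR'$ and the sub-run's output disagree, produces a competitor $(x,z)$ with $z\in e_{t_2}$, $t_2>t$, and derives the preference cycle ``$x$ prefers $z$ to $y$'' (from the sub-run) and ``$x$ prefers $y$ to $z$'' (from the original run at step $t$, since $z$ is still available there). Your Part~2, on the other hand, is actually tighter and more direct than the paper's; the paper reasons through the slogan ``a max-weight edge is always undominated and undominated edges never cease to be undominated,'' which is correct in spirit but looser than your explicit charge $w(a,b)\le w(a,a')\le w^{GR}_{xn+1}$. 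Your caveat that every node of $\mathcal{N}\setminus T$ must be matched by $GR$ (true in the perfect-matching regime where the proposition is used) is well placed.

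However, there is a real gap in your wrap-up of Part~1. Your alternative (a) --- that under generic distinct weights $e_i$ is the \emph{only} undominated edge in $T_i$ at step $i$ --- is false. Even with strictly distinct weights, several edges can be undominated simultaneously (e.g., two disjoint edges each of which is locally heaviest), and in particular any $e_j$ with $j>i$ can be undominated at step $i$ alongside $e_i$. The authors do not rely on (a); their contradiction argument never needs to pin down a unique next pick. What your argument actually needs (and almost has, via the cross-comparison in (b)) is the weaker statement: under distinct weights, every undominated edge of $T_i$ lies in $GR'\setminus\{e_1,\dots,e_{i-1}\}$. Indeed, if $(p,q)$ is undominated in $T_i$ with $p,q\in T_i$ and $(p,q)\notin GR'$, let $p'$ be $p$'s partner in $GR'$; then undomination in $T_i$ gives $w(p,q)\ge w(p,p')$, while the undomination of $(p,p')$ when it was originally selected (with $q$ still available) gives $w(p,p')\ge w(p,q)$, forcing a tie. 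Once you have this, the sub-run can only ever select edges of $GR'$, so after $|GR'|$ steps $T$ is exhausted and the output set must equal $GR'$ regardless of the order in which its edges are taken; no ``consistent tie-breaking inherited from the original run'' is needed. As written, though, (a) is wrong and (b) is gestured at rather than completed, so I would not accept Part~1 without this repair.
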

\begin{proof}
We prove both the properties by contradiction. Suppose that local stability does not hold and running the greedy algorithm on the subinstance returns  a matching $GR'' \neq GR'$. Consider $GR' = (e_1, e_2, \ldots, e_r)$, where the edges are ordered based on the position in which they were selected by Algorithm~\ref{alg_greedy} on the full instance, i.e., $e_1$ was first edge in $GR'$ to be selected by the algorithm, $e_2$ was the second edge and so on.

Denote by $1 \leq t \leq r$, the smallest index such that $e_t = (x,y) \notin GR''$. Without loss of generality, suppose that the greedy matching algorithm on the sub-instance picks an edge containing $x$ before it picks an edge containing $y$. Call this edge $(x,z)$; then we know exists some $z \neq y$, such that $(i)$ $(x,z) \in GR''$, and $(ii)$ $z \in e_{t_2}$ for some $t_2 > t$. Notice that by then definition of $t$, $(e_1, \ldots, e_{t-1}) \subseteq  GR''$ and so $z$ cannot belong to any of these edges if it is matched to $x$.

Consider the round in which the greedy algorithm on the sub-instance picked $(x,z)$: in this round $(x,z)$ is an undominated edge and $y$ is still available. Therefore, $x$ prefers $z$ to $y$. However, consider the round in which the original greedy algorithm picked $(x,y)$, clearly $(x,y)$ is once again undominated and $z$ was still avaiable in this round as $t_2 > t$. Therefore, with respect to this algorithm $x$ prefers $y$ to $z$, which is an overall contradiction since we assumed that the preferences between nodes in $T$ are not altered.

The second part of the lemma is easier to show. Assume by contradiction that $\exists$ an edge $(x^*, y^*)$ such that $w(x^*,y^*) > w^{GR}_k$ is the maximum weight edge in the given induced subgraph. A maximum weight edge is always undominated, and an undominated edge never ceases to be undominated, therefore $(x^*, y^*) \in GR$ and more specifically $(x^*, y^*) \in GR \setminus GR_x$. However, this violates the fact that $w^{GR}_k$ is the largest weight edge in $GR \setminus GR_x$. $\hfill\qed$
\end{proof}

Local stability has powerful consequences. For a given instance, we can take a subset of the greedy matching and show that all of the properties that apply to greedy matchings in general also apply to the subset, as it can be treated as an independent greedy matching. For the rest of this proof, we will treat greedy sub-matchings as independent greedy matchings on sub-instances, when it suits our needs.

The next lemma proves a simple but somewhat surprising fact. It is well-known that the greedy matching algorithm provides a $\frac{1}{2}$-approximation to the optimum matching. However, we show something much stronger: in order to get the same approximation factor, it is enough to consider only the heaviest $\frac{N}{4}$ edges and completely ignore the rest. This allows us to `further optimize' on the remaining edges using a random matching.

\begin{lemma}
\label{lem_greedy_tophalf}
Consider some instance $\mathcal{N}$: let $GR$ be the output of the greedy algorithm for this instance, and $OPT$ is the value of the maximum weight perfect matching for this instance. Then,

$$w(GR_\frac{1}{2}) \geq \frac{OPT}{2}.$$

\end{lemma}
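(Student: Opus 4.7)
My plan is to adapt the classical ``greedy is a $2$-approximation'' charging argument, but to account for the weights more carefully using a rearrangement-type bound so that only the top half $GR_{1/2}$ is needed to absorb the charge from $OPT$.

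Let $e_1,\dots,e_{N/2}$ denote the edges of $GR$ indexed in the order Algorithm~\ref{alg_greedy} selected them, and for each node $v$ let $t_v$ be the round in which $v$ got matched by greedy. For each OPT edge $o=(x,y)$, I would \emph{charge} $o$ to $e_{t_o}$ where $t_o := \min(t_x,t_y)$, i.e., to the earliest-picked greedy edge containing an endpoint of $o$. The key observation is the \textbf{undominatedness property}: at round $t_o$, both $x$ and $y$ are still available, and $e_{t_o}$ is undominated among the available edges, so $w(e_{t_o})\ge w(x,y)=w(o)$. Because $OPT$ is a matching, at most two of its edges can contain an endpoint of any fixed $e_t$, so if $c_t$ denotes the number of OPT edges charged to $e_t$, then $c_t \le 2$ and $\sum_t c_t = |OPT| = N/2$.

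The step that upgrades the usual $2$-approximation to the stronger bound is the following rearrangement argument. Reindex the greedy edges in non-increasing order of weight, so that (after relabeling) $w(e_{(1)})\ge w(e_{(2)})\ge \dots\ge w(e_{(N/2)})$ and $w(GR_{1/2}) = \sum_{i=1}^{N/4} w(e_{(i)})$. Then
\[
w(OPT) \;=\; \sum_{o\in OPT} w(o) \;\le\; \sum_{t} c_t\, w(e_t) \;=\; \sum_{i=1}^{N/2} c_{(i)}\, w(e_{(i)}).
\]
Subject to the constraints $c_{(i)}\in\{0,1,2\}$ and $\sum_i c_{(i)} = N/2$ with the $w(e_{(i)})$ fixed in non-increasing order, the sum $\sum_i c_{(i)} w(e_{(i)})$ is maximized by loading the two units of charge onto the heaviest edges first, i.e.\ by taking $c_{(1)}=\dots=c_{(N/4)}=2$ and $c_{(i)}=0$ for $i>N/4$. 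This yields the upper bound $2\sum_{i=1}^{N/4} w(e_{(i)}) = 2\,w(GR_{1/2})$, and hence $w(OPT)\le 2\,w(GR_{1/2})$, i.e., $w(GR_{1/2})\ge OPT/2$.

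The main things to verify carefully are: (i) the undominatedness inequality $w(e_{t_o})\ge w(o)$, which uses that both endpoints of $o$ are genuinely available at round $t_o$ (this is immediate since the endpoint matched at round $t_o$ is present at the start of that round and the other has not yet been matched); (ii) the bookkeeping $c_t\le 2$, including the degenerate case $o=e_t\in GR\cap OPT$, where $o$ still charges itself once; and (iii) the rearrangement bound, which is the only nonroutine ingredient and is the place where the proof crucially uses that $GR_{1/2}$ contains exactly the heaviest $N/4$ greedy edges rather than some other half. I expect the formal rearrangement step to be the main obstacle to write down cleanly, but it is essentially a one-line application of the fact that a sum $\sum c_i w_i$ with bounded $c_i$ and fixed total $\sum c_i$ is maximized by concentrating $c_i$ on the largest $w_i$.
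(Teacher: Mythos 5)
Your proof is correct and follows essentially the same strategy as the paper's: charge each edge of $OPT$ to a dominating greedy edge so that each greedy edge receives at most two units of charge, then apply a rearrangement argument to conclude that the resulting weighted sum is at most $2\,w(GR_{1/2})$. Your explicit charging rule (to the greedy edge of the earlier-matched endpoint) is a clean instantiation of what the paper states slightly more informally, and your rearrangement step corresponds exactly to the paper's invocation of Corollary~\ref{corr_generic_weights1}.
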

\begin{proof}
We proceed via the standard charging argument applied to prove the half-optimality of the greedy algorithm. Pick any edge in $OPT$, say $e=(x,y)$, if $(x,y) \in GR$, we charge the edge to itself. Otherwise, at least one of $x$ or $y$ must be matched to an edge that yields it the same or better utility, i.e., w.l.o.g, $\exists (x,z) \in GR$ such that $w(x,z) \geq w(x,y)$. In this case, we charge $(x,y)$ to $(x,z)$. Clearly, every edge in $GR$ has anywhere between $0$ to $2$ edges (from $OPT$) assigned to it.

For any $e \in GR$, suppose that $s_e$ is the number of edges from $OPT$ assigned to $e$. By our charging argument, the following inequality must be true,

$$OPT \leq \sum_{i=1}^{\frac{N}{2}} w^{GR}_i s_{e_i},$$

where $e_i$ is the $i^{th}$ largest edge belonging to $GR$. Observe that $\sum_{i=1}^{\frac{N}{2}}s_{e_i} = \frac{N}{2}$. Consisder an alternative `slot vector', $(\vec{q})_{e_i}$ such that $q_{e_i} = 2$ if $i \leq \frac{N}{4}$ and $q_{e_i} = 0$ otherwise. As per Corollary~\ref{corr_generic_weights1}, we know that

$$\sum_{i=1}^{\frac{N}{2}}w^{GR}_i s_{e_i} \leq \sum_{i=1}^{\frac{N}{2}}w^{GR}_i q_{e_i} = 2w(GR_{\frac{1}{2}}).\hfill\qed$$

\end{proof}

\subsubsection*{Greedy Matchings: Induced Distances}
In the following series of lemmas, we prove upper bounds on the total weight of induced edges inside sets of nodes based on associated greedy matchings.

\begin{lemma}
\label{lem_gr_upperbound}
Let $T$ be some set of nodes with $|T|=n$ and let $GR$ denote the output of the greedy algorithm for this instance. Then,

$$w(T) \leq w(GR) + \sum_{i=1}^{n/2}2w^{GR}_i \{n-2i\}.$$
\end{lemma}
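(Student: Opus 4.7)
The natural strategy is a charging argument. I want to bound the excess $w(T) - w(GR)$, i.e., the total weight of edges inside $T$ that lie outside the greedy matching, by assigning each such edge to some edge of $GR$ so that (a) each charge is ``paid for'' locally by the undominated-edge property and (b) the number of edges charged to the $j$-th picked edge of $GR$ is exactly the factor $n-2j$ that appears on the right-hand side.

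\textbf{Step 1 (Charging by round).} Index the edges of $GR$ by the round in which they are chosen by Algorithm~\ref{alg_greedy}, writing $f_j = (x_j,y_j)$ for the edge added in round $j$, so that $GR = \{f_1,\dots,f_{n/2}\}$. For any edge $e=(u,v)$ in the induced graph on $T$ with $e \notin GR$, let $j(e)$ be the first round in which one of $u,v$ gets matched. Then both $u$ and $v$ are available at the start of round $j(e)$ (otherwise one would have been matched earlier). Since $e \neq f_{j(e)}$, exactly one of $u,v$ is an endpoint of $f_{j(e)}$ and the other is some still-available node distinct from $x_{j(e)},y_{j(e)}$. Because $f_{j(e)}$ is undominated at that round, $w(e) \leq w(f_{j(e)})$.

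\textbf{Step 2 (Counting per round).} Fix a round $j$. At the start of round $j$, exactly $2(j-1)$ nodes have been removed, so $n-2j+2$ nodes are available; excluding $x_j,y_j$ themselves leaves $n-2j$ other available nodes. Each such node $v$ can contribute at most two edges charged to $f_j$, namely $(x_j,v)$ and $(y_j,v)$. Thus $f_j$ receives at most $2(n-2j)$ charges, and combining with Step 1 gives
\begin{equation*}
w(T) \;=\; w(GR) + \sum_{e \in T\setminus GR} w(e) \;\leq\; w(GR) + \sum_{j=1}^{n/2} 2(n-2j)\,w(f_j).
\end{equation*}

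\textbf{Step 3 (Rearrangement).} The weights $w(f_j)$ on the right are indexed by \emph{picking order}, not by magnitude, whereas the statement of the lemma uses the sorted sequence $w^{GR}_i$. Since the coefficients $c_j := 2(n-2j)$ are strictly decreasing in $j$, repeated application of Lemma~\ref{lem_generic_weights1} (each application swaps two out-of-order weights and only increases the dot product with the decreasing sequence $(c_j)$) transforms $(w(f_j))_j$ into its decreasing rearrangement $(w^{GR}_i)_i$ while only increasing the sum. Hence $\sum_{j} c_j w(f_j) \leq \sum_{i} c_i w^{GR}_i$, which is exactly the bound claimed.

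\textbf{Main obstacle.} The charging itself is essentially forced by the definition of undominated edge, so the only place where care is needed is justifying the reindexing in Step 3: the greedy algorithm need not pick its edges in decreasing order of weight, so $w(f_j)$ and $w^{GR}_j$ can differ. Fortunately this is precisely the setting handled by the general weight-transfer lemmas proved earlier, so the step is clean once one observes that the coefficients $2(n-2j)$ are monotone.
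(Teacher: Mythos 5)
Your proof is correct. It arrives at the same combinatorial count as the paper — the $j$-th greedy edge absorbs $2(n-2j)$ charges, each bounded by that edge's weight — but you organize the charging by \emph{picking order} and then sort, whereas the paper organizes it by \emph{weight order} from the start. Concretely, the paper indexes greedy edges as $(x_i,y_i)$ in decreasing weight, invokes Proposition~\ref{prop_greedyfundamental} (local stability plus the fact that in the induced subgraph on the nodes of the lighter tail, the maximum edge weight is controlled by $w^{GR}_i$) to conclude directly that $w^{GR}_i \geq w(x_i,j)$ and $w^{GR}_i \geq w(y_i,j)$ for every $j$ among the $n-2i$ nodes of the lighter edges, and then just sums; no rearrangement step is needed because the coefficients already multiply the sorted weights. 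You instead charge each excess edge $e$ to the greedy edge $f_{j(e)}$ picked in the first round that touches an endpoint of $e$ — which requires nothing beyond the definition of an undominated edge, and avoids Proposition~\ref{prop_greedyfundamental} entirely — and then pass from $(w(f_j))_j$ to its decreasing rearrangement $(w^{GR}_i)_i$ via repeated applications of Lemma~\ref{lem_generic_weights1}, which is legitimate since the coefficients $2(n-2j)$ are monotone decreasing. The trade-off: your route is more elementary and modular (one local observation plus a generic rearrangement lemma), while the paper's is shorter once the structural proposition about greedy sub-matchings is in hand. Both give exactly the claimed bound, and the edge count in each is tight.
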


\begin{proof}
We know that $GR$ contains $\frac{n}{2}$ edges. Let $(x_i, y_i)$ denote the $i^{th}$ largest edge in $GR$, and for any $x$, let $\bar{T}_x$ denote the set of nodes not present in $GR_x$, i.e., $\bar{T}_x := \mathcal{N}(GR \setminus GR_x)$. For every $i \leq \frac{n}{2}$, we know from Proposition~\ref{prop_greedyfundamental} that $w^{GR}_i$ is the maximum weight edge in $\bar{T}_{\frac{2i}{n}}$. Therefore, for every such $i$, we know that $w^{GR}_i \geq w(x_i, j)$ and $w^{GR}_i \geq w(y_i, j)$ for all $j \in \bar{T}_{\frac{2i}{n}}$. Summing these up and adding a trivial inequality on both sides, we get

$$2|\bar{T}_{\frac{2i}{n}}|w^{GR}_i + w^{GR}_i = 2w^{GR}_i(n-2i) + w^{GR}_i \geq \sum_{j \in \bar{T}_{\frac{2i}{n}}}[w(x_i,j) + w(y_i,j)] + w(x_i, y_i).$$

Adding these up for all $i$ gives us the lemma. $\hfill\qed$

\end{proof}

\begin{lemma}
\label{lem_gr_crossupperbound}
Suppose that $GR$ denotes the output of the greedy matching algorithm for an instance $\mathcal{N}$. For some given $x$, define $T := \mathcal{N}(GR_x)$ and $\bar{T} = \mathcal{N} \setminus T$. Then,

$$w(T, \bar{T}) \leq 2w(GR_x) (|\bar{T}_x|).$$
\end{lemma}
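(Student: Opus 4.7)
The plan is to exploit the undominated structure of the greedy matching, essentially in the same spirit as the proof of Lemma~\ref{lem_gr_upperbound}, but now for bipartite pairs instead of pairs inside $T$. Since we are in the weighted perfect matching regime, every node of $\mathcal{N}$ lies in some edge of $GR$, so $\bar{T} = \mathcal{N}\setminus\mathcal{N}(GR_x) = \mathcal{N}(GR\setminus GR_x) = \bar{T}_x$. This identification is the only place we use that $GR$ is perfect, so the two quantities in the statement coincide.

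First I would fix an edge $e = (x_i,y_i)\in GR_x$ and any node $j\in\bar{T}$. Since $j\notin\mathcal{N}(GR_x)$, the node $j$ was still available at the moment Algorithm~\ref{alg_greedy} selected $e$ (otherwise $j$ would already belong to some edge chosen before $e$, hence to $GR_x$, contradicting $j\in\bar{T}$). Because $e$ was undominated at that moment, the defining inequalities of undominatedness give
\[
w(x_i,j)\leq w(x_i,y_i) \quad\text{and}\quad w(y_i,j)\leq w(x_i,y_i).
\]

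Summing these two inequalities over all $j\in\bar{T}$ yields, for each fixed edge $e=(x_i,y_i)\in GR_x$,
\[
\sum_{j\in\bar{T}}\bigl(w(x_i,j)+w(y_i,j)\bigr) \;\leq\; 2|\bar{T}|\, w(x_i,y_i).
\]
Now I would sum over all edges of $GR_x$. On the left, every pair with one endpoint in $T$ and the other in $\bar{T}$ is counted exactly once (because the endpoints of distinct edges of $GR_x$ are disjoint and together form $T$), which gives
\[
w(T,\bar{T}) \;=\; \sum_{(x_i,y_i)\in GR_x}\sum_{j\in\bar{T}}\bigl(w(x_i,j)+w(y_i,j)\bigr).
\]
On the right we obtain $2|\bar{T}|\,w(GR_x) = 2|\bar{T}_x|\,w(GR_x)$, which is exactly the claimed bound.

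The proof is essentially a one-step argument, so there is no serious obstacle; the only point that requires care is the observation that $j\in\bar{T}$ remains available throughout the selection of $GR_x$, and this is an immediate consequence of $T = \mathcal{N}(GR_x)$ together with the perfect-matching assumption implicit in the section where this lemma is invoked.
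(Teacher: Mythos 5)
Your decomposition of $w(T,\bar T)$ and the identification $\bar T = \bar T_x$ in the perfect-matching setting are both correct, and the final inequality chain is the right one. However, there is a genuine gap in the single step on which the whole argument rests: the claim that $j\in\bar T$ was ``still available at the moment Algorithm~\ref{alg_greedy} selected $e$,'' justified by ``otherwise $j$ would already belong to some edge chosen before $e$, hence to $GR_x$.'' This tacitly assumes that the greedy algorithm selects edges in non-increasing weight order, so that the top-$x$ fraction by weight coincides with the first-$x$ fraction by selection time. Algorithm~\ref{alg_greedy} only says ``pick an undominated edge,'' not the heaviest one (indeed, being ordinal, it cannot in general identify the heaviest), and it is easy to build metric instances with two disjoint mutually-top pairs where the algorithm may legitimately select the lighter pair first; in such a case a node $j\in\bar T$ is already matched before the heaviest edge of $GR_x$ is chosen, and your ``still available'' claim is false.

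The gap is patchable without changing the shape of your argument. If $j$ is already matched to some $j'$ when $e=(x_i,y_i)$ is selected, then at the step when $(j,j')$ was chosen both $x_i$ and $y_i$ were still available, so undominatedness of $(j,j')$ gives $w(x_i,j)\le w(j,j')$ and $w(y_i,j)\le w(j,j')$. Moreover $j\notin T=\mathcal N(GR_x)$ forces $(j,j')\in GR\setminus GR_x$, so $w(j,j')\le w^{GR}_{|GR_x|+1}\le w^{GR}_i = w(x_i,y_i)$, and the same per-pair bound follows. The paper instead follows the proof of Lemma~\ref{lem_gr_upperbound}: it invokes Proposition~\ref{prop_greedyfundamental} (local stability and the bound on edge weights outside $\mathcal N(GR_{x'})$), which is agnostic to the order in which the greedy procedure happened to pick its edges. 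Either repair works; the paper's route has the advantage of reusing machinery already established, while your direct argument is more elementary once the ``already matched'' case is treated.
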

The proof is very similar to that of the previous lemma, so we do not go over it again.

\begin{lemma}
\label{lem_grupperequalweight}
Consider some set of nodes $T$ with $|T|=n$ and let $GR$ denote the output of the greedy algorithm for this instance. Moreover, suppose that the maximum weight edge in $GR$, $w^{GR}_1\leq w^*$. Then,

$$w(T) \leq 2w(GR)\{n - \frac{w(GR)}{w^*}\}.$$


\end{lemma}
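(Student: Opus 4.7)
The plan is to derive this bound from Lemma~\ref{lem_gr_upperbound} by an exchange/rearrangement argument that concentrates as much of the total greedy weight $w(GR)$ as possible onto the edges with the largest coefficients. First I would apply Lemma~\ref{lem_gr_upperbound} to the set $T$ to obtain
\[
w(T) \;\leq\; w(GR) + \sum_{i=1}^{n/2} 2w^{GR}_i(n-2i).
\]
Now I would regard the sequence $(w^{GR}_i)_{i=1}^{n/2}$ as a weight vector constrained by $\sum_i w^{GR}_i = w(GR)$, $w^{GR}_i \in [0,w^*]$, and $w^{GR}_1 \geq w^{GR}_2 \geq \cdots$. Since the coefficient sequence $a_i := 2(n-2i)$ is strictly decreasing in $i$, the value of $\sum_i w^{GR}_i a_i$ can only increase when mass is shifted from a higher index to a lower one.

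Next I would formalize this by comparing $(w^{GR}_i)$ to the ``front-loaded'' vector $(v_i)$ defined by $v_i = w^*$ for $i \leq t := w(GR)/w^*$ and $v_i = 0$ otherwise (if $t$ is not an integer, put the leftover mass on index $\lceil t \rceil$). Both vectors have the same total weight $w(GR)$, and for $r \leq t$ we have $v_r = w^* \geq w^{GR}_r$ while for $r > t$ we have $v_r = 0 \leq w^{GR}_r$. This is exactly the hypothesis of Corollary~\ref{corr_generic_weights1}, so
\[
\sum_{i=1}^{n/2} 2w^{GR}_i(n-2i) \;\leq\; \sum_{i=1}^{n/2} 2 v_i(n-2i) \;=\; 2w^* \sum_{i=1}^{t}(n-2i) \;=\; 2w^* t\bigl(n-t-1\bigr).
\]

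Finally I would combine the two bounds and simplify using $w^* t = w(GR)$:
\[
w(T) \;\leq\; w(GR) + 2w^* t(n-t-1) \;=\; w(GR)\bigl(2n - 2t - 1\bigr) \;\leq\; 2w(GR)\!\left(n - \tfrac{w(GR)}{w^*}\right),
\]
where the last step discards the harmless $-w(GR)$ term. The main obstacle I anticipate is simply being careful with the non-integer case of $t$ when invoking Corollary~\ref{corr_generic_weights1}; this is a minor bookkeeping issue since the corollary only needs a single ``cutoff index'' where the two vectors cross, and a fractional top-up at index $\lceil t\rceil$ preserves this structure and does not affect the closed-form sum up to the slack we already throw away at the end.
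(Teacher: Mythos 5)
Your proposal is correct and takes essentially the same route as the paper: start from Lemma~\ref{lem_gr_upperbound}, majorize the weight vector $(w^{GR}_i)$ by the front-loaded vector capped at $w^*$ via the single-crossing rearrangement bound (the paper phrases this as repeated application of Lemma~\ref{lem_generic_weights1}, while you invoke Corollary~\ref{corr_generic_weights1} directly, which is the same thing), and then simplify $w(GR) + 2w(GR)(n-t-1) \leq 2w(GR)(n-t)$. The only cosmetic issue is that for non-integer $t$ the middle step should be an inequality $\sum 2v_i(n-2i) \leq 2w^*t(n-t-1)$ rather than an equality, but as you note this is minor bookkeeping and the slack works in the right direction; the paper handles it more explicitly with $t=\lfloor w(GR)/w^*\rfloor$ and a remainder term $r$.
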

\begin{proof}
As a consequence of Lemma~\ref{lem_gr_upperbound}, we have that

$$w(T) \leq w(GR) + \sum_{i=1}^{n/2}2w^{GR}_i \{n-2i\}.$$

We know that for all $i$, $w^{GR}_i \leq w^*$.  Let $t=\lfloor\frac{w(GR)}{w^*}\rfloor$ be the maximum number of $w^*$ values that fit into $w(GR)$ and $r=w(GR)-tw^*$ be the remainder. Then, we can construct the following alternative weight vector: $w^1_i = w^*$ if $i \leq t$, $w^1_{t+1}=r$, and $w^1_i = 0$, otherwise. Using Lemma~\ref{lem_generic_weights1} repeatedly, we get a simplified inequality,

\begin{eqnarray*}
\sum_{i=1}^{n/2}2w^{GR}_i \{n-2i\} \leq \sum_{i=1}^{\frac{n}{2}}2w^1_i (n-2i) = \sum_{i=1}^{t}2w^*(n-2i) + 2r(n-2(t+1)) =\\
= 2w^*(nt - t(t+1)) + 2r(n-2(t+1)) = 2(w^*t+r)(n-t-1) - 2r(t+1)\\ \leq 2w(GR)(n-t-1) - 2(rt+r^2/w^*) = 2w(GR)(n-t-\frac{r}{w^*}-1) = \\ = 2w(GR)(n-\frac{w(GR)}{w^*}-1).
\end{eqnarray*}

We now wrap up the lemma,

$$w(T) \leq w(GR) + 2w(GR)(n - \frac{w(GR)}{w^*} - 1) \leq 2w(GR)(n-\frac{w(GR)}{w^*}). \hfill\qed$$
\end{proof}

Our next lemma is perhaps among the most crucial and technically involved of the lemmas in this matching proof.

\begin{lemma}
\label{lem_grtotaldist2parts}
Suppose that $GR$ denotes the output of the greedy algorithm for a given instance described by a set $T$ of nodes with $|T|=n$. Moreover, suppose that for a given $x$ in the range $[0,\frac{1}{4}]$, we have $w(GR_{2x}) \leq \frac{1}{2}w(GR)$. Then,

$$\frac{1}{n}w(T) \leq 2w(GR)\{1 - 2x\}.$$

\end{lemma}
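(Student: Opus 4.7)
The plan is to combine Lemma~\ref{lem_gr_upperbound} with the head-weight constraint $w(GR_{2x}) \leq w(GR)/2$ via a weight-extremization argument on the greedy matching profile. First, I would apply Lemma~\ref{lem_gr_upperbound} to $T$ to obtain
$$w(T) \;\leq\; w(GR) + \sum_{i=1}^{n/2} 2\,w^{GR}_i\,(n-2i),$$
which reduces the task to upper-bounding the sum on the right using only the monotone structure of $(w^{GR}_i)$, the total $\sum_i w^{GR}_i = w(GR)$, and the constraint $\sum_{i=1}^{xn} w^{GR}_i \leq w(GR)/2$.

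The core observation I want to establish is the following extremal claim: among all nonincreasing vectors $(w_i)_{i=1}^{n/2}$ with $\sum w_i = w(GR)$ and $\sum_{i\leq xn} w_i \leq w(GR)/2$, the rectangular vector $v$ defined by $v_i = w_0 := w(GR)/(2xn)$ for $i \leq 2xn$ and $v_i = 0$ for $i > 2xn$ maximizes $\sum w_i(n-2i)$. Note that $v$ is well-defined and monotone precisely because $2xn \leq n/2$, which is where the hypothesis $x \leq 1/4$ enters.

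The main obstacle is proving this extremal claim cleanly. Intuitively, mass at indices $i > 2xn$ contributes less (because $n-2i$ is smaller there), so it should be pushed forward; however the head constraint prevents pushing mass all the way into $i \leq xn$. I would make this precise with a two-phase weight-shifting argument reminiscent of those used to prove the preceding lemmas. In the first phase, I shift mass from $i > xn$ into $i \leq xn$ until the head constraint becomes tight, which only increases the objective by Lemma~\ref{lem_generic_weights1} (since coefficients $n-2i$ are decreasing in $i$). In the second phase, with the head summing to exactly $w(GR)/2$, I apply the smoothing of Lemma~\ref{lem_genericweights2} iteratively: any vector with a ``bump'' at some small index and a ``tail'' at some large index can be flattened to a constant value on a prefix ending at the equal-mass cut-off $2xn$, without decreasing $\sum w_i(n-2i)$ and without breaking monotonicity. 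The result of this iteration is exactly $v$.

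Having established the extremal claim, I would finish by a direct computation:
$$\sum_{i=1}^{n/2} 2\,v_i(n-2i) \;=\; 2 w_0 \sum_{i=1}^{2xn}(n-2i) \;=\; 2w_0\cdot 2xn(n-2xn-1) \;=\; 2\,w(GR)(n-2xn-1).$$
Plugging this into the bound from Lemma~\ref{lem_gr_upperbound} gives
$$w(T) \;\leq\; w(GR) + 2\,w(GR)(n-2xn-1) \;=\; w(GR)\bigl[\,2n(1-2x) - 1\,\bigr] \;\leq\; 2n\,w(GR)(1-2x),$$
and dividing through by $n$ yields the claimed inequality $\tfrac{1}{n}w(T) \leq 2\,w(GR)(1-2x)$.
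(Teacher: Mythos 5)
Your proposal takes essentially the same route as the paper: apply Lemma~\ref{lem_gr_upperbound}, then argue by weight extremization (via Lemma~\ref{lem_generic_weights1}, Corollary~\ref{corr_generic_weights1}, and Lemma~\ref{lem_genericweights2}) that the rectangular vector supported on $\{1,\ldots,2xn\}$ is the maximizer of $\sum w_i(n-2i)$ under the head-mass and monotonicity constraints, then do the same arithmetic. The paper executes the extremization in a slightly different order---it first molds each half of the weight vector into the exact form $(w_0,\bar w,\ldots,\bar w,w_f)$ that Lemma~\ref{lem_genericweights2} requires (its Sub-Claims 1--2), then balances the two halves, then flattens (Sub-Claim 3)---but this is a presentational difference, not a different argument; your note that $x\le\tfrac14$ is what keeps the rectangular support inside the $n/2$ available indices is a correct and useful clarification that the paper leaves implicit.
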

\begin{proof}
From Lemma~\ref{lem_gr_upperbound}, we know that
$$w(T) \leq w(GR) + \sum_{i=1}^{n/2}2w^{GR}_i \{n-2i\}.$$

Our goal for this lemma is to show that (over all possible distributions of greedy edge weights satisfying the condition in the lemma), the maximum value of the second term in the above inequality is obtained when the top $2xn$ edges of the greedy matching all have the same weight, specifically $\frac{w(GR)}{2xn}$. First, define $m = xn$, i.e., $w_m^{GR}$ is the weight of the smallest edge in $GR_{2x}$ (since $GR$ has $n/2$ edges, $GR_{2x}$ will have $xn$ edges). Our proof will crucially depend on the weight of $m^{th}$ heaviest edge in $GR$, so let us use $\bar{w}$ to denote $w^{GR}_m$. We begin with some less ambitious sub-claims before showing the main result.

$$ \text{(Sub-Claim 1)} \sum_{i=1}^{m}2w^{GR}_i \{n-2i\} \leq 2w_0(n-2) + \sum_{i=2}^m 2\bar{w}(n-2i),$$

where $w_0$ is defined so that $w_0 + \sum_{i=2}^m \bar{w} = \sum_{i=1}^m w^{GR}_i$. To see why this is the case, consider the two equal-length vectors $\vec{w^1} = (w_0, \bar{w}, \ldots, \bar{w})$ and $\vec{w^2} = (w^{GR}_1, \ldots, w^{GR}_m)$. Since every entity in $\vec{w^2}$ is at least $\bar{w}$ (by definition) and the two vectors sum up to the same quantity, it must be the case that $w_0 \geq \bar{w}$. So, we can apply our general Corollary~\ref{corr_generic_weights1} with $k=1$ and get the sub-claim. Next, we state a similar claim for the second half of the edges in $GR$.

$$ \text{(Sub-Claim 2)} \sum_{i=m+1}^{\frac{n}{2}}2w^{GR}_i \{n-2i\} \leq \sum_{i=m+1}^{2m} 2\bar{w}(n-2i) + 2w_{f}(n - 2(2m+1)),$$

where $w_f$ is defined so that $\sum_{i=m+1}^{2m+1} \bar{w} + w_f = \sum_{i=m+1}^{\frac{n}{2}} w^{GR}_i$. As per the lemma statement, we have that $\sum_{i=m+1}^{2m+1} \bar{w} + w_f = w(GR)-w(GR_{2x}) \geq w(GR_{2x})= \sum_{i=1}^m w^{GR}_i = w_0 + \sum_{i=2}^m \bar{w} \geq m\bar{w}$. The proof of Sub-Claim $2$ comes from an easy (repeated) application of Lemma~\ref{lem_generic_weights1} since for every $i > m$, $w^{GR}_i \leq \bar{w}$, so we are simply transferring the weights to the smaller indices. Combining both the sub-claims, we get an intermediate inequality, from which it is more convenient to arrive at the lemma.

$$\sum_{i=1}^{n/2}2w^{GR}_i \{n-2i\} \leq 2w_0(n-2) + 2\sum_{i=2}^{2m}\bar{w}(n-2i) + 2w_f (n - 2(2m+1)).$$

Without loss of generality, we assume that $w_0$ and $w_f$ are defined so that $w_0 + \sum_{i=1}^{m} \bar{w} = \sum_{i=m+1}^{2m}\bar{w} + w_f$ or else we can always transfer some weight from $w_f$ to $w_0$, which only leads to an increase in the right hand side of the above inequality. This is equivalent to saying that the worst case for our lemma is when $w(GR_{2x}) = \frac{1}{2}OPT$.

If $w_0 = \bar{w}$, then $w_f = 0$, and we are done. So suppose that $w_0 > \bar{w}$. The following sub-claim completes our proof for the $x \leq \frac{1}{4}$ case.

$$\text{(Sub-Claim 3)}~ 2w_0(n-2) + 2\sum_{i=2}^{2m}\bar{w}(n-2i) + 2w_f (n - 2(2m+1))  \leq \sum_{i=1}^{2m} 2(\bar{w}+\epsilon)(n-2i),$$

where $\epsilon > 0$ is defined correspondingly in order to maintain the aggregate weight.  The sub-claim is directly derived from Lemma~\ref{lem_genericweights2}.

Finally, we plug all of these into our original generic bound to get

$$w(T) \leq w(GR) + 2\sum_{i=1}^{2m}(\bar{w} + \epsilon) \{n-2i\} \leq 2nw(GR)(1-2x). \hfill\qed $$

\end{proof}

\subsection{Lemmas concerning Random Matchings}
Having carefully laid down the foundation for a careful future analysis of greedy matchings, we now move on to showing simple generic lower bounds for random matchings that are applicable across a variety of situations. We begin with an obvious proposition that sets the stage for more involved bounds.

\begin{proposition}
Suppose that $RD$ denotes the random matching for a given instance $\mathcal{N}$. Consider the following two ways of partitioning $\mathcal{N}$: $(i)$ $\mathcal{N} = T \cup B$ with disjoint $T,B$, and $(ii)$, $\mathcal{N} = T \cup B_1 \cup B_2,$ where the 3 sets are once again disjoint. Then,

\label{prop_rd_lowerboundgeneric1}
\begin{enumerate}
\item $E[w(RD)] = \frac{1}{N}\left(w(T) + w(T,B) + w(B)\right)$.

\item $E[w(RD)] = \frac{1}{N}\{w(T) + w(T,B_1) + w(B_1) + w(T\cup B_1,B_2) + w(B_2)\}$.
\end{enumerate}
\end{proposition}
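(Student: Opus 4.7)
The plan is to evaluate $E[w(RD)]$ by linearity of expectation after establishing that Algorithm~\ref{alg_random} induces a uniform distribution over perfect matchings of the complete graph on $\mathcal{N}$. Once that is done, both parts of the proposition reduce to partitioning the total edge sum $\sum_e w(e)$ according to the given vertex partitions.

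First I would prove the uniformity claim. Fix any perfect matching $M=\{e_1,\ldots,e_{N/2}\}$ on $\mathcal{N}$. The probability that Algorithm~\ref{alg_random} outputs $M$ is the sum over the $(N/2)!$ orderings in which its edges can be drawn of $\prod_{i=1}^{N/2}\binom{N-2(i-1)}{2}^{-1}$, and this product depends on neither the ordering nor the identity of $M$, so every perfect matching is produced with the same probability. By symmetry, each edge $(x,y)$ of the complete graph therefore appears in $RD$ with a common probability $p$; applying linearity of expectation to $\sum_e \mathbf{1}[e\in RD]\equiv N/2$ forces $\binom{N}{2}p = N/2$, so $p=\tfrac{1}{N-1}$. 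I would flag explicitly that the $\tfrac{1}{N}$ written in the proposition is really $\tfrac{1}{N-1}$; the two agree to leading order in $N$, and substituting the cruder $\tfrac{1}{N}$ only weakens the downstream lower bounds (such as in Lemma~\ref{lem_rd_lowerbound1}) that the paper actually uses, so the discrepancy is harmless.

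Linearity of expectation then gives
$$E[w(RD)] \;=\; \sum_{e} w(e)\cdot \Pr[e\in RD] \;=\; \tfrac{1}{N-1}\sum_{e} w(e),$$
and the only remaining task is to decompose $\sum_e w(e)$ along each partition. For partition~$(i)$, every edge of the complete graph has both endpoints in $T$, both in $B$, or one in each, so $\sum_e w(e) = w(T)+w(T,B)+w(B)$. For partition~$(ii)$, I would group the edges incident to $B_2$: each such edge has its other endpoint either in $T$ or in $B_1$, and hence $w(T,B_2)+w(B_1,B_2) = w(T\cup B_1,B_2)$; the edges avoiding $B_2$ contribute $w(T)+w(T,B_1)+w(B_1)$. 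Summing gives $\sum_e w(e) = w(T)+w(T,B_1)+w(B_1)+w(T\cup B_1,B_2)+w(B_2)$, and rescaling yields the second equality.

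The main step to nail down carefully is the uniformity of Algorithm~\ref{alg_random} over perfect matchings, since the counting is short but essential for converting the iterative edge-removal procedure into a clean per-edge probability $\tfrac{1}{N-1}$. Once uniformity is in hand, the two stated equalities reduce to routine bookkeeping on how edges of the complete graph split across the given vertex partitions, with no further estimates required.
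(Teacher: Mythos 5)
Your proposal is correct, and in fact it supplies more detail than the paper does: the paper states this proposition without proof (it is introduced as an ``obvious'' bookkeeping step), so the natural comparison is whether your argument matches the intended one, and it does. The decomposition of $\sum_e w(e)$ along the two vertex partitions is exactly the intended content of both equalities, and your route to the per-edge probability --- first showing Algorithm~\ref{alg_random} is uniform over perfect matchings, then using $\sum_e \mathbf{1}[e\in RD]=N/2$ --- is sound, though slightly heavier than necessary: by the symmetry of the random process one can argue directly that $\Pr[(x,y)\in RD]$ is the same for all pairs, and since each $x$ is matched to exactly one partner, $\sum_{y\neq x}\Pr[(x,y)\in RD]=1$ forces this common probability to be $\tfrac{1}{N-1}$; this is the same symmetry argument the paper itself uses in the lower-bound lemma for $k$-sum clustering (Theorem~\ref{thm:maxksum}).

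Your flag about $\tfrac{1}{N}$ versus $\tfrac{1}{N-1}$ is well taken: as written, the proposition's equalities hold only with $\tfrac{1}{N-1}$, and with $\tfrac{1}{N}$ they should be read as ``$\geq$''. This is harmless for the paper, since the proposition is only ever invoked to produce lower bounds on $E[w(RD)]$ (Lemma~\ref{lem_rand_lowerbound2} and Corollary~\ref{corr_random_lower3} already state their conclusions as inequalities), and replacing $\tfrac{1}{N-1}$ by the smaller $\tfrac{1}{N}$ only weakens those bounds, exactly as you observe.
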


\begin{lemma}
\label{lem_rand_lowerbound2}
Suppose that $RD$ denotes the random matching for a given instance $\mathcal{N}$. Consider the following two ways of partitioning $\mathcal{N}$: $(i)$ $\mathcal{N} = T \cup B$ with disjoint $T,B$, and $(ii)$, $\mathcal{N} = T \cup B_1 \cup B_2,$ where the 3 sets are once again disjoint. Then, for $OPT$ denoting the weight of the maximum-weight matching,

\begin{enumerate}
\item $E[w(RD)] \geq \frac{1}{N}\{w(T) + |B|OPT - w(B)\}.$

\item $E[w(RD)] \geq \frac{1}{N} \{ w(T) + w(T,B_1) + w(B_1) + |B_2|OPT - w(B_2)\}$

\end{enumerate}

\end{lemma}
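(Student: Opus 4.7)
The plan is to reduce both inequalities to a single structural bound and then invoke the exact expressions supplied by Proposition~\ref{prop_rd_lowerboundgeneric1}. First I would use part (1) of the proposition to rewrite the target $E[w(RD)] \geq \frac{1}{N}\{w(T) + |B|OPT - w(B)\}$ as the equivalent assertion
\[
w(T,B) + 2 w(B) \geq |B|\, OPT,
\]
and use part (2) to reduce the second target, after cancelling the common block $w(T) + w(T,B_1) + w(B_1)$, to
\[
w(T \cup B_1, B_2) + 2 w(B_2) \geq |B_2|\, OPT.
\]
Both are instances of a unified sub-lemma: for any $S \subseteq \mathcal{N}$,
\[
2 w(S) + w(\mathcal{N} \setminus S,\, S) \geq |S|\, OPT.
\]

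Second, I would establish this sub-lemma via a pointwise triangle-inequality bound. Fix an optimum perfect matching $M^*$ of weight $OPT$ and any node $z \in \mathcal{N}$. For each edge $(u,v) \in M^*$ with $z \notin \{u,v\}$, the triangle inequality gives $w(u,z) + w(v,z) \geq w(u,v)$; the unique edge of $M^*$ containing $z$ contributes its own weight directly (taking $w(z,z) = 0$ by convention). Summing over $M^*$ yields
\[
\sum_{x \in \mathcal{N} \setminus \{z\}} w(x,z) \;\geq\; OPT.
\]
Then summing this bound over $z \in S$ and observing that each intra-$S$ edge is counted twice while each cross-edge (to $\mathcal{N} \setminus S$) is counted exactly once produces the sub-lemma. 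Specialising to $S = B$ proves the first inequality of the lemma, and specialising to $S = B_2$ proves the second.

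The main obstacle, and it is a mild one, is ensuring the double-counting bookkeeping is correct for edges inside $S$ versus cross-edges, and handling the edge of $M^*$ that contains $z$ itself so that the per-node bound $\sum_{x \neq z} w(x,z) \geq OPT$ is justified uniformly. Once the pointwise triangle-inequality estimate is in hand, the rest is purely algebraic rearrangement of the exact identity from Proposition~\ref{prop_rd_lowerboundgeneric1}.
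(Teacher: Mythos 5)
Your proposal is correct and takes essentially the same route as the paper: it combines the exact expressions in Proposition~\ref{prop_rd_lowerboundgeneric1} with a triangle-inequality lower bound on the cross terms, which is precisely what the paper does by invoking Lemma~\ref{lem_matchingupper} with $T=B$ and $T=B_2$. Your ``unified sub-lemma'' $2w(S) + w(\mathcal{N}\setminus S, S) \ge |S|\,OPT$ is exactly Lemma~\ref{lem_matchingupper} applied to $M^*$, and your per-node argument $\sum_{x\ne z}w(x,z)\ge OPT$ is the paper's proof of that lemma with the order of summation swapped (paper: fix an edge of $M^*$ and sum over nodes; you: fix a node and sum over edges of $M^*$).
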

\begin{proof}
The first statement of the lemma is obtained by applying Lemma~\ref{lem_matchingupper} with $w(M) = OPT$, $\mathcal{S} = \mathcal{N}$, $T = B$.

The second statement is obtained by applying Lemma~\ref{lem_matchingupper} with $w(M) = OPT$, $\mathcal{S} = \mathcal{N}$, $T = B_2$.

\end{proof}

Our final corollary is obtained by adding the two parts of Lemma~\ref{lem_rand_lowerbound2} and dividng by two.
\begin{corollary}
\label{corr_random_lower3}
Suppose that $RD$ denotes the random matching for a given instance $\mathcal{N}$. Consider the following two ways of partitioning $\mathcal{N}$: $(i)$ $\mathcal{N} = T \cup B$ with disjoint $T,B$, and $(ii)$, $\mathcal{N} = T \cup B_1 \cup B_2,$ where the 3 sets are once again disjoint. Moreover, suppose that $B_1, B_2 \subseteq B$. Then,

$$E[w(RD)] \geq \frac{1}{N}\{ w(T) + \frac{1}{2}(w(T,B_1) + (|B| + |B_2|)OPT - w(B_1,B_2)) - w(B_2) \}.$$

\end{corollary}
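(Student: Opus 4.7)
The plan is to follow the hint given in the excerpt literally: take the two lower bounds from Lemma~\ref{lem_rand_lowerbound2}, sum them, divide by two, and simplify using the fact that $B = B_1 \cup B_2$ is a disjoint partition of $B$.

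Concretely, I would first write down the two bounds verbatim. Part 1 gives
$$E[w(RD)] \geq \tfrac{1}{N}\{w(T) + |B|\,OPT - w(B)\},$$
and Part 2 gives
$$E[w(RD)] \geq \tfrac{1}{N}\{w(T) + w(T,B_1) + w(B_1) + |B_2|\,OPT - w(B_2)\}.$$
Since both are valid lower bounds on the same quantity $E[w(RD)]$, their average is also a lower bound. Adding and dividing by $2$ produces
$$E[w(RD)] \geq \tfrac{1}{2N}\bigl\{2w(T) + w(T,B_1) + w(B_1) + (|B|+|B_2|)\,OPT - w(B) - w(B_2)\bigr\}.$$

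The only non-trivial step is recognizing the identity that collapses this into the claimed form. Because $\mathcal{N} = T \cup B_1 \cup B_2$ with all three sets disjoint, the induced edges inside $B$ split exactly as
$$w(B) = w(B_1) + w(B_1,B_2) + w(B_2),$$
so $w(B_1) - w(B) = -w(B_1,B_2) - w(B_2)$. Substituting this cancels $w(B_1)$ against part of $w(B)$, leaving $-w(B_1,B_2) - 2w(B_2)$ in the bracket. Pulling the factor of $\tfrac{1}{2}$ out of the terms $w(T,B_1) + (|B|+|B_2|)\,OPT - w(B_1,B_2)$ and keeping $w(T)$ and $-w(B_2)$ outside then yields exactly the stated inequality
$$E[w(RD)] \geq \tfrac{1}{N}\bigl\{w(T) + \tfrac{1}{2}\bigl(w(T,B_1) + (|B|+|B_2|)\,OPT - w(B_1,B_2)\bigr) - w(B_2)\bigr\}.$$

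There is no real obstacle here; the result is essentially a bookkeeping corollary. The only thing to be careful about is the decomposition of $w(B)$, which requires $B_1$ and $B_2$ to be disjoint subsets whose union is all of $B$, which is exactly the hypothesis imposed on the partition $\mathcal{N} = T \cup B_1 \cup B_2$. The utility of this corollary, rather than its proof, is the substance: by balancing the two bounds from Lemma~\ref{lem_rand_lowerbound2}, it lets us trade off $w(B_2)$ (controllable when $B_2$ is light, as in the $\alpha \ge \tfrac{1}{2}$ case of the main matching analysis) against $w(B)$ (useful when the bottom half is uniformly small), which is precisely the flexibility invoked in Lemma~\ref{lem_rd_lowerbound2}.
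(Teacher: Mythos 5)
Your proof is correct and matches the paper's approach exactly: the paper simply states that the corollary ``is obtained by adding the two parts of Lemma~\ref{lem_rand_lowerbound2} and dividing by two,'' and you carry out precisely that computation, with the key bookkeeping identity $w(B) = w(B_1) + w(B_1,B_2) + w(B_2)$ correctly identified and applied.
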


\section{Appendix: Proofs from Section \ref{sec:kmatching}: Max $k$-Matching}

\begin{thm_app}{thm_rsdmaxk}
Random serial dictatorship is a universally truthful mechanism that provides a $2$-approximation for the Max $k$-matching problem.
\end{thm_app}

\begin{proof}
\textbf{Notation:} Given any set of nodes $S$, we use $\bar{G}(S)$ to denote the directed first preference graph on $S$ defined as follows: for every $i \in S$, there is a directed edge from $i$ to its most preferred agent in $S - \{i \}$. Our algorithm could be viewed as selecting one edge from $\bar{G}(T)$ uniformly at random in each iteration, where $T$ denotes the set of available agents.

For any set $S$, define $S^{-1} \subset S$ to be the random set of nodes remaining in $S$ after removing one edge uniformly at random from $\bar{G}(S)$, i.e., $S^{-1} := S - \{i,j\}$ with probability $\frac{1}{|S|}$ for every $(i,j) \in \bar{G}(S)$. Finally, we define $OPT(S,r)$ to denote the (weight of the) maximum weight $r$-matching in $S$ (containing $r$ edges). When it is clear from the context, we will abuse notation and use $OPT(S,r)$ to denote the optimum $r$-matching itself (as opposed to its value).

Our proof depends on the following crucial structural claim: we show that for any set $S$, $OPT(S,r) - E[OPT(S^{-1}, r-1)]$ is at most twice the weight of an edge chosen uniformly at random from $\bar{G}(S)$. This recursive claim implies that if at all we end up selecting a sub-optimal edge, then this does not hurt our solution by much since $E[OPT(S^{-1}, r-1)]$ is bound to be large, and we apply the algorithm recursively on $S^{-1}$.

\begin{claim}
(Structural Claim) For any given set $S \subseteq \mathcal{N}$ and $r \leq \frac{|S|}{2}$, we have that
\label{clm_keyclm_rsd}
$$ OPT(S,r) \leq E[OPT(S^{-1},r-1)] + \frac{2}{|S|}\sum_{e \in \bar{G}(S)}w(e)$$
\end{claim}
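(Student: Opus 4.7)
The plan is to prove the structural claim by exhibiting, for each choice of dictator $i \in S$, an explicit $(r-1)$-matching $M_i$ on $S \setminus \{i, f(i)\}$. Since $E[OPT(S^{-1},r-1)] \geq \frac{1}{|S|} \sum_i w(M_i)$ and $\sum_{e \in \bar{G}(S)} w(e) = \sum_i w(i,f(i))$, the claim reduces, after multiplying by $|S|$, to showing $\sum_{i \in S} w(M_i) \geq |S| \cdot OPT(S,r) - 2\sum_{i\in S} w(i,f(i))$. So the entire task is to design, and then sum up, these matchings.

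Fix an optimal $r$-matching $M^*$ on $S$, from which each $M_i$ will be obtained by a small local surgery. The analysis will rest on two elementary facts that come directly from combining the ordinal information with the metric structure. First, since $f(i)$ is $i$'s most preferred node in $S$, we have \emph{(a)} $w(i,j) \leq w(i,f(i))$ for every $j \in S \setminus \{i\}$. Second, by the triangle inequality, \emph{(b)} $w(f(i),j) \leq w(f(i),i) + w(i,j) \leq 2 w(i,f(i))$ for every $j \in S$. As an immediate corollary, if neither $i$ nor $f(i)$ lies in $V(M^*)$, then every edge $(x,y) \in M^*$ satisfies $w(x,y) \leq w(x,i) + w(i,y) \leq 2 w(i,f(i))$, which is the cheapest cost we could possibly ``pay'' by dropping one $M^*$-edge.

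The construction of $M_i$ proceeds by case analysis on $\{i, f(i)\} \cap V(M^*)$: (1) if neither $i$ nor $f(i)$ is in $V(M^*)$, set $M_i := M^*$ minus the minimum-weight edge; (2) if exactly one of them is in $V(M^*)$, drop the incident $M^*$-edge; (3) if both are in $V(M^*)$ and matched to each other, drop $(i,f(i))$; (4) otherwise both are in $V(M^*)$ with distinct partners $a := M^*(i)$ and $b := M^*(f(i))$, and we drop both $(i,a)$ and $(f(i),b)$ and add the new edge $(a,b)$ to restore the size to $r-1$. In each of cases (1)--(3) the per-$i$ loss $L_i := OPT(S,r) - w(M_i)$ is easily bounded by $2 w(i,f(i))$ using facts \emph{(a)}--\emph{(b)} and the corollary above.

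The main obstacle is the ``exchange'' case (4), where $L_i = w(i,a) + w(f(i),b) - w(a,b)$ and can, in the worst case, be as large as $3 w(i,f(i))$, so the desired bound cannot be obtained per-$i$. I will overcome this by aggregating the per-$i$ bounds rather than controlling each individually. Summing the case-by-case bounds for $L_i$ and grouping by the edges of $M^*$, I will invoke the identity $\sum_{j \in V(M^*)} w(j, M^*(j)) = 2 \cdot OPT(S,r)$ (each $M^*$-edge is counted twice), combined with the ordinal-metric bound $w(j, M^*(j)) \leq 2 w(i,f(i))$ for any $i$ with $f(i) = j$, and charge the ``excess'' from exchange steps to the weights of the newly added replacement edges $(a,b)$ together with the slack in the other cases. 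This aggregation gives $\sum_i L_i \leq 2 \sum_i w(i,f(i))$, which on dividing by $|S|$ yields the claim.
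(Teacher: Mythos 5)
Your overall plan is the same as the paper's: fix an optimal $r$-matching $M^*$, for each potential dictator $i$ perform a local surgery on $M^*$ to obtain an $(r-1)$-matching $M_i$ that avoids $\{i, f(i)\}$, bound the loss $L_i = OPT(S,r) - w(M_i)$, and then sum over $i$. Your constructions in cases (1)--(4) and the cases (1)--(3) bounds match the paper's. The gap is in case (4) and in the aggregation you sketch to recover from it.

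Concretely, the per-$i$ bound $L_i \le 3\,w(i,f(i))$ that you derive in case (4) is not tight enough, and the ``charge the excess to the replacement edges $(a,b)$ and the slack in the other cases'' plan does not go through. The term $w(a,b)$ is \emph{subtracted} in $L_i$, so it is a credit, not a cost, and you have no lower bound on it to convert into savings. Worse, the ``slack in the other cases'' need not exist: take $|S|=2r$ with every node matched in $M^*$ and $f(i)\ne M^*(i)$ for all $i$ --- then every $i$ is in case (4) and there is nothing to borrow from. In that situation the per-$i$ bound $3\,w(i,f(i))$ only gives $\sum_i L_i \le 3\sum_i w(i,f(i))$, short of the claimed $2\sum_i w(i,f(i))$.

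The fix stays entirely within the tools you already wrote down --- you are just applying your fact \emph{(a)} at the wrong node. In case (4), with $a = M^*(i)$ and $b = M^*(f(i))$, write
\[
L_i \;=\; w(i,a) + w(f(i),b) - w(a,b) \;\le\; w(i,f(i)) + \big(w(f(i),a) + w(a,b)\big) - w(a,b) \;=\; w(i,f(i)) + w(a,f(i)),
\]
using fact \emph{(a)} at $i$ for the first term and the triangle inequality through $a$ for the second. Now apply fact \emph{(a)} again, but at the node $a$: since $f(a)$ is $a$'s favorite in $S$, $w(a,f(i)) \le w(a,f(a))$. This yields the per-$i$ bound
\[
L_i \;\le\; w(i,f(i)) + w(M^*(i), f(M^*(i))),
\]
which in fact holds in cases (2a), (3), and (4) alike (and in the unmatched cases (1), (2b) you already have $L_i \le 2w(i,f(i))$). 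The aggregation is then immediate: $i \mapsto M^*(i)$ is a bijection on $V(M^*)$, so $\sum_{i \in V(M^*)} w(M^*(i),f(M^*(i))) = \sum_{i\in V(M^*)} w(i,f(i))$, giving $\sum_{i\in S} L_i \le 2\sum_{i\in S} w(i,f(i))$, as needed. This is precisely the ``crucial observation'' in the paper's proof: the partner $o_a$ of any matched $a$ is itself matched, so the extra term $w(o_a, s_{o_a})$ is again of the form $w(j,f(j))$ and each such term is counted at most twice.
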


In the above claim, the expectation is taken over the different realizations of $S^{-1}$. In words, the claim bounds the change in the optima using the `increase in profit' of our algorithm.
We first show how this claim can be used to complete the proof of Theorem~\ref{thm_rsdmaxk}, and then detail the proof of Claim~\ref{clm_keyclm_rsd}.

\begin{proposition}
As long as Claim~\ref{clm_keyclm_rsd} is obeyed for every $S$, $r$, our algorithm provides a $2$-approximation to the Max $k$-matching.
\end{proposition}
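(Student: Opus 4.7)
The plan is to prove by induction on $r$ that $OPT(S,r) \leq 2\,ALG(S,r)$ for every feasible pair $(S,r)$, where $ALG(S,r)$ denotes the expected weight of the matching produced by Random Serial Dictatorship when started on available set $S$ and asked to select $r$ edges. The base case $r=0$ is trivial since both sides equal $0$. The rest of the argument reduces Theorem~\ref{thm_rsdmaxk} to a single step that matches the recursive structure of the algorithm against the recursive bound supplied by Claim~\ref{clm_keyclm_rsd}.

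First, I would unpack one step of RSD. Since the algorithm picks an agent $x$ uniformly at random from $S$ and matches $x$ with its top choice $y$ among the remaining agents, the first edge added is exactly a uniformly random directed edge of $\bar{G}(S)$ (each agent contributes its unique outgoing edge, so there are $|S|$ such edges). Writing $S^{-1}$ for the random set that remains after deleting the two endpoints of this edge, this gives the recurrence
$$ALG(S,r) \;=\; \frac{1}{|S|}\sum_{e\in\bar{G}(S)} w(e) \;+\; E\!\left[\,ALG(S^{-1},r-1)\,\right],$$
where the expectation is over the uniform choice of the first edge.

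Next, I would chain Claim~\ref{clm_keyclm_rsd} with the inductive hypothesis applied inside the expectation over $S^{-1}$:
$$OPT(S,r) \;\leq\; E\!\left[\,OPT(S^{-1},r-1)\,\right] + \frac{2}{|S|}\sum_{e\in\bar{G}(S)} w(e) \;\leq\; 2\,E\!\left[\,ALG(S^{-1},r-1)\,\right] + \frac{2}{|S|}\sum_{e\in\bar{G}(S)} w(e) \;=\; 2\,ALG(S,r).$$
Specializing this to $S=\mathcal{N}$ and $r=k$ yields the stated $2$-approximation.

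I do not expect a genuine obstacle here, since Claim~\ref{clm_keyclm_rsd} already packages the hard combinatorial content; the proposition is really just an induction that aligns the algorithm's one-step recurrence with that bound. The one subtlety that deserves care is a consistent counting convention in $\bar{G}(S)$: the sum $\sum_{e\in\bar{G}(S)} w(e)$ should be interpreted over the $|S|$ directed first-preference edges (an undirected edge $(x,y)$ contributes twice when it is the top choice of both endpoints), so that the $1/|S|$ factor in the algorithm's recurrence matches the $2/|S|$ factor appearing in Claim~\ref{clm_keyclm_rsd}. Verifying this alignment is essentially the only nontrivial bookkeeping in the proof.
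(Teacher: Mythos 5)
Your proof is correct and follows essentially the same logic as the paper's: the paper unrolls Claim~\ref{clm_keyclm_rsd} into a telescoping sum over the $k$ rounds, while you organize the identical one-step recurrence into a backward induction on $r$; the two presentations are interchangeable. Your remark about the directed-edge counting convention in $\bar{G}(S)$ is accurate and is indeed the only bookkeeping point that needs checking.
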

\begin{proof}
Suppose that the algorithm proceeds in rounds ($1$ to $k$) where in each round exactly one edge is selected from the first preference graph. Define $S_i$ to denote the random set of available nodes at the beginning of round $i$ ($S_1 = \mathcal{N}$, and is deterministic). Then taking expectation over Claim~\ref{clm_keyclm_rsd}, we get that for every $i \leq k$,

$$E_{S_i}[OPT(S_i, k-i+1)] - E_{S_{i+1}}[OPT(S_{i+1},k-i)] \leq E_{S_i}[\frac{2}{|S_i|}\sum_{e \in \bar{G}(S_i)}w(e)]. $$

Moreover, if we define $Alg_i$ to denote the expected weight of chosen edge in round $i$, the term in the RHS is simply twice $Alg_i$. Therefore, we can simplify the above inequality as follows.

\begin{equation}
\label{eqn_rsd_intermed}
E_{S_i}[OPT(S_i, k-i+1)] - E_{S_{i+1}}[OPT(S_{i+1},k-i)] \leq 2Alg_i.
\end{equation}

We also know that $OPT(\mathcal{N}, k)$ can be written as a telescoping summation, $OPT(\mathcal{N} ,k ) =  \sum_{i=1}^k E[OPT(S_i, k-i+1)] - E[OPT(S_{i+1}, k-i)]$. After bounding the terms in the right hand side of the summation using Equation~\ref{eqn_rsd_intermed}, we complete the proof,

$$OPT(\mathcal{N},k) - E[OPT(S_{k+1},0)] \leq \sum_{i=1}^k 2Alg_i.$$

Since $E[OPT(S_{k+1},0)] =0$, and the RHS of the above algorithm is exactly the expected weight of the solution returned by our algorithm, the proposition follows. $\hfill\qed$

\end{proof}

It only remains to prove Claim~\ref{clm_keyclm_rsd}, which we complete now.

\subsubsection*{Proof of Claim~\ref{clm_keyclm_rsd}} We need to prove that $OPT(S,r) \leq E[OPT(S^{-1},r-1)] + \frac{2}{|S|}\sum_{e \in \bar{G}(S)}w(e)$. Now, for any $i \in S$, we use $o_i$ to denote the agent $i$ is matched to in $OPT(S,r)$. If the agent is unmatched in $OPT(S,r)$, we let $o_i$ be a null element. We also extend the notion of edge weights so that $w(i,\emptyset) = 0$ for all $i$. Finally, given any $i \in S$, let $s_i$ denote $i$'s most preferred node in $S$, i.e., the node to which $i$ has an outgoing edge in $\bar{G}(S)$.

Suppose that the (random) serial dictatorship removes the edge $(a,s_a)$ from $\bar{G}(S)$. We proceed in two cases based on whether or not $a$ is matched to a non-null agent in $OPT(S,r)$. Let $E_1$ denote the subset of edges in $\bar{G}(S)$ such that $a$ is matched to an actual agent in $OPT$, i.e., $o_a \neq \emptyset$. Note that $s_a$ may or may not be matched in $OPT$. Then, for any $(a,s_a) \in E_1$, we have that

$$OPT(S - \{a,s_a\},r-1)) \geq OPT(S,r) - w(a,o_a) - w(s_a,o_{s_a}) + w(o_a,o_{s_a}).$$

That is, $OPT(S - \{a,s_a\},r-1)$ is at least as good as the matching obtained by pairing up $o_a,o_{s_a}$ and leaving the other edges of $OPT(S,r)$. The above inequality is robust to $o_{s_a}$ being empty.

Observe that by definition $w(a,o_a) \leq w(a,s_a)$ and via the triangle inequality, $w(s_a,o_{s_a}) \leq w(o_a,s_a) + w(o_a, o_{s_a}) \leq w(o_a,s_{o_a}) + w(o_a,o_{s_a})$. So, we get a simplified charging argument for edges in $E_1$,

\begin{equation}
\label{eqn_rsde1}
OPT(S - \{a,s_a\},r-1)) \geq OPT(S,r) - w(a,s_a) - w(o_a, s_{o_a}).
\end{equation}

Finally, let us denote the remaining edges in $\bar{G}(S)$ as $E_2$, for every edge in $E_2$; we know that $o_a = \emptyset$, $o_{s_a}$ may or may not be empty. We claim that for both of these cases

\begin{equation}
\label{rqn_rsde3}
OPT(S - \{a,s_a\},r-1)) \geq OPT(S,r) - 2w(a,s_a).
\end{equation}

The main idea in this case is provided by Lemma~\ref{lem_rsdedgebound}, from which we infer that the weight of any edge in $OPT(S,r)$ is at most twice $w(a,s_a)$. So, if $o_{s_a} = \emptyset$, then $OPT(S - \{a,s_a\},r-1))$ is at least $OPT(S,r) - w(a,o_a)$. In the case that both of them are null, one can simply remove any one edge from $OPT(S,r)$ to get a lower bound for $OPT(S - \{a,s_a\},r-1))$.

We are now ready to complete the proof.

\begin{align*}
E[OPT(S^{-1},r-1)] = & \frac{1}{|S|}\sum_{(a,s_a) \in \bar{G}(S)}OPT(S-\{a,s_a\},r-1) \\
& \geq OPT(S,r) - \frac{1}{|S|}\{\sum_{(a,s_a) \in E_1}w(a,s_a) + w(o_a, s_{o_a}) + \sum_{(a,s_a) \in E_2}2w(a,s_a)\}  \\
& \geq OPT(S,r) - \frac{1}{|S|}\sum_{(a,s_a) \in \bar{G}(S)}2w(a,s_a) \\
\end{align*}

The crucial observation that leads us from line $2$ to line $3$ is that for any $(a,s_a) \in E_1$, the edge $(o_a, s_{o_a})$ must also belong to $E_1$. Therefore, in both of these cases, we are only counting $w(o_a, s_{o_a})$ twice.  $\hfill\qed$
\end{proof}

\section{Densest $k$-subgraph}\label{app:dks}

\begin{claim} (Trivial Algorithm)
Suppose that $k \geq \frac{N}{2}$, then the algorithm that selects a set of size $k$ uniformly at random from $\mathcal{N}$ is a universally truthful $6$-approximation algorithm for Densest $k$-subgraph.
\end{claim}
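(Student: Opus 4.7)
The plan is in three simple, essentially independent steps. Truthfulness is immediate because the algorithm is entirely oblivious to the reported preferences: it selects $S$ uniformly at random from the $\binom{N}{k}$ subsets of size $k$ without inspecting any $s_i$. Hence for every fixed realization of its random coins the resulting deterministic mechanism is a constant function of the input, so no agent can influence the output (let alone her utility) by misreporting. This makes the algorithm universally truthful in the strong sense of Definition~2.

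Next I would compute the expected objective value directly using linearity of expectation. Every unordered pair $\{x,y\}$ is contained in the random $k$-subset with probability $\binom{N-2}{k-2}/\binom{N}{k} = \frac{k(k-1)}{N(N-1)}$, so
\[
\mathbb{E}[w(S)] \;=\; \frac{k(k-1)}{N(N-1)} \, W,
\qquad \text{where } W = \sum_{\{x,y\}} w(x,y).
\]

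The third step is to upper bound $OPT_k$ via the triangle inequality, following the same idea used in the proof of Theorem~\ref{thm:maxksum}. Let $O^\star$ be an optimum set of size $k$. For each pair $\{x,y\}\subseteq O^\star$ and each $z\in\mathcal{N}$ the metric property gives $w(x,z)+w(y,z)\geq w(x,y)$; summing over $z$ gives $\sum_{z}(w(x,z)+w(y,z))\geq N w(x,y)$, and summing over all $\binom{k}{2}$ pairs $\{x,y\}\subseteq O^\star$ produces $N\cdot OPT_k$ on the right. A quick symmetry count rewrites the left side as $(k-1)\sum_{x\in O^\star}\sum_{z\in\mathcal{N}} w(x,z) = (k-1)\bigl(2\,OPT_k + w(O^\star,\mathcal{N}\setminus O^\star)\bigr)\leq 2(k-1)W$. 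Therefore
\[
OPT_k \;\leq\; \frac{2(k-1)}{N}\, W.
\]

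Dividing the two estimates gives
\[
\frac{OPT_k}{\mathbb{E}[w(S)]} \;\leq\; \frac{2(N-1)}{k},
\]
and the hypothesis $k\geq N/2$ makes the right-hand side at most $\tfrac{4(N-1)}{N}<4$, which already implies the claimed $6$-approximation. There is essentially no obstacle here: the proof is just the triangle-inequality calculation from the $k$-sum clustering analysis combined with the trivial probability computation for a uniformly random $k$-subset.
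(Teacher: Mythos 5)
Your proof is correct, and it takes a genuinely different route from the paper's. Both arguments compute the expected weight of the uniformly random $k$-subset the same way, via $\Pr[\{x,y\}\subseteq S]=\tfrac{k(k-1)}{N(N-1)}$. The difference is entirely in how $OPT_k$ is bounded. The paper uses the trivial bound $OPT_k\leq W$ (the total edge weight of the complete graph) and relies on $\tfrac{k(k-1)}{N(N-1)}\geq\tfrac16$ when $k\geq N/2$ and $N>3$; this requires no metric assumption, but the resulting guarantee is $\tfrac{N(N-1)}{k(k-1)}$, which is exactly $6$ at $N=4,k=2$. You instead import the triangle-inequality charging argument from the $k$-sum clustering analysis to get $OPT_k\leq\tfrac{2(k-1)}{N}W$, which is a strictly tighter bound on $OPT_k$ whenever $k<N/2+1$; combined with the same probability computation this gives the ratio $\tfrac{2(N-1)}{k}\leq\tfrac{4(N-1)}{N}<4$ for all $k\geq N/2$. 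So you are proving something stronger (a $4$-approximation rather than a $6$-approximation), at the cost of actually using the metric assumption, which the paper's proof of this particular claim avoids. Both the truthfulness observation and the probability calculation are the same; the one small thing worth noting is that your ratio bound implicitly assumes $k\geq2$ so that $E[w(S)]>0$, but the $k=1$ case is vacuous since $OPT_1=0$ as well.
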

\begin{proof}
The truthfulness of this algorithm is quite obvious so we show the approximation factor. A trival upper bound for $OPT$ is $OPT \leq w(\mathcal{N})$, where $w(T)$ denotes the total weight of the graph induced by $T$.

Let $S$ be the random set returned by the above algorithm. Then, for some $i,j \in \mathcal{N}$, what is the probability that $i,j \in S$: this probability is exactly $\frac{{N-2 \choose k-2}}{{N \choose k}}$. As expected, the worst case occurs when $k=\frac{N}{2}$, giving us $Pr(i,j \in S) \geq \frac{N/2 - 1}{2(N - 1)} \geq \frac{1}{6}$ for $N >3$. Therefore, we have that $E[w(S)] \geq \frac{1}{6}\sum_{i,j \in \mathcal{N}}w(i,j)$. \hfill\qed
\end{proof}

\subsection*{General Lemmas}
Before proceeding with our main proof, we take a small detour and prove some generic lemmas that do not depend on our algorithm.

\begin{lemma}
\label{lem_dkskincreases}
For a given instance, let $OPT_r$ denote the weight of the densest subgraph of size $r$. Then,

$$OPT_{r+1} \leq OPT_r + \frac{2}{r-1}OPT_r.$$
\end{lemma}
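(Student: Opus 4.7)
The plan is a straightforward averaging / node-removal argument. Let $S^* \subseteq \mathcal{N}$ with $|S^*|=r+1$ be an optimal densest $(r+1)$-subgraph, so $w(S^*) = OPT_{r+1}$. For any $v \in S^*$, the set $S^* \setminus \{v\}$ has exactly $r$ nodes and hence weight at most $OPT_r$, giving the pointwise bound
\[
OPT_{r+1} \;=\; w(S^*) \;=\; w(S^* \setminus \{v\}) + w(v, S^* \setminus \{v\}) \;\leq\; OPT_r + w(v, S^* \setminus \{v\}).
\]

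Next I would pick $v$ to minimize $w(v, S^* \setminus \{v\})$ by averaging. Each edge of $S^*$ has two endpoints in $S^*$, so
\[
\sum_{v \in S^*} w(v, S^* \setminus \{v\}) \;=\; 2\, w(S^*) \;=\; 2\, OPT_{r+1},
\]
and therefore some $v \in S^*$ satisfies $w(v, S^* \setminus \{v\}) \leq \tfrac{2}{r+1} OPT_{r+1}$. Plugging this choice into the previous inequality yields
\[
OPT_{r+1} \;\leq\; OPT_r + \tfrac{2}{r+1} OPT_{r+1},
\]
which rearranges (multiplying by $\tfrac{r+1}{r-1}$) to $OPT_{r+1} \leq \tfrac{r+1}{r-1} OPT_r = OPT_r + \tfrac{2}{r-1} OPT_r$, as required.

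I do not expect any genuine obstacle: the lemma is purely combinatorial, uses no ordinal/truthfulness machinery, and does not even need the triangle inequality. The only care needed is in the rearrangement step, where the denominator $r-1$ makes the bound vacuous/undefined for $r \leq 1$; this is harmless since the lemma is only invoked within the inductive step of the densest-$k$-subgraph analysis for $r \geq 2$ (even rounds starting at $r=2$ in that proof).
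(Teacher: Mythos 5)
Your proof is correct and is essentially the paper's own argument: both find, by averaging over nodes of the optimal $(r+1)$-subgraph, a node whose total weight to the rest of the set is at most $\tfrac{2}{r+1}OPT_{r+1}$, remove it, compare to $OPT_r$, and rearrange. No meaningful differences.
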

\begin{proof}
Suppose that $O_{r+1}$ denotes the optimum solution to the Densest $r+1$-subgraph problem. Then, $OPT_{r+1} = \frac{1}{2}\sum_{i \in O_{r+1}}\sum_{j \in O_{r+1}}w(i,j)$. Then by the pigeonhole principle, there must exist at least one $i \in O_{r+1}$, such that $\sum_{j \in O_{r+1}}w(i,j) \leq 2\frac{OPT_{r+1}}{r+1}$: call this node $\tilde{i}$. Then, we have that

$$OPT_{r} \geq  w(O_{r+1} - \{\tilde{i}\}) = OPT_{r+1} - \sum_{j \in O_{r+1}}w(\tilde{i},j) \geq OPT_{r+1} - 2\frac{OPT_{r+1}}{r+1}.$$

The rightmost term is $OPT_{r+1}\frac{r-1}{r+1}$. Transposing the multiplicative factor gives us the lemma. \hfill$\qed$
\end{proof}

\begin{lemma}
\label{lem_rsdedgebound}
Consider any set of nodes $T$, and suppose that for some $x \in T$, $y$ denotes $x$'s most preferred node in $T$. Then for any given edge $(i,j) \in T \times T$, we have that $w(i,j) \leq 2w(x,y)$.
\end{lemma}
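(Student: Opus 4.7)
\textbf{Proof Plan for Lemma~\ref{lem_rsdedgebound}.} The plan is to use two ingredients: the fact that $y$ is $x$'s most preferred node in $T$ (which gives an ordinal comparison bound), together with the triangle inequality on the metric $w$. First I would note that because $y$ is ranked first by $x$ among the nodes of $T$, the induced preference condition gives $w(x,z) \leq w(x,y)$ for every $z \in T \setminus \{x\}$. In particular, this holds for $z = i$ and $z = j$ (if either equals $x$ the corresponding bound is trivial since $w(x,x) = 0 \leq w(x,y)$).

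Next, I would apply the triangle inequality to route the edge $(i,j)$ through $x$:
\begin{equation*}
w(i,j) \leq w(i,x) + w(x,j).
\end{equation*}
Substituting the two bounds $w(i,x) \leq w(x,y)$ and $w(x,j) \leq w(x,y)$ from the previous step yields $w(i,j) \leq 2 w(x,y)$, which is exactly the claim.

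There is essentially no obstacle here; the statement is a direct two-line consequence of combining the definition of ``most preferred'' with the triangle inequality. The only care needed is to note that the bound handles the degenerate cases $i = x$ or $j = x$ trivially (since $w(x,x)=0$), so no case analysis is required.
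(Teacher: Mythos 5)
Your proof is correct and matches the paper's (unstated, one-line) argument exactly: route $(i,j)$ through $x$ via the triangle inequality and bound both legs by $w(x,y)$ using the fact that $y$ is $x$'s top choice in $T$. You've simply written out the details the paper leaves implicit.
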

The lemma follows directly from an application of the triangle inequality. Specifically, $w(x,y)$ is at least half of the weight of the heaviest edge induced in $T$.

\begin{lemma}
\label{lem_pointtoset}
Let $T \subseteq \mathcal{N}$ be some set of agents and let $x$ be any given node. Then,

$$w(x,T) \geq \frac{1}{|T|-1}w(T).$$
\end{lemma}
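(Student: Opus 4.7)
The plan is to apply the triangle inequality one pair at a time and then aggregate. For every unordered pair $\{i,j\} \subseteq T$ with $i \neq j$, the metric assumption on the weights gives $w(i,j) \leq w(x,i) + w(x,j)$. Summing this inequality over all such pairs yields $w(T)$ on the left-hand side (by definition of $w(T)$ as the total weight of edges induced on $T$).

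The right-hand side is a sum of terms of the form $w(x,i)$, and the main (very minor) step is a counting argument: each fixed node $i \in T$ appears in exactly $|T|-1$ of the pairs $\{i,j\}$, so the term $w(x,i)$ is collected $|T|-1$ times. Hence
\begin{equation*}
w(T) \;\leq\; \sum_{\{i,j\} \subseteq T,\, i\neq j}\bigl(w(x,i)+w(x,j)\bigr) \;=\; (|T|-1)\sum_{i \in T} w(x,i) \;=\; (|T|-1)\, w(x,T),
\end{equation*}
and dividing by $|T|-1$ gives the claimed bound $w(x,T) \geq \frac{1}{|T|-1}w(T)$.

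There is essentially no obstacle here; the statement is a clean averaging consequence of the triangle inequality. The only thing to note is that the argument is indifferent to whether $x \in T$ or $x \notin T$: if $x \in T$, the pair $\{x,x\}$ does not arise in the sum, and the convention $w(x,x)=0$ makes the inclusion of $w(x,x)$ in $w(x,T)$ immaterial. So a two- or three-line proof suffices, and no additional lemma from the paper is needed beyond the standing metric assumption.
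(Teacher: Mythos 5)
Your proof is correct and matches the paper's own argument: sum the triangle inequality $w(i,j) \le w(x,i) + w(x,j)$ over all pairs in $T$ and observe that each $w(x,i)$ is counted exactly $|T|-1$ times. Nothing further is needed.
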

\begin{proof}
The proof comes from the triangle inequality once again. For every edge $(i,j) \in T \times T$, we have that $w(i,j) \leq w(i,x) + w(j,x)$. Adding this inequality over all edges induced in $T$, we observe that for each $i \in T$, $w(i,x)$ appears in the RHS exactly $|T|-1$ times, i.e., there are $|T|-1$ edges inside $T$ containing $i$. The rest of the proof follows. \hfill$\qed$

\end{proof}

\section{Appendix: Proofs from Section \ref{sec:truthfulTSP}: Truthful Mechanism for Max Traveling Salesman Problem}

\begin{algorithm}[htbp]
{Initialize $T$ to be a random edge from the complete graph on $\mathcal{N}$\;
 Let $S$ be the set of available agents initialized to $\mathcal{N}$}\;
\While{$S \neq \emptyset$}{
pick one of the end-points of $T$, say $x$ \;
let $y$ denote $x$'s most preferred agent in $S$;
add $(x,y)$ to $T$ and remove $y$ from $S$\;
}
Complete $T$ to form a Hamiltonian cycle\;
\label{alg_tsp_app}
\caption{Path Building Serial Dictatorship for Max TSP}
\end{algorithm}

\begin{theorem}
\label{thm_maxtsp}
Algorithm~\ref{alg_tsp_app} is a universally truthful mechanism that provides a $2$-approximation to the optimum tour. Moreover, the algorithm provides a $(2+\epsilon)$-approximation to $OPT$, where $\epsilon \to 0$ as $N \to \infty$ even when the weights do not obey the metric assumption.
\end{theorem}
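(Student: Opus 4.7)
Truthfulness is the easier direction. At the moment any agent $i$ is queried (as the endpoint $x$ in the loop), the other tour-edge incident to $i$ has already been committed without $i$'s input: either it was added in a previous iteration (when some other endpoint chose $i$ as its top choice among the available agents), or---for the two initial path endpoints---it is the random initial edge. Hence $i$'s reported preferences affect only the single new extension edge $(i,y)$. Since $i$'s tour-utility contains exactly one manipulable term $w(i,y)$, the dominant strategy is to report $i$'s true top choice; universal truthfulness follows. The random initial edge is crucial precisely because it removes any manipulation opportunity for the first two chosen agents.

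For the approximation ratio, I plan to analyze the algorithm via a charging argument. Label the nodes along the final path as $u_0,u_1,\ldots,u_{N-1}$ in the order they are added, where $(u_0,u_1)$ is the random initial edge and (without loss of generality) the algorithm always extends from the most recently added endpoint. Then for every $i\in\{1,\ldots,N-2\}$, the extension step chose $u_{i+1}$ as $u_i$'s most preferred agent in $\{u_{i+1},\ldots,u_{N-1}\}$, yielding the key inequality $w(u_i,u_{i+1})\geq w(u_i,u_j)$ for all $j>i$. I charge each edge $(u_i,u_j)\in T^*$ with $1\leq i<j$ to the path edge $e_{i+1}:=(u_i,u_{i+1})$. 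Since every node $u_i$ has at most two $T^*$-neighbors, each path edge receives at most two charges, so $\sum_{(u_i,u_j)\in T^*,\,i\geq 1} w(u_i,u_j)\leq 2\sum_{i=1}^{N-2} w(e_{i+1})\leq 2\,w(P)$, where $P$ denotes the constructed Hamiltonian path. The only $T^*$-edges not captured by this charging are the (at most two) edges incident to $u_0$, whose preferences were never consulted.

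To handle these boundary edges in the metric case, I plan to use the triangle inequality. For any $T^*$-neighbor $u_a$ of $u_0$, routing through $u_1$ gives $w(u_0,u_a)\leq w(e_1)+w(u_1,u_a)\leq w(e_1)+w(e_2)$ (using the extension property at $u_1$), while routing through $u_{N-1}$ gives $w(u_0,u_a)\leq w(e_c)+w(u_{N-1},u_a)$ where $e_c:=(u_0,u_{N-1})$ is the closing edge. Combining these two routes and exploiting the closing edge---which is part of the final tour but not of the path---should absorb the boundary contribution into the budget $2\,w(\text{tour})=2(w(P)+w(e_c))$, giving the desired $OPT\leq 2\,w(\text{tour})$. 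For the non-metric $(2+\epsilon)$-approximation the triangle inequality is unavailable, but the random initial edge makes $u_0$ uniformly distributed over $\mathcal{N}$, so by symmetry $E[w(u_0,u_a)+w(u_0,u_b)]=2\,OPT/N$; substituting this into the charging bound yields $OPT\,(1-2/N)\leq 2\,E[w(\text{tour})]$, an approximation ratio of $2N/(N-2)=2+O(1/N)\to 2$ as $N\to\infty$.

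The main obstacle will be the metric boundary analysis: balancing the two triangle-inequality routes so that the $u_0$-incident $T^*$-edges are absorbed exactly into the closing edge $w(e_c)$ (rather than contributing a second copy of some path edge) is what separates the sharp factor $2$ from a weaker factor $4$. The non-metric analysis is cleaner because the randomness of $u_0$ absorbs the boundary contribution in expectation.
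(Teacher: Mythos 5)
Your proposal follows essentially the same route as the paper: the truthfulness argument is identical, the two-phase charging scheme (first charge every $T^*$-edge not incident to $u_0$ to the extension edge of its earlier-inserted endpoint, then handle $u_0$'s two $T^*$-edges separately) is the paper's argument, and the non-metric symmetry/expectation computation $E[w(u_0,t^*_1(u_0))+w(u_0,t^*_2(u_0))]=\tfrac{2}{N}\,w(T^*)$ is exactly what the paper does. Your $u_0$ is the paper's ``dead node.'' The one place you hedge---``balancing the two triangle-inequality routes ... is the main obstacle''---turns out not to be needed: the paper uses \emph{only} the route through $u_{N-1}$. Concretely, for a $T^*$-neighbor $u_a$ of $u_0$,
\[
w(u_0,u_a)\ \le\ w(u_0,u_{N-1}) + w(u_{N-1},u_a)\ \le\ w(e_c) + w(u_a,t_2(u_a)),
\]
where the last step uses that $u_{N-1}$ was still available when $u_a$ chose its forward neighbor $t_2(u_a)$. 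This charges one unit to $e_c$ and one unit to $(u_a,t_2(u_a))$; the factor-$2$ budget works out because (i) $e_c$ received no phase-one charges, so it can absorb one unit from each of the two boundary edges, and (ii) $(u_a,t_2(u_a))$ received at most one phase-one charge \emph{precisely because} the boundary edge $(u_0,u_a)$ was withheld from phase one. That accounting is what makes the single $u_{N-1}$ route close the argument without involving $e_1$ or $e_2$ at all.

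Two small points to tighten. First, for $u_0$ to be uniform over $\mathcal{N}$ (needed for your non-metric bound) it is not enough that the initial edge is a uniformly random \emph{unordered} pair; you must also choose uniformly at random which endpoint plays the role of $u_0$ (the never-queried end). The paper makes this explicit by randomly designating a ``dead node,'' and also notes this randomization is essential for truthfulness of the first two agents. Second, in your phase-one accounting you write ``each path edge receives at most two charges''; that is correct for the interior edges $e_2,\dots,e_{N-1}$, but you should also note that $e_1=(u_0,u_1)$ receives zero phase-one charges and $e_c$ is not a path edge at all---these slack slots are exactly what phase two consumes in the metric case.
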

\begin{proof}
It is easy to see that this algorithms is truthful: when an agent $i$ is asked for its preferences, the first edge of $T$ incident to agent $i$ has already been decided, so $i$ cannot affect it. Thus, to form the second edge of $T$ incident to $i$, it may as well specify its most-preferred edge.

The proof proceeds via a straightforward paradigm where we charge edges in $T^*$, the welfare maximizing tour, to those in $T$, the solution returned by our algorithm. We first introduce some notation beginning with a simple tie-breaking rule that allows for convenient analysis. Specifically, suppose that $(a,b)$ denotes the first (random) edge added to $T$. Then, pick one of $a$ or $b$  (say $a$) uniformly at random, and term this node as the `dead node'. For the rest of algorithm, $a$ does not get to select another edge and remains as an end-point of $T$. The second edge containing $a$ is added only when the tour is completed to form a cycle. We remark that the randomization in the first step is {\em essential:} if we had selected the first edge based on the input preferences, then the first node could improve its utility by lying, and the algorithm would no longer be strategy-proof.

Next, for any $i \in \mathcal{N}$, we will use $t^*_1(i)$ and $t^*_2(i)$ to denote the two nodes that $i$ is connected to in $T^*$, and $t_1(i)$, $t_2(i)$ to the nodes connected to $i$ in $T$. Finally, suppose that $e_r$ denotes the random edge selected by the algorithm and $i_d$, the (random) dead node. In this proof, we show that for any realization of $e_r, i_d$, the optimum tour is at most twice the tour returned by our algorithm. Therefore, the same approximation bound also holds in expectation.

Fix some instantiation of $e_r, i_d$, call it $\tilde{e}_r$, $\tilde{i}_d$. Our charging argument comprises of two phases: in the first phase, we charge to the edges in $T$ all of the edges in $T^*$ except the ones containing the dead node $\tilde{i}_d$. While doing so, we ensure that for each edge in $T$, at most two edges in $T^*$ are charged to this edge. In the final phase, we carefully charge the edges in $T^*$ containing $\tilde{i}_d$ to certain edges in $T$ that were charged at most once in the first phase.

\subsubsection*{First Phase Charging}
Suppose that we use $S_i$ to denote the set of available nodes at the instant in our algorithm (for this particular instantiation of $e_r, i_d$) when an edge containing $i$ is added to $T$. The algorithm then proceeds to pick $i$'s most preferred agent in $S_i$ and adds the corresponding edge to $T$. Suppose that for every $i \in \mathcal{N}$, $t_2(i)$ denotes its most preferred node in $S_i$.

Now consider any edge $(x^*, y^*)$ in $T^*$ such that $x^*, y^* \neq \tilde{i}_d$. Suppose that $x^*$ was removed from the set of available nodes before $y^*$ during the course of the algorithm. Then, $y^* \in S_{x^*}$ and so, $w(x^*, t_2(x^*)) \geq w(x^*, y^*)$ and we can charge the edge $(x^*, y^*) \in T^*$ to $(x^*, t_2(x^*)) \in T$.  

After repeating this charging for every edge in $T^*$ except $(\tilde{i}_d, t^*_1(\tilde{i}_d))$, $(\tilde{i}_d, t^*_2(\tilde{i}_d))$, we end up with the following proposition.

\begin{proposition}
\label{prop_phase1tsp}
The following are true at the end of the first phase of charging.
\begin{enumerate}
\item At most two edges in $T^*$ are charged to any one edge in $T$.

\item No edges are charged to $(\tilde{i}_d, t_1(\tilde{i}_d))$, $(\tilde{i}_d, t_2(\tilde{i}_d)) \in T$.

\item At most one edge in $T^*$ is charged to any of the edges in $T$ containing $t^*_1(\tilde{i}_d)$, $t^*_2(\tilde{i}_d)$.
\end{enumerate}
\end{proposition}

\begin{proof}
\textbf{(Statement 1)} Consider any edge of the form $(i, t_2(i))$, as per our definitions, $i$ became unavailable before $t_2(i)$. Thus, the only edges charged to $(i,t_2(i))$ are those in $T^*$ containing $i$, and there can only be two such edges.

\noindent \textbf{Statement 2} Further, suppose that $(\tilde{i}_d, t_1(\tilde{i}_d))$ denotes the random edge in $T$. Clearly, edges in $T^*$ containing $t_1(\tilde{i}_d)$ are not charged to the random edge. Finally, no edge in $OPT$ is charged to $(\tilde{i}_d, t_2(\tilde{i}_d))$, since the latter node is the absolute last node to become unavailable.

\noindent \textbf{Statement 3} This is a direct consequence of the fact that we have not charged $(\tilde{i}_d, t^*_1(\tilde{i}_d))$, $(\tilde{i}_d, t^*_2(\tilde{i}_d))$.
\end{proof}

\subsubsection*{Second Phase Charging} We use the triangle inequality to charge the edge $(\tilde{i}_d, t^*_1(\tilde{i}_d))$:

$$w(\tilde{i}_d, t^*_1(\tilde{i}_d)) \leq w(\tilde{i}_d, t_2(\tilde{i}_d)) + w(t_2(\tilde{i}_d), t^*_1(\tilde{i}_d)) \leq w(\tilde{i}_d, t_2(\tilde{i}_d)) + w(t^*_1(\tilde{i}_d), t_2(t^*_1(\tilde{i}_d))).$$

The final inequality is due to the fact that $t_2(\tilde{i}_d) \in S_{t^*_1(\tilde{i}_d)}$ (since $t_2(\tilde{i}_d)$ is the absolute last node added to the tour, and so it is available during the entire runtime of the algorithm), and so $w(t_2(\tilde{i}_d), t^*_1(\tilde{i}_d)) \leq w(t^*_1(\tilde{i}_d), t_2(t^*_1(\tilde{i}_d)))$. Therefore, the edge $(\tilde{i}_d, t^*_1(\tilde{i}_d))$ can be charged to two edges in $T$, namely $(\tilde{i}_d, t_2(\tilde{i}_d))$ and $(t^*_1(\tilde{i}_d), t_2(t^*_1(\tilde{i}_d))$

Using exactly the same kind of argument, we can also charge the second edge containing $\tilde{i}_d$ in $T^*$ to two edges in $T$, namely $(\tilde{i}_d, t_2(\tilde{i}_d))$ and $w(t^*_2(\tilde{i}_d), t_2(t^*_2(\tilde{i}_d))$. This concludes the second phase of charging.

In conjunction with Proposition~\ref{prop_phase1tsp}, we have successfully charged every edge in $OPT$ by using at most two edges in $T$. This completes our two approximation.

\subsubsection*{Proof for the Non-Metric Case}
The main idea that leads to the bound for the non-metric case is that the first phase of charging does not use the metric nature of the weights in any way. Therefore, at the end of first phase, we charged all of the edges in $T^*$ minus the ones containing $\tilde{i}_d$ by using at most two edges in $T$. Note that this is for a particular instantiation.

Therefore, taking the expectation over every such instantiation, we get that

\begin{align*}
E_{i_d} [w(T^*) - w(i_d, t^*_1(i_d)) - w(i_d, t^*_2(i_d))] & \leq 2E[w(T)] \\
w(T^*) - \frac{2}{N}w(T^*) & \leq 2E[w(T)].
\end{align*}

For the second inequality, we used the fact that for any $i \in \mathcal{N}$, $Pr[i_d = i] = \frac{1}{N}$ and therefore every edge $(x^*, y^*)$ in $T^*$ appears with the negative sign twice: once when $x^*$ is dead, and once when $y^*$ is dead.

This completes the proof.
\end{proof}

\section{Appendix: Proofs from Section \ref{sec_bicriteria}}
\begin{theorem}
We can efficiently compute an ordinal $(\frac{4}{\beta^2}, \beta)$-approximate solution for the Densest $k$-subgraph problem for $\beta \leq 2$, i.e., a solution of size $\beta k$, whose value is at least $\frac{\beta^2}{4}$ times that of the optimum solution of size $k$.
\end{theorem}

\begin{proof}
The algorithm is described as follows: ``Let $M$ be a greedy matching of size $\beta \frac{k}{2}$. Return S, the set of nodes which are the endpoints of the edges in $M$".

The proof is somewhat complicated, and involves carefully charging different sets of node distances in $S^*$ to node distances in $S$. So, before giving the main proof, we provide a series of very general charging lemmas. We begin by defining a somewhat unusual `device' that guides our charging arguments. For the rest of this proof, given a matching $M$, we will use $N(M)$ to denote the set of nodes which form the endpoints of the edges in $M$.

Suppose that we are provided a matching $M$ of some given size, and a set $B \subseteq N(M)$. Now, given an integer $t \leq |B|$, define $M(t,B)$ to be the top (i.e., highest weight) $t$ edges in $M$, such each edge in $M(t,B)$ contains at least one node from $B$. We refer to $M(t,B)$ as the top-intersecting matching. In the following lemmas, we will highlight the versatility of the top-intersecting matching by charging different sets of inter node distances to this matching. Later, we will show this `device' can be used to prove the main theorem.

Note that in the lemmas that follow we will assume that $M$ is a greedy matching of size $k$, initialized with the complete edge set.

\begin{lemma}
\label{clm_bicrit2}
Suppose that $M$ is a greedy matching, and $B \subseteq N(M)$ is some given set of size $2m$. Then the following is a upper bound for the total distances of the edges inside $B$,

$$\sum_{x,y \in B}w(x,y) \leq \sum_{x,y \in N(M(m,B)) \cap B}w(x,y) + \frac{5r}{2}w(M(m,B)),$$

where $r = |B \setminus N(M(m,B))|$, i.e., $r$ is the number of nodes of $B$ that are not inside the set $N(M(m,B))$.
\end{lemma}

\begin{proof}
For convenience, let us use $A$ to denote the set $N(M(m,B)) \cap B$, i.e., the nodes of $B$ that are contained in $M(m,B)$. By definition of the top-intersecting matching, $|A| \geq m$, and therefore $r \leq m$, since $|A| + r = |B|$.

First, notice that the edges inside $B$ can be divided into three parts as follows with $A$ serving as the virtual partition.

$$\sum_{x,y \in A}w(x,y) + \sum_{\substack{x  \in A \\ y \in B \setminus A}}w(x,y) + \sum_{x,y \in B\setminus A}w(x,y).$$

The first term above is exactly the same as the first term in the RHS of the lemma statement. Therefore, it suffices if we show that the second term plus the third term above are at most $\frac{5r}{2}$ times the weight of the matching $M(m,B)$. We first consider edges going from $A$ to $B\setminus A$.

\textbf{First Part:} Suppose that $(x,y)$ is an edge where $x \in A$ and $y \in B \setminus A$. Let $(x,z)$ be the edge in $M(m,B)$ that contains $x$. Since the edges in $M$ were chosen in a greedy fashion and also, since $w(x,z)$ is at least as large as the edge in $M$ containing $y$ (by definition of $M(m,B)$) , we infer that $w(x,z) \geq w(x,y)$. In this fashion, we get that for a fixed $x \in A$, $\sum_{y \in B \setminus A}w(x,y) \leq rw(x,z)$, where $r$ is the number of nodes in $B\setminus A$.

 Summing up over all $x \in A$ and all $y \in B \setminus A$, we have $\sum_{x\in A, y \in B\setminus A}w(x,y) \leq 2r \times w(M(m,B)).$ Notice that for a given edge $(a,b) \in M(m,B)$, (at most) $2r$ edges in $A \times B\setminus A$ can be charged to this edge, which happens when both $a$ and $b$ belong to $A$. In summary, we have
$$\sum_{\substack{x \in A\\ y \in B\setminus A}}w(x,y) \leq 2r w(M(m,B)).$$

\textbf{Second Part:} Next, consider $B \setminus A$: let $M^*(B\setminus A)$ be the optimum matching using only the nodes in $B \setminus A$, and let $M(B \setminus A)$ be the (smallest) set of edges of $M$ containing the nodes in $B \setminus A$. Observe that the edges in $M(B \setminus A)$ do not belong to $M(m,B)$ by definition.

Since we chose the edges in $M$ in a greedy fashion, it is not hard to see that for every edge $e=(x,y)$ of $M^*(B\setminus A)$, either the edge $e$ is also in $M(B \setminus A)$, or one of the two edges of $M(B \setminus A)$ containing $x$ and $y$ must have higher weight than $e$. If this were not true, then $e$ would have been chosen for $M$, since it would be preferred by both $x$ and $y$. In particular, this means that the edge $e_{\max}$ which has the highest weight of all edges in $M(B \setminus A)$, must have weight at least that of any edge in $M^*(B\setminus A)$. Moreover, every edge of $M(m,B)$ has higher weight than $e_{\max}$, since otherwise $e_{\max}$ would have been chosen for $M(m,B)$ instead of the edges in it. Since $|M^*(B\setminus A)|=\frac{r}{2}$ and $|M(m,B)|=m\geq r$, we know that $M(m,B)$ has at least twice as many edges as $M^*(B\setminus A)$.
In conclusion, we have that $w(M^*(B\setminus A)) \leq \frac{1}{2}w(M(m,B))$. Finally, since the weight of a max-weight matching is at least that of a random matching, we know that $\sum_{x,y \in B\setminus A}w(x,y) \leq r \times w(M^*(B\setminus A))$, so we get

$$\sum_{x,y \in B\setminus A}w(x,y) \leq rw(M^*(B\setminus A)) \leq \frac{r}{2} w(M(m,B)).$$

Adding the upper bounds for the two parts, we get the lemma.
\end{proof}

\begin{lemma}
\label{clm_bicrit3}
Suppose that $M$ is a greedy matching, and suppose that $B$ and $C$ are two disjoint sets such that $B \subseteq N(M)$, and $C \cap N(M) = \emptyset$. Then the following is an upper bound for the edges going from $B$ to $C$

$$\sum_{x \in B, y \in C}w(x,y) \leq 2|C|w(M(m,B)),$$

where $|B|=2m$.
\end{lemma}

\begin{proof}
Suppose that $M(B)$ is the (minimal) set of edges in $M$ containing every node in $B$. Clearly, $M(B)$ contains at most $|B|=2m$ edges and encompasses $M(m,B)$.

Fix some $x \in B$: for every $y \in C$, we have that $w(x,y) \leq w(x,z)$, where the latter is the edge in $M$ containing $x$. This is because otherwise the edge $(x,z)$ would not have been undominated, and thus would not have been added to $M$. In this manner, we can charge $\sum_{x \in B, y \in C}w(x,y)$ to the matching $M(B)$, using at most $2|C|$ slots of each edge, and a total of $2m|C|$ slots. This means that we can use a slot transfer argument and transfer all the slots to the edges in $M(m,B)$, using at most $2|C|$ slots of each edge. Therefore, we conclude that
$$\sum_{x \in B, y \in C}w(x,y) \leq 2|C|w(M(m,B)).$$
\end{proof}

This bounds edges going from the intersecting part to the disjoint part.

\begin{lemma}
Suppose that $M$ is a greedy matching, and suppose that $B$ and $C$ are two disjoint sets such that $B \subseteq N(M)$, and $C \cap N(M) = \emptyset$. Then the following is an upper bound for the edges contained in $C$.

\label{clm_bicrit4}
$$\sum_{x ,y \in C}w(x,y) \leq \frac{(|C|^2)}{|M| - m}w(M \setminus M(m,B)),$$

where $|B|=2m$.
\end{lemma}

\begin{proof}
Suppose that $M^*(C)$ is the optimum matching containing only the edges in $C$. Since the weight of a max-weight matching is at least that of a random matching, we know that $\sum_{x ,y \in C}w(x,y) \leq |C|w(M^*(C))$.

Now, since the nodes in $C$ are not present in $M$, we know that for any edge $(x,y)$ in $M$ and $(a,b)\in M^*(C)$, it must be that $w(x,y)\geq w(x,a)$ and $w(x,y)\geq w(x,b)$. Applying the triangle inequality, we get that $w(a,b)\leq 2w(x,y)$. Thus, for every edge in $M$, its weight is at least half the weight of any edge in $M^*(C)$. Next, consider the set of $|M|-m$ edges in $M \setminus M(m,B)$: by the above argument we get that $w(M^*(C)) \leq \frac{|C|}{|M|-m}w(M \setminus M(m,B))$. Therefore, in conclusion, we get
$$\sum_{x ,y \in C}w(x,y) \leq |C|w(M^*(C)) \leq w(M\setminus M(m,B))\frac{|C|^2}{|M|-m}.$$
\end{proof}

Finally, we establish a lower bound on the distances of the edges inside $N(M)$ once again in terms of the top-intersecting matching.

\begin{lemma}
\label{clm_bicrit5}
Suppose that $M$ is a greedy matching. Given any $B \subseteq N(M)$ with $B=2m$, let $Top=N(M(m,B))$ and $A=B \cap Top$. Then the following is a piecewise lower bound for the edges inside $N(M)$.

\begin{enumerate}
\item $\sum_{x,y \in A}w(x,y) \geq \sum_{x,y \in A}w(x,y)$ (Trivial Lower Bound) \\

\item $\sum_{x \in Top, y\in Top\setminus A}w(x,y) + \sum_{x \in Top, y \in N(M)\setminus Top}w(x,y) \geq (2|M| - 2m + \frac{r}{2})w(M(m,B))$ \\

\item $\sum_{x,y \in N(M)\setminus Top}w(x,y) \geq (|M| -m)w(M \setminus M(m,B))$,
\end{enumerate}

where $r = |B \setminus A|$.
\end{lemma}

Note that when we say $x \in Top$ and $y \in Top\setminus A$, we are counting each edge only once. This is true for the other summations as well.

\begin{proof}
Recall that $Top$ refers to the set $N(M(m,B))$. Remember that $A$ is a subset of $Top$, and $B\setminus A$ has no node in common with $Top$.

The quantity $\sum_{x,y \in N(M)}w(x,y)$ includes (at least) `edges going from ($A$ to $A$), ($Top$ to $Top\setminus A$), ($Top$ to $N(M)\setminus Top$), and ($N(M)\setminus Top$ to $N(M)\setminus Top$).

\noindent\textbf{(Part 2.1):}

$$\sum_{\substack{x \in Top \\ y \in Top \setminus A}}w(x,y) \geq \frac{r}{2}w(M(m,B)).$$ Note that $|Top| = 2m$, and $|A| = (2m-r)$. Now, pick any edge $(x,y) \in M(m,B)$, and the $r$ nodes $z\in Top \setminus A$ and apply the triangle inequality. We get, $\sum_{z \in Top\setminus A}w(x,z) + w(y,z) \geq rw(x,y)$. Summing this up over all $(x,y) \in M(m,B)$, and all $z \in Top \setminus A$, and dividing by two (since we count some edges twice), we get the first statement.\\

\textbf{(Part 2.2)}

$$\sum_{\substack{x \in Top \\ y \in N(M)\setminus Top}}w(x,y) \geq (2|M| - 2m)w(M(m,B)).$$ This follows almost directly from taking each edge in $M(m,B)$, and every node in $N(M) \setminus B$, and applying the triangle equality.

Summing up Parts 2.1 and 2.2, gives us the second statement of our lemma.

\textbf{(Part 3)}

$$\sum_{x,y \in N(M)\setminus B}w(x,y) \geq (|M| -m)w(M \setminus M(m,B)).$$ This comes from applying Lemma~\ref{lem_matchingupper} to the set $N(M)\setminus B$.

\end{proof}

\subsubsection*{Applying the Generic Claims}

Now that we have some general properties of the top-intersecting matching, we shed light on how to apply the above framework towards our main theorem. Recall that our algorithm involves choosing a greedy matching $M$ of size $\frac{\beta k}{2}$, and using its endpoints to form the set $S$ of size $\beta k$. Let $S^*$ be the optimum solution to the Densest Subgraph problem for input parameter $k$. Define $B$ to be the set of nodes that are common to both $S$ and $S^*$, and let $|B|=2m$.

Now as per our definitions $M(m,B)$ is the set of $m$ highest weight edges in $M$ all of which contain at least one node from $B$, i.e., nodes from $S^*$. Given this framework, we can express the weight of our optimum solution $w(S^*)$ as $\sum_{x,y \in A}w(x,y) + \sum_{x\in B,y \in B\setminus A}w(x,y) + \sum_{x\in B,y\in S^*\setminus B}w(x,y) + \sum_{x,y \in S^* \setminus B}w(x,y).$ We take $A$ to be the subset of $B$ contained in $M(m,B)$, and $r=|B\setminus A|$.

Now, we are in a position to bound $w(S^*)$ in terms of $w(M(m,B))$, and $w(M\setminus M(m,B))$. Summing up Lemmas~\ref{clm_bicrit2},~\ref{clm_bicrit3},~\ref{clm_bicrit4} (take $C=S^*\setminus B$) , we get,

\begin{align*}
w(S^*) &  \leq \sum_{x,y \in A}w(x,y) & + [\frac{5r}{2} + 2(k-2m)]w(M(m,B)) \\
&& + \frac{2(k-2m)^2}{\beta k - 2m}w(M\setminus M(m,B)).
\end{align*}

Next, applying Lemma~\ref{clm_bicrit5}, we transform the above upper bound to a bound on $w(S)$. Once again, in order to avoid lengthy notation, we use $Top$ to refer to $N(M(m,B))$.

\begin{align*}
w(S^*) &\leq  \sum_{x,y \in A}w(x,y) \\
& + \frac{2(k-2m) + \frac{5r}{2}}{\beta k- 2m + \frac{r}{2}}[\sum_{\substack{x \in Top \\ y \in Top \setminus A}}w(x,y) +\sum_{\substack{x \in Top \\ y \in N(M)\setminus Top}}w(x,y)]\\ & + \frac{4(k-2m)^2}{(\beta k -2m)^2}\sum_{x,y \in N(M)\setminus Top}w(x,y) \\ & \leq \max\left(1,\frac{2(k-2m) + \frac{5r}{2}}{\beta k-2m + \frac{r}{2}}, \frac{4(k-2m)^2}{(\beta k - 2m)^2}\right)w(S)
\end{align*}

The rest of the proof is somewhat algebraic, so we only sketch the details here. Look at the second term inside the maximization function above. For a fixed $\beta$, and a fixed value of $m$, we can show using some simple calculus that the quantity $\frac{2(k-2m) + \frac{5r}{2}}{\beta k-2m + \frac{r}{2}}$ monotonically increases with $r$, and therefore is maximized when $r$ attains its maximum value. Moreover, we know that $r \leq m$, and $r \leq \beta k - 2m$, since $S$ only contains $\beta k$ nodes and $2m$ of them are inside $N(M(m,B))$. Now, consider $1 \leq \beta \leq \frac{6}{5}$, and take $r = \beta k - 2m$, we get
$$w(S^*) \leq \max\left(\frac{4k + 5\beta k - 18m}{3\beta k - 6m}, \frac{4(k-2m)^2}{(\beta k -2m)^2}\right)w(S)$$

Moreover, for a fixed $\beta$, both of the terms above reach their maximum value when $m$ is at its smallest ($m=0$) giving us $w(S^*) \leq \max\left(\frac{4 + 5\beta}{3\beta}, \frac{4}{\beta^2}\right)w(S)$ and so when $\beta \leq \frac{6}{5}$, we get $w(S^*) \leq \frac{4}{\beta^2}w(S)$.

Next, take $\beta \geq \frac{6}{5}$, and $r = m$ in our general bound for $w(S^*)$. We get,

$$w(S^*) \leq \max\left(\frac{2k-\frac{3m}{2}}{\beta k - \frac{3m}{2}}, \frac{4(k-2m)^2}{(\beta k -2m)^2}\right)w(S)$$

This time, for a fixed $\frac{6}{5} \leq \beta \leq 2$, the first term reaches its maximum value when $m$ is largest $(B=S^* \text{ or } 2m=k)$, and the second when $m$ is at its smallest $(m=0)$, giving us $w(S^*) \leq \max\left(\frac{2 - \frac{3}{4}}{\beta - \frac{3}{4}}, \frac{4}{\beta^2}\right)w(S)$ and so when $\beta \geq \frac{6}{5}$, we get $w(S^*) \leq \frac{4}{\beta^2}w(S)$.

So in both cases, we get our bound of $w(S^*) \leq \frac{4}{\beta^2} w(S)$, which completes the proof of the theorem.
\end{proof}

\section{Appendix: Proofs from Section \ref{sec:TSP}: 1.88 Approximation for Max TSP}
Before defining our randomized algorithm, we first present the following lemma, which gives a relationship between matching and Hamiltonian paths.

\begin{lemma}
\label{lem_tourcompletion}
Given any matching $M$ with $k$ edges, there exists an efficient ordinal algorithm that computes a Hamiltonian path $Q$ containing $M$ such that the weight of the Hamiltonian path in expectation is at least
$$[\frac{3}{2} - \frac{1}{k}]w(M).$$
\end{lemma}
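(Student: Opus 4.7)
}

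The plan is to build the Hamiltonian path by a completely oblivious randomized procedure that requires no preference information at all (hence trivially ordinal). First, pick a uniformly random permutation $\sigma$ of the $k$ edges of $M$. Independently, for each edge $e\in M$ with endpoints $\{a,b\}$, flip an independent fair coin to designate one endpoint as the ``head'' of $e$ and the other as its ``tail.'' Then form the Hamiltonian path $Q$ by listing the edges in the order $e_{\sigma(1)},e_{\sigma(2)},\ldots,e_{\sigma(k)}$ and, for each $i=1,\ldots,k-1$, adding a single ``connector'' edge $c_i$ that joins the head of $e_{\sigma(i)}$ to the tail of $e_{\sigma(i+1)}$. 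The result is a Hamiltonian path on the $2k$ endpoints of $M$ that contains $M$, with $w(Q)=w(M)+\sum_{i=1}^{k-1} w(c_i)$.

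The key ingredient is a simple triangle inequality lemma: for any two matching edges $e=(a_1,b_1)$ and $e'=(a_2,b_2)$, the four cross edges have total weight at least $w(e)+w(e')$. Indeed, by the triangle inequality through $a_2$ and through $b_2$, $w(e)\le w(a_1,a_2)+w(a_2,b_1)$ and $w(e)\le w(a_1,b_2)+w(b_2,b_1)$; summing gives $2w(e)\le w(a_1,a_2)+w(a_1,b_2)+w(b_1,a_2)+w(b_1,b_2)$. The symmetric argument for $e'$ yields $2w(e')\le$ the same sum. Averaging the two estimates shows $w(e)+w(e')$ is at most the sum of the four cross edge weights, so the average cross edge weighs at least $(w(e)+w(e'))/4$.

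Now I will compute the expected weight of $Q$ by linearity. Condition on the permutation $\sigma$. The connector $c_i$ is determined entirely by the orientations of $e_{\sigma(i)}$ and $e_{\sigma(i+1)}$, which are independent fair coin flips, so $c_i$ is uniformly distributed over the four cross edges between $e_{\sigma(i)}$ and $e_{\sigma(i+1)}$. Consequently
\[
E\bigl[w(c_i)\,\big|\,\sigma\bigr]\ \ge\ \tfrac{1}{4}\bigl(w(e_{\sigma(i)})+w(e_{\sigma(i+1)})\bigr).
\]
Taking expectation over $\sigma$, and using the fact that $\sigma(i)$ is a uniformly random index in $\{1,\ldots,k\}$ so $E_\sigma[w(e_{\sigma(i)})]=w(M)/k$, we get $E[w(c_i)]\ge w(M)/(2k)$ for every $i$. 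Summing over the $k-1$ connectors and adding $w(M)$ gives
\[
E[w(Q)]\ \ge\ w(M)+\tfrac{k-1}{2k}\,w(M)\ =\ \Bigl(\tfrac{3}{2}-\tfrac{1}{2k}\Bigr)w(M)\ \ge\ \Bigl(\tfrac{3}{2}-\tfrac{1}{k}\Bigr)w(M),
\]
as required. There is no real obstacle in this argument; the only thing to be careful about is that the orientation of a single edge $e_{\sigma(i)}$ governs \emph{two} connectors $c_{i-1}$ and $c_i$ simultaneously, so they are not independent. Linearity of expectation sidesteps this issue, since each $E[w(c_i)]$ depends only on the pair of orientations at its two adjacent edges, and those two coins are genuinely independent.
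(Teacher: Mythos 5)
Your proof is correct and achieves the stronger bound $\bigl(\tfrac{3}{2}-\tfrac{1}{2k}\bigr)w(M)$, which implies the stated one. However, it takes a genuinely different route from the paper. The paper's algorithm actually \emph{uses} ordinal information: it lays down the edges of $M$ in an arbitrary order, keeps a ``dangling'' endpoint of the current path, and connects it to whichever endpoint of the next matching edge that dangling node \emph{prefers}; the triangle inequality then guarantees deterministically that each connector has weight at least half the weight of the next edge, and the only loss comes from the first edge $e_1$ (which receives no boost), so they randomize the starting node to make $E[w(e_1)] = w(M)/k$. Your construction instead is completely oblivious — a uniformly random permutation of the edges together with an independent uniform ``head/tail'' orientation of each edge — and replaces the pointwise triangle-inequality bound with a slightly different one: the four cross edges between two matching edges $e,e'$ have total weight at least $w(e)+w(e')$, so a uniformly random cross edge has expected weight at least $\tfrac{1}{4}(w(e)+w(e'))$. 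You are also right to observe that neighboring connectors share a coin and are not independent, and that linearity of expectation makes this irrelevant. What your approach buys is that the path-completion step requires no preference queries at all (so in particular it is trivially truthful and strategy-proof), and the analysis is cleaner because every edge of $M$ contributes symmetrically rather than singling out $e_1$; what the paper's approach buys is a deterministic per-connector guarantee given the starting node, with randomness confined to one choice, which can be conceptually convenient when the lemma is composed with other subroutines.
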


\begin{proof}
We first provide the algorithm, followed by its analysis. Suppose that $K$ is the set of nodes contained in $M$.
\begin{enumerate}
\item Select a node $i \in K$ uniformly at random. Suppose that $e(i)$ is the edge in $M$ containing $i$.
\item Initialize $Q=M$.
\item Order the edges in $M$ arbitrarily into $(e_1, e_2, \ldots, e_k)$ with the constraint that $e_1=e(i)$.
\item For $j=2$ to $k$,
\item Let $x$ be a node in $e_{j-1}$ having degree one in $Q$ (if $j=2$ choose $x \neq i$) and $e_j=(y,z)$.
\item If $y >_x z$, add $(x,y)$ to $Q$, else add $(x,z)$ to $Q$.
\end{enumerate}

Suppose that $Q(j)$ consists of the set of nodes in $Q$ for a given value of $j$ (at the end of that iteration of the algorithm). We claim that $w(Q) \geq \frac{3}{2}w(M) - w(e_1)$, which we prove using the following inductive hypothesis,
$$w(Q(j)) \geq w(M) + \sum_{r=2}^j \frac{1}{2}w(e_r)$$

Consider the base case when $j=2$. Suppose that $e_1=(i,a)$, and $e_2=(x,y)$. Without loss of generality, suppose that $a$ prefers $x$ to $y$, then in that iteration, we add $(a,x)$ to $Q$. By the triangle inequality, we also know that $w(x,y) \leq w(x,a) + w(y,a) \leq 2w(x,a)$. Therefore, at the end of that iteration, we have
\begin{equation}
\label{eqn_temptspbb}
w(Q) = w(M) + w(x,a) \geq w(M) + \frac{1}{2}w(e_2).
\end{equation}

The inductive step follows similarly. For some value of $j$, let $x$ be the degree one node in $e_{j-1}$, and $e_j=(y,z)$. Suppose that $x$ prefers $y$ to $z$, then using the same argument as above, we know that $(x,y)$ is added to our desired set and that $w(x,y) \geq \frac{1}{2}w(y,z)$. The claim follows in an almost similar fashion to Equation~\ref{eqn_temptspbb} and the inductive hypothesis.

In conclusion, the total weight of the path is $\frac{3}{2}w(M) - w(e_1)$. Since the first node $i$ is chosen uniformly at random, every edge in $M$ has an equal probability ($p=\frac{1}{k}$) of being $e_1$. So, in expectation, the weight of the tour is $\frac{3}{2}w(M) - \frac{1}{k}w(M)$, which completes the lemma.\hfill$\qed$
\end{proof}

Our randomized algorithm involves returning two tours computed by two different sub-routines with equal probability. We define some pertinent notation before describing the two sub-routines. Suppose that $M$ is the solution returned by the Greedy Matching Procedure for $k=\frac{1}{3}N$, i.e., $M$ contains $\frac{2}{3}$ times the number of edges in any perfect matching. We know from Lemma 2.2 in \cite{anshelevichS16} that $w(M) \geq \frac{w(M^*)}{2}$, where $M^*$ is the optimum perfect matching in $\mathcal{N}$; in fact this is not difficult to see directly by classic charging arguments comparing greedy matchings to maximum-weight matchings.

Let $Top$ be the set of nodes whose edges form $M$, and let $B = \mathcal{N} \setminus Top$. Finally, given any Hamiltonian path $H$, we use $H(f)$ and $H(l)$ to denote the dangling nodes of $H$, i.e., the two endpoints of $H$ whose degrees are one. Before showing the algorithm, we give a simple lemma that provides a `nice way' to form a tour using two Hamiltonian Paths.

\begin{lemma}
\label{lem_pathtotour}
Let $H_1$ and $H_2$ be two Hamiltonian paths on two different sets of nodes. Then, we can form a tour $T$ by connecting the two paths such that $w(T) \geq w(H_1) + w(H_2) + w(H_1(f),H_1(l))$ without knowing the edge weights.
\end{lemma}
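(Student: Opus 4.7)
The plan is to exploit the fact that there are exactly two natural ways to close up two vertex-disjoint Hamiltonian paths into a single Hamiltonian cycle, and that a coin flip between them suffices (in expectation) to recover the claimed bound without consulting any edge weights. Let $a,b$ denote the two endpoints of $H_1$ (so $a=H_1(f),b=H_1(l)$) and let $c,d$ denote the two endpoints of $H_2$. To form $T$ we must add exactly two new edges, one incident to $a$ and one incident to $b$, matching $\{a,b\}$ to $\{c,d\}$. The two possible tours are
$$T_A = H_1 \cup H_2 \cup \{(a,c),(b,d)\}, \qquad T_B = H_1 \cup H_2 \cup \{(a,d),(b,c)\}.$$

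First I would observe that since the algorithm has no access to the underlying cardinal weights (and even the ordinal information need not rank $c$ versus $d$ from $a$'s perspective in a way that is usable here), the natural ordinal-oblivious rule is to pick one of $T_A$ or $T_B$ uniformly at random. Then
$$E[w(T)] \;=\; w(H_1)+w(H_2)+\tfrac{1}{2}\bigl(w(a,c)+w(b,d)+w(a,d)+w(b,c)\bigr).$$

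Next I would apply the triangle inequality to the two triples $\{a,c,b\}$ and $\{a,d,b\}$, obtaining $w(a,c)+w(c,b)\geq w(a,b)$ and $w(a,d)+w(d,b)\geq w(a,b)$. Adding these gives $w(a,c)+w(b,d)+w(a,d)+w(b,c)\geq 2w(a,b)$, so plugging back in yields
$$E[w(T)] \;\geq\; w(H_1)+w(H_2)+w(a,b) \;=\; w(H_1)+w(H_2)+w(H_1(f),H_1(l)),$$
which is exactly the claimed bound (interpreted in expectation, which is how the lemma is used inside the randomized Max TSP algorithms of Section~\ref{sec:TSP}). Equivalently, since the average of the two tour weights is at least the target, at least one of $T_A,T_B$ achieves the bound deterministically; the randomization is merely the ordinal way of \emph{accessing} that tour without being told which of the two it is.

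There is no genuine obstacle here: the only subtle point is recognizing that, although a worst-case pairing of $\{a,b\}$ to $\{c,d\}$ can fall short of $w(a,b)$, the two possible pairings are complementary in the sense that their triangle-inequality slacks sum to at least $2w(a,b)$. This is what lets us get away with an oblivious (randomized) choice, preserving the universal truthfulness of the overall TSP mechanism since the selection between $T_A$ and $T_B$ is made without querying any agent preference.
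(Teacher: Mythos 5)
Your argument is correct, and the averaging inequality $w(a,c)+w(b,c)+w(a,d)+w(b,d)\geq 2w(a,b)$ (two applications of the triangle inequality through $c$ and through $d$, respectively) does yield the stated bound in expectation. However, you take a genuinely different route from the paper, and there is a subtle point worth noting about what each version actually delivers.

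The paper's proof is deterministic and \emph{does} use the ordinal information: it asks a single preference query, namely whether $a=H_1(f)$ prefers $x=H_2(f)$ to $y=H_2(l)$. If so it forms $T=H_1\cup H_2\cup\{(a,x),(b,y)\}$, and the chain $w(a,b)\leq w(a,y)+w(b,y)\leq w(a,x)+w(b,y)$ gives $w(T)\geq w(H_1)+w(H_2)+w(a,b)$ outright, with no randomness. This matches the lemma's literal phrasing ``we can form a tour $T$ such that $w(T)\geq\cdots$'' as a pointwise guarantee, using only ordinal comparisons (``without knowing the edge weights''). Your version flips a coin between the two pairings and obtains the bound only in expectation; you correctly observe via averaging that one of $T_A,T_B$ must satisfy the bound, but your oblivious procedure does not identify which one, whereas the paper's single ordinal comparison does. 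Since the lemma is consumed inside expectation computations (Lemmas~\ref{lem_tsppart1_body} and~\ref{lem_tsplast_body}), your expectation-only version is still sufficient for the downstream analysis, so there is no real damage --- but it is a weaker statement than what the paper proves, and the paper's route is arguably the ``intended'' one given the phrasing. One more small point: your closing remark about preserving universal truthfulness is moot, since this lemma sits in the non-truthful Section~\ref{sec:TSP}, and the paper's proof freely queries a preference without any truthfulness concern.
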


\begin{proof}
For ease of notation, we refer to the $4$ endpoints of the two paths in the following manner, $H_1(f) = a, H_1(l) = b, H_2(f) = x, H_2(l) = y$. Without loss of generality, suppose that $a$ prefers $x$ to $y$. Then, let $T=H_1 \cup H_2  \cup \left\lbrace (a,x) , (b,y) \right\rbrace$. In this case, we have that $w(a,b) \leq w(a,y) + w(b,y) \leq w(a,x) + w(b,y).$ Therefore, we get $w(T) = w(H_1) + w(H_2) + w(a,x) + w(b,y) \geq w(H_1) + w(H_2) + w(a,b).$
\end{proof}

\begin{algorithm}[hbtp]
\SetKwInOut{Input}{input}\SetKwInOut{Output}{output}
\Output{Tour $T_1$}
Let M be a greedy matching of size $k=\frac{N}{3}$, and $B$ be the nodes not in $M$\;
Complete $M$ using Lemma~\ref{lem_tourcompletion} to form a Hamiltonian path $H_T$ on $Top$\;
Form a Hamiltonian path $H_B$ on $B$ using the following randomized algorithm.\;
\textbf{Randomized Path Algorithm} \;
Form a random permutation on the nodes in B\;
Join the nodes in the same order to form the path\;
(i.e., join the first and second nodes, second and third, and so on.)\;
\textbf{Final Output}
$T_1$ is the output formed by using Lemma~\ref{lem_pathtotour} for $H_1 = H_B$ and $H_2 = H_T$.
\caption{First Subroutine of the randomized algorithm for Max TSP}
\label{alg_subroutine1}
\end{algorithm}

We now show lower bounds on the weight of $T_1$. Below $T^*$ denotes the optimum-weight tour.
\begin{lemma}
\label{lem_tsppart1}
The following is a lower bound on the weight of the tour returned by Algorithm~\ref{alg_subroutine1}
$$E[w(T_1)] \geq [\frac{3}{8} - \frac{3}{4N}]w(T^*) + \frac{6}{N}\sum_{x,y \in B}w(x,y) .$$
\end{lemma}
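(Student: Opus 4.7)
The plan is to combine three ingredients: the tour-stitching bound from Lemma~\ref{lem_pathtotour_body}, the path-completion bound from Lemma~\ref{lem_tourcompletion_body}, and a symmetry computation for the randomly permuted path on $B$.

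First, I would invoke Lemma~\ref{lem_pathtotour_body} with $H_1 = H_B$ and $H_2 = H_T$ to obtain the pointwise inequality $w(T_1) \geq w(H_B) + w(H_T) + w(H_B(f), H_B(l))$. Taking expectations and discarding the nonnegative endpoint-gap term, it suffices to lower-bound $E[w(H_T)]$ and $E[w(H_B)]$ separately.

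Next, for $E[w(H_T)]$, I would apply Lemma~\ref{lem_tourcompletion_body} to the matching $M$ of $k = N/3$ edges, which gives $E[w(H_T)] \geq (3/2 - 3/N)\,w(M)$. The remaining estimate needed is $w(M) \geq w(T^*)/4$. I plan to obtain it in two steps: (i) the optimum Hamiltonian cycle $T^*$ edge-decomposes (for even $N$) into two matchings, so the maximum perfect matching $M^*$ satisfies $w(M^*) \geq w(T^*)/2$; (ii) invoke the bound $w(M) \geq w(M^*)/2$ quoted from Lemma~2.2 of \cite{anshelevichS16} in the paragraph immediately preceding the statement. Chaining these gives $E[w(H_T)] \geq (3/8 - 3/(4N))\,w(T^*)$, which reproduces exactly the first term of the claim.

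Finally, for $E[w(H_B)]$, I would exploit the randomness of the uniform permutation on the $|B| = N/3$ nodes. A standard symmetry argument shows that each unordered pair $\{x,y\} \subset B$ is adjacent in the random permutation with probability $\frac{2}{|B|} = \frac{6}{N}$ (there are $|B|-1$ consecutive slots and $|B|(|B|-1)$ equally likely ordered placements of the pair), so by linearity of expectation $E[w(H_B)] = \frac{6}{N}\sum_{x,y\in B} w(x,y)$, which is exactly the second term. Adding the two bounds and using nonnegativity of $w(H_B(f),H_B(l))$ closes out the inequality.

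The main obstacle, though minor, is verifying step (i) of the chain $w(M) \geq w(T^*)/4$ when $N$ is not divisible by $2$ (so that $T^*$ does not split cleanly into two perfect matchings) or when $N/3$ is non-integral; the standard fix is to drop a single edge of $T^*$, losing only an $O(1/N)$ fraction that is absorbed by the $3/(4N)$ slack already present in the statement. Beyond that, the argument is a clean accounting exercise in which all constants line up on the first try.
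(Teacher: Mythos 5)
Your proof is correct and takes essentially the same route as the paper's. The first term is handled identically in both: apply Lemma~\ref{lem_tourcompletion_body} to $M$ with $k=N/3$, then chain $w(M)\geq w(M^*)/2\geq w(T^*)/4$. For the second term the paper does something marginally different: rather than discarding the nonnegative stitching contribution, it keeps the Lemma~\ref{lem_pathtotour_body} guarantee $w(T_1\setminus(H_T\cup H_B))\geq w(H_B(f),H_B(l))$, folds that endpoint edge into $H_B$ to form a uniformly random tour $T_B$ on $B$, and evaluates $E[w(T_B)]=\frac{2}{|B|-1}\sum_{x,y\in B}w(x,y)\geq\frac{6}{N}\sum_{x,y\in B}w(x,y)$; you drop the endpoint term and compute $E[w(H_B)]=\frac{2}{|B|}\sum_{x,y\in B}w(x,y)=\frac{6}{N}\sum_{x,y\in B}w(x,y)$ directly over the path. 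The two tallies yield exactly the same stated bound, so the simplification is harmless and the argument closes just as the paper's does.
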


\begin{proof}
Using linearity of expectations, we get that,
$$E[w(T_1)] = E[w(H_T)] + E[w(H_B) + w(T_1 \setminus (H_T \cup H_B))].$$ For any given Hamiltonian path (using only the nodes in $B$) $H_B$, suppose that $T_B$ denotes the tour obtained by completing the dangling nodes to form a cycle. From Lemma~\ref{lem_pathtotour}, we get that $w(T_1 \setminus (H_T \cup H_B))] \geq w(H_B(f), w(H_B(l)).$ In other words, even though the sub-routine outputs a random tour, the weight of the edges in $T$ but not $H_T$ and $H_B$ is at least the weight of the edge between the two dangling nodes of $H_B$. This is true because we completed the tour in a very particular fashion using Lemma~\ref{lem_pathtotour}. Abusing notation, we get
$$E[w(T_1)] = E[w(H_T)] + E[w(T_B)],$$ where $E[w(T_B)]$ is the expected weight of any tour formed using the nodes only in $B$. However, using symmetry arguments, we can conclude that $E[w(T_B)] \geq \frac{|T_B|}{|B|(|B|-1)/2}\sum_{x,y \in B}w(x,y) \geq \frac{6}{N}\sum_{x,y \in B}w(x,y).$

Next, applying Lemma~\ref{lem_tourcompletion}, we get that $E[w(H_T)] = [\frac{3}{2} - \frac{3}{N}]w(M) \geq [\frac{3}{4} - \frac{3}{2N}]w(M^*) \geq [\frac{3}{8} - \frac{3}{4N}]w(T^*).$ We used the facts that $w(M) \geq \frac{w(M^*)}{2}$, and $w(M^*) \geq \frac{w(T^*)}{2}$.

Summing up the two parts gives the desired result.\hfill$\qed$
\end{proof}

Before describing the second sub-routine, we introduce the notion of an alternating-tour. Given two equal-sized disjoint sets $A,B$, we say that $T_{AB}$ is a alternating tour if it is a tour, and it alternates between nodes in $A$ and $B$. We can similarly define the notion of an alternating Hamiltonian path or just alternating path. Notice that an alternating path (or even a tour) can be represented as a sequence of alternating nodes from $A$ and $B$ respectively. In the following algorithm, we form an alternating path by adding nodes sequentially to $H_{AB}$. Finally, recall that $M$ is a greedy matching of size $k=\frac{N}{3}$, and $B$ is set of $\frac{N}{3}$ nodes not in $M$.

\begin{algorithm}[hbtp]
\SetKwInOut{Input}{input}\SetKwInOut{Output}{output}
\Output{Tour $T_2$}
Let M be a greedy matching of size $k=\frac{N}{3}$, and $B$ be the nodes not in $M$\;
Select $\frac{N}{6}$ edges uniformly at random from $M$\;
Complete these edges using Lemma~\ref{lem_tourcompletion} to form a Hamiltonian path $H_T$ with $\frac{N}{3}$ nodes\;
Let $A$ be the set of nodes in $Top$ but not in $H_T$\;
\textbf{Randomized Alternating Path Algorithm}\;
Initialize $H_{AB} = \emptyset$\;
Select one node uniformly at random from $A$\;
Select one node uniformly at random from $B$\;
Add both the nodes to $H_{AB}$ in the same order\;
Remove them from $A$ and $B$ respectively \;
Repeat the above process until $A=B=\emptyset$\;
\textbf{Final Output}\;
$T_2$ is the output formed by using Lemma~\ref{lem_pathtotour} for $H_1 = H_{AB}$ and $H_2 = H_T$.
\caption{Second Subroutine of the randomized algorithm for Max TSP}
\label{alg_subroutine2}
\end{algorithm}

\noindent\textbf{Analysis of Sub-routine $2$}\\
We begin by defining some additional notation required for the analysis of Algorithm~\ref{alg_subroutine2}. Suppose that $\mathcal{T}_2$ represents the set of all tours that are output by the algorithm with non-zero probability (support of $T_2$). Notice that for any $T' \in \mathcal{T}_2$, we can uniquely divide $T'$ into $H'_T$, $H'_{AB}$, and the connecting edges, where $H'_T$ is the path containing only the edges inside $Top \setminus A$, and $H'_{AB}$ is the alternating path between $A$ and $B$. Therefore, we have

$$E[w(T_2)] = E[w(H_T)] + E[w(H_{AB}) + w(T_2 \setminus (H_T \cup H_{AB}))].$$

Next, consider the two edges connecting $H_T$ and $H_{AB}$, i.e., $T_2 \setminus (H_T \cup H_{AB})$. Since these edges were chosen using the mechanism of Lemma~\ref{lem_pathtotour}, this means that $w(T_2 \setminus (H_T \cup H_{AB})) \geq w(H_{AB}(f), H_{AB}(l))$, for a given alternating path $H_{AB}$. And so, abusing notation, we get

$$E[w(T_2)] \geq E[w(H_T)] + E[w(T_{AB})],$$

where $E[w(T_{AB})]$ is the expected weight of any alternating tour formed using the nodes in $A,B$. Moreover, this tour is specifically formed by the sequential random mechanism described in Algorithm~\ref{alg_subroutine2} to choose a Hamiltonian path, followed by a deterministic step where the path is completed to form a cycle. In what follows, we establish a lower bound on this quantity.

\begin{lemma}
\label{lem_tsppart21}
The expected weight of the random alternating tour $E[w(T_{AB})]$ formed using the nodes in $A,B$, is at least $\frac{3}{N}\sum_{x \in Top, y \in B}w(x,y)$.
\end{lemma}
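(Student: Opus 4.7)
The plan is to compute $E[w(T_{AB})]$ by linearity of expectation over all candidate edges $(a,b)\in Top\times B$, using two independent symmetry arguments to evaluate $\Pr[(a,b)\in T_{AB}]$ and obtain the bound as an equality (which trivially implies the stated $\geq$).

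First I would pin down the relevant cardinalities. The greedy matching $M$ has $N/3$ edges, and Algorithm~\ref{alg_subroutine2_body} selects a uniformly random subset of $N/6$ of these to form the path $H_T$. Consequently $|A|=2(N/3-N/6)=N/3=|B|$, and every alternating Hamiltonian cycle on $A\cup B$ has exactly $2|A|=2N/3$ edges, each of which lies in $A\times B$. So the only edges that can appear in $T_{AB}$ are edges between $A$ and $B$.

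Next I would split the randomness of $T_{AB}$ into two independent stages and compute, for each pair $(a,b)\in Top\times B$, the probability $\Pr[(a,b)\in T_{AB}]$. Stage (i): because the $N/6$ picked edges of $M$ form a uniformly random half of $M$'s $N/3$ edges, each node $x\in Top$ lies in $A$ with probability exactly $1/2$. Stage (ii): conditional on any realization of $A$, the path $H_{AB}$ is built from independent uniform permutations $(a_1,\ldots,a_n)$ of $A$ and $(b_1,\ldots,b_n)$ of $B$ (where $n=N/3$), and then closed into a cycle by appending the edge $(a_1,b_n)$. If $a$ occupies position $i$ of the $A$-permutation, its two cyclic neighbors in $T_{AB}$ are $b_{i-1}$ and $b_i$ (with $b_0:=b_n$). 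By the independence and uniformity of the $B$-permutation, $(b_{i-1},b_i)$ is a uniformly random ordered pair of distinct elements of $B$, so for any fixed $b\in B$,
$$\Pr[(a,b)\in T_{AB}\mid a\in A]=\frac{2}{|B|}=\frac{6}{N}.$$

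Multiplying (i) and (ii), $\Pr[(a,b)\in T_{AB}]=\tfrac12\cdot\tfrac{6}{N}=\tfrac{3}{N}$ for every $(a,b)\in Top\times B$, and linearity of expectation gives
$$E[w(T_{AB})]=\sum_{a\in Top,\,b\in B}w(a,b)\cdot\tfrac{3}{N}=\tfrac{3}{N}\sum_{x\in Top,\,y\in B}w(x,y),$$
which is exactly the claimed bound (in fact as an equality). The only place requiring care is stage (ii): one must verify that the deterministic closure of $H_{AB}$ really adds the wrap-around edge $(a_1,b_n)$, so that every $a\in A$ has exactly two cyclic $B$-neighbors distributed as a uniform ordered pair. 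Once that bookkeeping is checked, the proof goes through without any inequalities at all.
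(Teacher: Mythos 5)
Your proof is correct and follows essentially the same approach as the paper: linearity of expectation over pairs in $Top\times B$, the symmetry of the random alternating cycle, and the fact that each node of $Top$ lands in $A$ with probability $\tfrac12$. The only difference is how the conditional per-edge probability $\tfrac{6}{N}$ is evaluated — you inspect the two cyclic $B$-neighbors of a fixed $a\in A$ and note they form a uniformly random ordered pair, while the paper divides the expected number $\tfrac{2N}{3}$ of tour edges by the $\tfrac{N^2}{9}$ candidate $(S,B)$-pairs after invoking edge-symmetry — but both routes yield the identical equality.
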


\begin{proof}
First, notice that
\begin{equation}
\label{eqn_tsptemp2}
E[w(T_{AB})] = \sum_{\substack{S \subset Top \\ |S|=\frac{N}{3}}}E[w(T_{AB})| A=S] Pr (A=S).
\end{equation}
The above equation follows from the fact that for every possible tour returned by the random algorithm (say $T_{AB} = T'_{AB}$, and $A=S$), we have that $Pr(T_{AB} = T'_{AB}) = Pr(T_{AB} = T'_{AB} | A=S) Pr(A=S)$.

Now, fix $A=S$: we have that
$$E[w(T_{AB})| A=S] = \sum_{x \in S, y \in B}w(x,y) Pr((x,y) \in T_{AB}).$$
By symmetry of our random process, we know that every edge $(x,y)$ between $S$ and $B$ has the same probability of being included into the tour. Consider the sum of such probabilities:

\begin{eqnarray*}
\sum_{x \in S, y \in B} Pr((x,y) \in T_{AB}) =\\
 \sum_{x \in S, y \in B}\sum_{tour~ T\ni(x,y)}Pr(T_{AB}=T) = \\
 \sum_{tour~ T} \frac{2N}{3}Pr(T_{AB}=T) = \frac{2N}{3}
\end{eqnarray*}

The sum above is over all tours $T$ of nodes $S\cup B$. Since there are $\frac{N}{3}$ nodes in both $S$ and $B$, then there are $\frac{N^2}{9}$ edges between them, and thus the probability of any edge $(x,y)$ from $S$ to $B$ being in the tour $T_{AB}$ is exactly $\frac{6}{N}$. Thus, we have that:

%
%
%
%

\begin{align*}
E[w(T_{AB})| A=S] & = 
 \frac{6}{N}\sum_{\substack{x \in S \\ y \in B}}w(x,y).
\end{align*}
The rest of the proof follows from rearranging Equation~\ref{eqn_tsptemp2} as a summation across the nodes in $Top$, and observing that every node from $Top$ is chosen into $A$ with probability one-half.\hfill$\qed$
\end{proof}

Our next lemma completes the lower bound for the second sub-routine.

\begin{lemma}\label{lem_tsplast}
The expected weight of the tour returned by Algorithm~\ref{alg_subroutine2} is at least $[\frac{11}{16} - \frac{3}{4N}]w(T^*)- \frac{6}{N}\sum_{x,y \in B}w(x,y)$.
\end{lemma}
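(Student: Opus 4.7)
\textbf{Proof plan for Lemma~\ref{lem_tsplast}.} The strategy is to combine three ingredients: a lower bound on $E[w(H_T)]$ obtained via Lemma~\ref{lem_tourcompletion} applied to a random half of the greedy matching $M$; the lower bound on $E[w(T_{AB})]$ from Lemma~\ref{lem_tsppart21}; and an upper bound on $w(T^*)$ derived from Lemma~\ref{lem_matchingupper} that converts crossing $Top$-to-$B$ weight into tour value modulo a correction involving $\sum_{x,y\in B}w(x,y)$. These three ingredients assemble cleanly because the algorithm's decomposition $E[w(T_2)]\geq E[w(H_T)]+E[w(T_{AB})]$ is already established just before the lemma.

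First I would compute $E[w(H_T)]$. The subroutine selects $N/6$ of the $N/3$ edges of $M$ uniformly at random, so by symmetry each edge of $M$ lies in the chosen sub-matching with probability $1/2$; consequently the expected weight of the selected sub-matching equals $\tfrac{1}{2}w(M)$. Applying Lemma~\ref{lem_tourcompletion} with $k=N/6$, and using linearity of expectation,
\[
E[w(H_T)] \;\geq\; \Bigl(\tfrac{3}{2}-\tfrac{6}{N}\Bigr)\cdot \tfrac{1}{2}w(M) \;=\; \Bigl(\tfrac{3}{4}-\tfrac{3}{N}\Bigr)w(M).
\]
Chaining $w(M)\geq \tfrac{1}{2}w(M^*) \geq \tfrac{1}{4}w(T^*)$ (the greedy matching is half-optimal, and a Hamiltonian tour decomposes into two edge-disjoint perfect matchings, whence $w(T^*)\leq 2w(M^*)$) yields $E[w(H_T)] \geq \bigl(\tfrac{3}{16}-\tfrac{3}{4N}\bigr)w(T^*)$.

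Next I would combine this with Lemma~\ref{lem_tsppart21} to get
\[
E[w(T_2)] \;\geq\; \Bigl(\tfrac{3}{16}-\tfrac{3}{4N}\Bigr)w(T^*) \;+\; \tfrac{3}{N}\!\!\sum_{x\in Top,\,y\in B}\!\!w(x,y).
\]
The remaining task is to show that the cross-term is at least $\tfrac{1}{2}w(T^*)-\tfrac{6}{N}\sum_{x,y\in B}w(x,y)$, after which the exponents $\tfrac{3}{16}+\tfrac{1}{2}=\tfrac{11}{16}$ line up exactly. To produce this cross-term bound I would apply Lemma~\ref{lem_matchingupper} to $M^*$ with the target set $T=B$ (recall $|B|=N/3$):
\[
w(M^*) \;\leq\; \tfrac{2}{|B|}\!\!\sum_{x,y\in B}\!\!w(x,y) \;+\; \tfrac{1}{|B|}\!\!\sum_{x\in B,\,y\in Top}\!\!w(x,y) \;=\; \tfrac{6}{N}\!\!\sum_{x,y\in B}\!\!w(x,y)+\tfrac{3}{N}\!\!\sum_{x\in Top,\,y\in B}\!\!w(x,y).
\]
Multiplying by two and using $w(T^*)\leq 2w(M^*)$ rearranges exactly to $\tfrac{3}{N}\sum_{x\in Top,y\in B}w(x,y) \geq \tfrac{1}{2}w(T^*)-\tfrac{6}{N}\sum_{x,y\in B}w(x,y)$, which upon substitution gives the claimed bound.

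The steps are all essentially bookkeeping once the right decomposition is chosen; the only mild subtlety is the inequality $w(T^*)\leq 2w(M^*)$, which is immediate when $N$ is even since a Hamiltonian cycle partitions into two perfect matchings of size $N/2$, and will already be implicit in the paper's divisibility assumptions (e.g., $N$ divisible by $3$, and throughout by $2$). The cleanest way to present the proof is therefore a three-line derivation of $E[w(H_T)]$, a one-line invocation of Lemma~\ref{lem_tsppart21}, and a short application of Lemma~\ref{lem_matchingupper} to convert the $Top$-to-$B$ weight into a lower bound matching the $\tfrac{11}{16}$ target.
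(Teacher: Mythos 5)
Your proposal is correct and follows essentially the same route as the paper's proof: the same decomposition $E[w(T_2)]\geq E[w(H_T)]+E[w(T_{AB})]$, the same application of Lemma~\ref{lem_tourcompletion} (with $k=N/6$, yielding $E[w(H_T)]\geq(\tfrac{3}{4}-\tfrac{3}{N})w(M)\geq(\tfrac{3}{16}-\tfrac{3}{4N})w(T^*)$ via $w(M)\geq\tfrac12 w(M^*)\geq\tfrac14 w(T^*)$), and the same application of Lemma~\ref{lem_matchingupper} with $T=B$ to lower-bound the crossing term. Your write-up is a bit more explicit than the paper about the expectation over the random half-matching (each edge of $M$ kept with probability $1/2$, so the chosen sub-matching has expected weight $\tfrac12 w(M)$), which is a welcome clarification, but the argument is otherwise identical.
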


\begin{proof}

Substituting the bound obtained in Lemma~\ref{lem_tsppart21} into the Equation for $E[w(T_2)]$, we get,

$$E[w(T_2)] = E(w(H_T)] + \frac{3}{N}\sum_{x \in Top, y\in B}w(x,y).$$

The rest of the proof for obtaining a lower bound on $E[w(T_2)]$ is as follows. 
Suppose that $M^*$ is the maximum weight matching on $\mathcal{N}$. By applying Lemma \ref{lem_matchingupper} to the set $T=B$ (so $n=N/3$), we know that $w(M^*) \leq \frac{6}{N}\sum_{x \in B, y \in B}w(x,y) + \frac{3}{N}\sum_{x \in Top, y \in B}w(x,y)$. Using the fact that $w(T^*) \leq 2w(M^*)$, we get $\frac{3}{N}\sum_{x \in Top, y \in B}w(x,y) \geq \frac{w(T^*)}{2} - \frac{6}{N}\sum_{x \in B, y \in B}w(x,y).$

Note that $H_T$ is obtained by randomly taking half of the matching $M$ and then completing it. From Lemma~\ref{lem_tourcompletion}, we thus have that $E[w(H_T)] \geq [\frac{3}{4} - \frac{3}{N}]w(M)$. Since by our construction, $w(M)$ is at least half of $w(M^*)$, which is at least half of $w(T^*)$, we thus have that $E[w(H_T)] \geq [\frac{3}{16} - \frac{3}{4N}]w(T^*).$


In summary, we have

$$E[w(T_2)] = (\frac{3}{16} + \frac{1}{2} - \frac{3}{4N})w(T^*) - \frac{6}{N}\sum_{x,y \in B}w(x,y).\hfill\qed$$
\end{proof}

We also know from Lemma~\ref{lem_tsppart1} that

$$E[w(T_1)] = (\frac{3}{8} - \frac{3}{4N})w(T^*) + \frac{6}{N}\sum_{x,y \in B}w(x,y).$$

The final bound is obtained by using $E[w(T)] = \frac{1}{2}(E(w(T_1)] + E[w(T_2)])$.

\section{Lower Bounds}
\subsubsection*{Densest $k$-Subgraph}
\begin{claim}
No ordinal approximation algorithm, deterministic or randomized, can provide an approximation factor better than $2$ for Densest $k$-subgraph.
\end{claim}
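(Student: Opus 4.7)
My plan is to exhibit a family of $m$ metric instances sharing a common ordinal preference profile yet with structurally different optima, and then use an averaging argument across the family to drive the worst-case approximation ratio of any ordinal algorithm to $2$ as $m\to\infty$. I partition $\mathcal{N}$ into $m$ disjoint groups $G_1,\ldots,G_m$ of size $k$ each (so $n=mk$) and, for each $j$, define instance $I_j$ by setting $w(u,v)=2$ when $u,v\in G_j$, $w(u,v)=1+\epsilon$ when $u,v$ share some other group $G_i$ with $i\neq j$, and $w(u,v)=1$ whenever $u,v$ lie in different groups. A one-line check confirms the triangle inequality. The crucial feature is that in every $I_j$ each node strictly prefers its $k-1$ in-group neighbors (weight $\geq 1+\epsilon$) over its $(m-1)k$ out-of-group neighbors (weight $1$), and the relative ordering of any two nodes depends only on whether they share a group. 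Consequently all $m$ instances induce the same preference profile, so any ordinal algorithm---deterministic or randomized---must commit to a single output distribution $D$ over $k$-subsets used on every $I_j$, while $\mathrm{OPT}(I_j) = w(G_j) = 2\binom{k}{2}$.

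Next I would expand $w_j(S)$ for a $k$-subset $S$ containing $a_i$ nodes from $G_i$. Using the identity $\sum_i \binom{a_i}{2} + \sum_{i<i'} a_i a_{i'} = \binom{k}{2}$, a direct computation collapses the three weight classes into
\[
w_j(S) = (1-\epsilon)\binom{a_j}{2} + \binom{k}{2} + \epsilon \sum_i \binom{a_i}{2}.
\]
Summing over $j$ telescopes the first term into $\sum_i \binom{a_i}{2}$; bounding this by $\binom{k}{2}$ and choosing $\epsilon = o(1/m)$ yields $\sum_{j=1}^m w_j(S) \leq (m+1)\binom{k}{2}\bigl(1+o(1)\bigr)$. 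Taking expectations over $S\sim D$ and using $\min_j \leq \text{avg}_j$, this gives $\min_j \mathrm{ALG}(I_j) \leq \tfrac{m+1}{m}\binom{k}{2}(1+o(1))$, and hence $\max_j \mathrm{OPT}(I_j)/\mathrm{ALG}(I_j) \geq \tfrac{2m}{m+1}(1-o(1))\to 2$.

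The step I expect to be the main obstacle is ensuring the preference profile is \emph{strictly} identical across the family: within each $G_i$ the weights in $I_j$ are all equal, so the in-group ranking is a tie that an algorithm could in principle exploit via its own tie-breaking rule. I would handle this by adding an infinitesimal, $j$-independent perturbation $\delta\, f(u,v)$ (with $f$ a fixed priority function on unordered pairs) to every weight; choosing $\delta \ll \epsilon$ preserves both the triangle inequality and the clean separation between in-group and cross weights, while promoting every node's ranking to a common strict ordering used uniformly across all $I_j$.
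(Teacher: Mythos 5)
Your proposal is correct and is essentially the same construction as the paper's: $m$ clusters of size $k$, weight $2$ inside one distinguished cluster and weight $\approx 1$ elsewhere, yielding a family of metric instances that are ordinally indistinguishable, followed by an averaging argument to drive the ratio to $\tfrac{2m}{m+1}\to 2$. The only difference is cosmetic: the paper relies on the adversary breaking the in-group/out-group ties consistently (weak preferences are permitted by the model), whereas you introduce the $\epsilon$ and $\delta$ perturbations to make the common preference profile strict, which is a valid but unnecessary refinement.
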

\begin{proof}
Since randomized algorithms are more general than deterministic algorithms, it suffices to show the claim just for Randomized Algorithms.

Given a parameter $k$, consider an instance of the Densest $k$-Subgraph with $Mk$ nodes for a large enough value of $M$ (say $M$ is much larger than $k$). The set of nodes in the graph can be divided into $M$ clusters $N_1, N_2, \ldots, N_M$, each containing $k$ nodes. The preference ordering is given as follows: for a given $i$, every node in $N_i$ prefers all the nodes in $N_i$ over every node outside of $N_i$. The exact preference ordering within $N_i$ and outside of $N_i$ can be arbitrary.

Now, randomly choose one of $M$ clusters and assign a weight of $2$ to all the edges strictly inside that cluster. Assign a weight of $1$ to every other edge in the graph. It is easy to see that these weights induce the given preference orderings. Now, without loss of generality, it suffices to consider only algorithms that choose $k$ nodes within a fixed cluster. Moreover, since the clusters are identical from the ordinal point of view, the optimum algorithm for this instance just picks one of the $M$ clusters uniformly at random, and therefore, its approximation ratio is $\frac{2}{1+\frac{1}{M}}$ which approaches $2$ as $M \to \infty$
\end{proof}

\end{document}